\tikzset{
  graphnode/.style={draw,circle,fill=SkyBlue,draw=black}
}
\theoremstyle{plain}
\newtheorem{theorem}{Theorem}
\newtheorem{proposition}[theorem]{Proposition}
\newtheorem{lemma}[theorem]{Lemma}
\newtheorem{corollary}[theorem]{Corollary}
\theoremstyle{definition}
\newtheorem{definition}[theorem]{Definition}
\theoremstyle{remark}
\newcommand{\R}{\mathbb{R}}
\newcommand{\C}{\mathbb{C}}
\renewcommand{\d}{\mathrm{d}}
\newcommand{\e}{\mathrm{e}}
\renewcommand{\i}{\mathrm{i}}
\newcommand{\eps}{\varepsilon}
\renewcommand{\H}{\mathcal{H}}
\newcommand{\be}{\begin{equation}}
\newcommand{\ee}{\end{equation}}
\newcommand{\Tr}{\mathop\mathrm{Tr}}
\newcommand{\ch}{\mathrm{ch}}
\begin{document}
\title{
    Density-Functional Theory on Graphs
}

\author{Markus Penz}
\address{Department of Mathematics, University of Innsbruck, Austria}
\email{m.penz@inter.at}

\author{Robert van Leeuwen}
\address{Department of Physics, Nanoscience Center, University of Jyv\"askyl\"a, Finland}

\begin{abstract}
    The principles of density-functional theory are studied for finite lattice systems represented by graphs. Surprisingly, the fundamental Hohenberg--Kohn theorem is found void in general, while many insights into the topological structure of the density-potential mapping can be won. We give precise conditions for a ground state to be uniquely $v$-representable and are able to prove that this property holds for almost all densities. A set of examples illustrates the theory and demonstrates the non-convexity of the pure-state constrained-search functional.
\end{abstract}
\maketitle

\vspace{-2em}
{\small
\tableofcontents
}

\section{Introduction}
\label{sec:main:intro}

In the theoretical development of density-functional theory (DFT) following the pioneering paper of \citet{Hohenberg-Kohn1964}, the works of \citet{Levy79} and \citet{Lieb1983} marked cornerstones on which practically all the following investigations built. DFT has since been extended to many different settings, including lattice systems \cite{katriel1981mapping,englisch2,CCR1985,schonhammer1995density,xianlong2006,giesbertz2019-1RDM} that in practice mostly appear in the form of the Hubbard model \cite{carrascal2015hubbard,lima2003,ijas2010lattice,saubanere2014lattice}. 
The solidity of the theoretical investigations primarily established for continuum systems has led many to believe that the same statements automatically hold for fermionic lattice systems.
First and foremost, this concerns the Hohenberg--Kohn (HK) theorem \cite{Hohenberg-Kohn1964} that states the existence of a unique mapping from ground-state densities back to the external potentials included in the Hamiltonian and that has only recently found a rigorous underpinning for continuum systems \cite{Garrigue2018}.
This work is aimed at scrutinizing the foundations of DFT by studying finite, discrete systems, in their most general form represented by graphs. Such systems also find their special relevance in that they are currently the only ones that allow for an assuredly convergent formulation of the Kohn--Sham method \cite{penz2019guaranteed,penz2020erratum}. In the course of this work, we reveal some surprising and crucial differences to the continuum theory: While we find clear counterexamples to the HK theorem, a tremendous amount of insight into the structure of the (multi-valued) density-potential mapping can be gained. It turns out that the finite-dimensional setting is all but trivial and displays a pronounced mathematical richness.

Several misconceptions about the state of the HK theorem for lattice systems can be found in the research literature, usually due to unjustified claims about non-vanishing components of the ground-state wave function, simply assumed to hold ``for all practical purposes'', e.g., by \citet{coe2015uniqueness}. That such vanishing components in the wave function are indeed possible and do not have to be rare is demonstrated here by explicit examples.
Other works simply assume or state the validity of 
the lattice HK theorem \cite{dimitrov2016exact,xianlong2006}, where the system under study is a one-dimensional Hubbard chain. Incidentally, the linear chain is the only many-particle system for which we can actually prove a full HK theorem.

A central concept introduced and used in this work is that of \emph{unique $v$-representability} (uv) of ground-state densities, a notion that merges $v$-representability with an assumed validity of the HK theorem for the given density. While $v$-representability alone was already thoroughly studied on lattices by \citet{CCR1985} and received a positive answer for ensemble densities that are neither 0 nor 1 on any vertex, the question if such a density also comes from a unique potential (modulo constants) was left unanswered: ``The HK theorem for fermions at zero temperature remains an open problem.'' \cite{CCR1985} We will here present an adaption of their ensemble-$v$-representability proof for the graph setting and a partial answer to the uniqueness problem.


The main tool of our analysis is a matrix theorem by \citet{Odlyzko} that allows us to define classes of eigenstates that are uniquely $v$-representable. Now the question of uniqueness is transformed into determining which systems have ground states that fall into one of those classes.
While for one-particle graph systems and many particles on a linear chain uniqueness can be guaranteed on the basis of the Perron--Frobenius theorem from linear algebra, just an ``almost all'' statement was found to hold for more general systems. This means that the set of problematic, non-uv densities is small when compared to the set of all densities and that a random sampling should usually yield a uv density. Complementary to this result, Rellich's theorem about the analyticity of eigenvalues and eigenstates under perturbations is used to give an argument that almost all potentials lead to uv ground-state densities. Yet, no general statement about the validity of the HK theorem could be found for finite lattice systems.

If one moves beyond the requirement of ``uniqueness'', a great topological variety can still be analyzed. We will give a detailed characterization of the \emph{density-potential mapping}, the (multi-valued) map from the set of possible ground-state densities to the space of external potentials. Now, $v$-representability of all densities in the domain means that the mapping is ``well-defined'', i.e., a potential can be assigned to every density. If we even have \emph{unique} $v$-representability (uv) then there is only a unique way how to define the mapping (modulo constants). 
This then gives the density region where the Hohenberg--Kohn theorem is indeed valid.
Due to possible degeneracy of the ground state the mapping might still be many-to-one.

The outline of the paper is as follows: Section~\ref{sec:main:fermionic-HS} introduces the Hilbert space structure for fermions on graphs, a setting that is equivalent to finite lattice systems. Then in Section~\ref{sec:main:HK-violation} simple counterexamples show that the HK theorem will in general be violated for such systems, but we also give a detailed analysis of the situation based on Odlyzko's theorem and try to save as much as possible from the HK statement that still holds for \emph{almost all} densities. This analysis is then continued in Section~\ref{sec:main:Rellich} by invoking Rellich's theorem that allows us to shed some light on the topology of the density-potential mapping and includes the rectification of a result by \citet{KohnPRL} on the openness of the set of uv densities that come from non-degenerate states. Another powerful theorem, this time the Perron--Frobenius theorem from linear algebra employed in Section~\ref{sec:main:PF}, shows that non-interacting systems will always have purely positive ground-state densities and that the full HK theorem is valid for a linear chain. A large part on constrained-search functionals, Section~\ref{sec:main:CS-v-rep}, reflects on this other important ingredient of DFT that establishes an immensely valuable connection to convex analysis. Three examples in this section help to illustrate the results from the previous sections and also scrutinize and amend a proof by \citet{Lieb1983} about the non-convexity of the pure-state constrained-search functional: For the triangle graph we map out the complete density-potential mapping and derive the explicit form of the (convex) constrained-search functional, the complete graph demonstrates that Lieb's argument cannot hold in general, while the beautiful example of the cuboctahedron graph yields the desired counterexample to pure-state $v$-representability and indeed shows that the pure-state constrained-search functional is non-convex in general. Despite the large amount of results it seems these investigations into DFT on graphs merely opened up a Pandora's box and we collect a number of open problems and further ideas in the concluding Section~\ref{sec:main:conclusions}.

\section{Fermionic Hilbert space structures on graphs}
\label{sec:main:fermionic-HS}

\subsection{General definitions}

We start by defining the general setting of our approach.  
This consists of a quantum system defined on a finite-dimensional Hilbert space $\H=\mathbb{C}^L$ with the standard Hermitian inner product (linear in the second component like usually in physics literature). For now, we make no assumptions on the nature of the quantum system,
neither on the number or statistics of the particles involved, nor on the type of interactions or external potentials. We only assume for now that there are no time-dependent external fields present such that all properties of the system are fully described by the time-independent Schr\"odinger
equation
\be
H \Psi = E \Psi,
\ee
where $H: \H \to \H$ is a linear, self-adjoint operator representing the Hamiltonian and $\Psi \in \H$ is consequently the eigenvector to eigenvalue $E$. When we make a choice for an orthonormal basis the  
Hamiltonian can be represented by a Hermitian matrix with coefficients $H_{ij} = H_{ji}^* \in \mathbb{C}$.
For example, for a fermionic Hubbard system we can choose the basis functions to be the many-particle Slater determinants built out of 
any preferred set of orthonormal one-particle states.

At this point, we introduce a central concept of our approach, the finite graph $G(H)$ associated with a Hamiltonian $H$.
By graph we here always mean the following:

\begin{definition}
A graph $G$ consists of a vertex set  $X=\{ 1, \ldots,M \}$ and an irreflexive, symmetric adjacency relation $\sim$
defined on $X \times X$. 
\end{definition}

This means that for any pair of vertices $(i,j) \in X \times X$ we can say whether they are adjacent, written $i \sim j$, or non-adjacent, written $i \not\sim j$. Irreflexivity means that no vertex is adjacent to itself, i.e., $i \not\sim i$, and symmetry means that if $i \sim j$ then also $j \sim i$.
A graph can naturally be displayed graphically by drawing the elements of the vertex set on a plane and connecting two vertices $i$ and $j$ with a line whenever $i \sim j$.

\begin{definition}
A graph is called connected when for every two vertices $i$ and $j$ there exists a sequence of adjacent vertices starting with $i$ and terminating with $j$. Otherwise the graph is called disconnected.
\end{definition}

In other words, a graph is connected when there exists a path between any pair of vertices. 
In the next step we will associate a graph to a given Hermitian matrix, which in our case will always be the Hamiltonian matrix of our quantum system.

\begin{definition}\label{def:H-graph}
To any $M \times M$ Hermitian matrix $H$, i.e., the coefficients fulfil $H_{ij} = H_{ji}^*\in \mathbb{C}$, $i,j \in X=\{ 1, \ldots,M \}$, we associate a graph $G(H)$ by 
taking $X$ to be the vertex set and by defining the adjacency relation $i \sim j$ whenever $H_{ij} \neq 0$ for $i \neq j$, and $i \not \sim j$ otherwise.
\end{definition}

If $G(H)$ is disconnected then the Hamilton matrix blocks into submatrices corresponding to connected subgraphs. 
Some results in this work require the connectedness of the graph and we will explicitly indicate when this is the case. Note that by Definition~\ref{def:H-graph} a Hubbard model with more than nearest neighbor hopping is represented by a graph having additional edges. In other words, hopping is only admitted along edges of the graph.

We use the following convention for enumerating indices throughout this work: $i,j \in X = \{1,\ldots,M\}$ for counting one-particle states that correspond to vertices in the original graph $G(h)$, $k,l \in \{1,\ldots,N\}$ for counting particles, and $I,J$ are multi-indices like introduced in the next section. A sum over any of those indices will always mean summation over the full index set if not otherwise stated.

\subsection{Fermionic Hilbert space}

So far, our discussion was very general, as we made no assumptions on the nature of the system. From now on our goal is to describe a system of $N$ fermions.
We start by considering the Hamiltonian $h$ of a single particle defined on a one-particle Hilbert space $\H_1 = \C^M$ of dimension $M$. 
To this system corresponds a Hamiltonian graph $G(h)$ given by Definition~\ref{def:H-graph} in which the vertices $i$ label the one-particle states.
In particular, we can consider the $M$ orthonormal vectors
\be
e_i = (0, \ldots,1, \ldots,0) \in \C^M = \H_1,
\ee
where the vector contains a $1$ on position $i$ and zeroes elsewhere. These vectors physically describe quantum states in which the particle is located at vertex $i$ with certainty.
The physical nature of the states $e_i$ themselves depends on the initial basis choice that defines our Hamiltonian matrix and will be generally left completely open. However, for the purpose of illustration, we can always take the vertices as representing positions in space, in which case the vector $e_i$ is a quantum state that has the particle at position $i$ with certainty.
The graph $G(h)$ of the Hamiltonian then has a spatial representation in which a particle can hop from position $i$ to $j$ if $i \sim j$ on the graph.

In the next step, we put $N$ spinless fermions ($N\leq M$) on the graph $G(h)$ and allow for interactions between them.
The Hilbert space of the many-particle system is then given by the anti-symmetric $N$-fold tensor product $\H_N = \Lambda^N \H_1$, which is the linear span of the $N$-fold wedge products
\be
e_I = e_{i_1} \wedge \ldots \wedge e_{i_N}
\label{e_I}
\ee
that furnish an orthonormal basis of $\H_N$. Here we use the ordered multi-index $I = (i_1, \ldots,i_N) \in X^N$ with $i_1 < \ldots < i_N$ to label the many-particle basis states of $\H_N$.
There are $L= \binom{ M }{ N }$ basis states of the form $\eqref{e_I}$ and the fermionic Hilbert space $\H_N$ is therefore $L$-dimensional.
Since in the case $N=M$ the resulting fermionic Hilbert space is one-dimensional and thus trivial, we will generally assume $N<M$ in what follows.
Physically, the $e_I$ are quantum states in which there is a particle with certainty at vertices $i\in I$ and zero particles at all the other vertices.
For vectors $v_k\in\H_1$
the wedge product itself is defined as the anti-symmetrized tensor product
\be
v_1 \wedge \ldots \wedge v_N = \frac{1}{\sqrt{N!}}\sum_{\sigma} (-1)^{|\sigma|} 
v_{\sigma(1)} \otimes \ldots \otimes v_{\sigma(N)},
\ee
where the sum is over all permutations of $N$ symbols and $|\sigma|$ is the sign of the permutation $\sigma$.
The inner product in $\H_N$ is given by
\be
\langle v_1 \wedge \ldots \wedge v_N, w_1 \wedge \ldots \wedge w_N \rangle 
= \det (\langle v_k, w_l \rangle )_{kl},
\ee
from which it follows that the basis vectors $e_I$ of $\H_N$ are orthogonal and normalized to 1 again.
Since in quantum mechanics a probability interpretation is assigned only to such normalized states, we define the unit sphere in $\H_N$
\begin{equation}
    \mathcal{I}_N = \{ \Psi \in \H_N \mid \|\Psi\|=1 \}
\end{equation}
as the basic set of possible configurations.
The corresponding set of density matrices (the set of all positive, semi-definite, Hermitian operators on $\H_N$ of trace one that represent ensemble states) is denoted as $\mathcal{D}_N$.

To a multi-particle Hamiltonian $H$ corresponds a new graph $G(H)$ which we will refer to as the \emph{fermionic graph}. The labels of the vertices in this graph will consequently be indexed by the same multi-index $I$ as the many-particle basis $\{e_I\}_I$. The topology of $G(H)$ is by Definition~\ref{def:H-graph} determined by the Hamiltonian $H$ for the $N$ fermions for which in general we will take
\be\label{eq:def-H}
H = h_1+ \ldots + h_N + W,
\ee
where
\be
h_k = 1 \otimes \ldots \otimes 1\otimes  h \otimes 1 \otimes \ldots \otimes 1
\ee
with $h$ appearing on position $k$ in the $N$-fold tensor product. The term $W$ is a general interaction that we do not further specify at the moment. Yet, its most important features must be that it is self-adjoint and leaves the Hamiltonian invariant under any permutation of the $N$ particles to ensure the fermionic nature of the particles.
Often, it will be convenient to use second quantization to represent the Hamiltonian and other operators, which among other things will guarantee that the system has the right permutational symmetry. For this reason, for $v,w_k \in \H_1$ we define 
\begin{align}
\hat{a}_v^\dagger (w_1 \wedge \ldots \wedge w_N) &= v \wedge  w_1 \wedge \ldots \wedge w_N \quad\quad\text{and}\\
\hat{a}_v (w_1 \wedge \ldots \wedge w_N) &= \sum_{k=1}^N (-1)^{k+1} \langle v ,w_k \rangle \\
&\quad w_1 \wedge \ldots \wedge w_{k-1} \wedge w_{k+1} \wedge \ldots \wedge w_N. \nonumber
\end{align}
These creation and annihilation operators are, as the notation suggests, each others adjoints and satisfy the following anti-commutation relations
\be
[ \hat{a}_v, \hat{a}_w ]_+ = [ \hat{a}^\dagger_v, \hat{a}^\dagger_w ]_+ = 0 \;\; \text{and} \;\; [ \hat{a}_v, \hat{a}^\dagger_w ]_+ = \langle v, w \rangle.
\ee
In the special case that $v=e_i$ we write $\hat{a}_v=\hat{a}_i$ and $ \hat{a}^\dagger_v=\hat{a}_i^\dagger$, thus $ [ \hat{a}_i, \hat{a}^\dagger_j ]_+=\delta_{ij}$.
In terms of these operators, Hamiltonian \eqref{eq:def-H} is then written in a compact form as
\be\label{eq:many-particle-H}
H = \sum_{i,j} h_{ij} \hat{a}^\dagger_i \hat{a}_j + W.
\ee
The interaction term $W$ was not specified, but a very relevant case is that of a two-body interaction of the form
\be
W= \frac{1}{2} \sum_{i,j} w_{ij} \, \hat{a}^\dagger_i \hat{a}^\dagger_j  \hat{a}_j  \hat{a}_i
\ee
with $w_{ij}=w_{ji} \in \R$
which is diagonal in the $\{e_I\}_I$ basis. This interaction has a similar form as the familiar Coulomb interaction in continuum systems, which is also diagonal in position basis. 
Form \eqref{eq:many-particle-H} without an external potential $v$ will later be denoted as the internal part $H_0$ that typically contains the kinetic energy and the interaction $W$ (usually of the form of a two-body interaction like above). An external potential $v : X \rightarrow \R$ (equivalent to $v\in\R^M$) alone acting on the many-particle wave function becomes
\be 
V = \sum_i v_i \hat{a}^\dagger_i \hat{a}_i.
\ee
When particular attention is paid to the dependency of $H$ on the external potential $v$, we denote it as $H=H(v)=H_0+V$. A lemma about the connectedness of the fermionic graph concludes this section.

\begin{lemma}\label{lem:ferm-graph-connected}
Let $H$ be like in \eqref{eq:many-particle-H}, with $W$ being diagonal in the $\{e_I\}_I$ basis and $G(h)$ connected, then $G(H)$ is connected as well.
\end{lemma}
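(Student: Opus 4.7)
First I would observe that, since $W$ is diagonal in the $\{e_I\}_I$ basis, it contributes nothing to the off-diagonal entries of $H$, so only the one-body part $\sum_{ij}h_{ij}\hat{a}^\dagger_i\hat{a}_j$ governs adjacency in $G(H)$. A direct evaluation of $\langle e_I,\hat{a}^\dagger_i\hat{a}_j e_J\rangle$ for $I\neq J$ shows that the matrix element is nonzero precisely when $j\in J$, $i\notin J$, and $I=(J\setminus\{j\})\cup\{i\}$. Moreover, distinct admissible pairs $(i,j)$ send $e_J$ to mutually orthogonal basis vectors of $\H_N$, so no cancellation between hopping terms is possible. Hence the edge relation in $G(H)$ admits a purely combinatorial description: $I\sim J$ iff the symmetric difference $I\triangle J$ consists of exactly one pair $\{i\}=I\setminus J$, $\{j\}=J\setminus I$ with $i\sim j$ in $G(h)$.

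This reduces the lemma to the purely combinatorial claim that, whenever $G(h)$ is connected and $N<M$, the ``token-sliding'' graph on $\binom{X}{N}$ just described is connected. My plan is to prove this by induction on $M$. For the inductive step, pass to a spanning tree $T$ of $G(h)$ (its edges lie in $G(h)$, so connectivity of the token graph over $T$ suffices) and pick a leaf $v$ of $T$ with unique neighbour $u$. Given two $N$-subsets $I,J$, split according to the occupation of $v$. When $v$ lies in both or in neither of $I,J$, the inductive hypothesis applied to $T-v$ with $N-1$ or $N$ tokens connects the configurations without ever disturbing $v$; note that in the ``neither'' case with $N=M-1$ both configurations must equal $V(T)\setminus\{v\}$, so the statement is trivial there.

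The main obstacle is the mixed case, say $v\in I$ and $v\notin J$: the token at $v$ must eventually slide to $u$, but $u$ may already be occupied in $I$. The key observation is that $N<M$ forces at least one empty vertex in $T-v$, so the inductive hypothesis on $T-v$ (with $N-1$ tokens) lets us reshuffle inside $T-v$ to vacate $u$ while leaving $v$ untouched. After the single slide $v\to u$, the configuration agrees with $J$ on $v$, and one is back in the ``matching on $v$'' case already handled. The strict inequality $N<M$ enforced in Section~\ref{sec:main:fermionic-HS} is what makes this rearrangement possible; without a ``hole'' to move tokens around, the token graph could in fact fail to be connected even for connected $G(h)$.
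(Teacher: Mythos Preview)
Your proof is correct and follows essentially the same approach as the paper: both reduce the adjacency in $G(H)$ to a token-sliding problem on $G(h)$ and then pass to a spanning tree to show connectivity. The paper's combinatorial step is phrased more loosely (``create the desired configuration inside the outermost branches and then move inward on the tree inductively''), whereas your leaf-deletion induction on $M$ makes the same idea precise and carefully handles the vacating-of-$u$ step and the edge case $N=M-1$.
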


\begin{proof}
We start by taking multi-indices $I=(i_1,\ldots,i_k,\ldots,\allowbreak i_N),I'=(i_1,\ldots,i'_k,\ldots,i_N)$ that just differ in one single entry. Then
\be\label{eq:H-matrix-el}
    \langle e_I,H e_{I'} \rangle = \sum_{i,j} h_{ij} \langle \hat{a}_i e_I, \hat{a}_j e_{I'} \rangle = h_{i_k i'_k},
\ee
so $I \sim I'$ on the fermionic graph if $i_k \sim i'_k$. Proving that $G(H)$ is connected can then be equivalently reformulated as a game-theoretical problem: Put $N$ pawns on the vertices $i\in I$ of $G(h)$, then each turn move one pawn along an edge where no vertex can be doubly occupied. Is it possible to reach an arbitrary, different configuration $J=(j_1,\ldots,j_N)$ this way? In showing this, we first reduce $G(h)$ to a spanning tree \cite[Prop.~1.5.6]{diestel-graph-theory-book} with less edges.
Then we conduct the following algorithm: By cutting a single edge, arbitrarily split the tree graph into two subgraphs, $G_1,G_2$, that are also always trees. If both subgraphs incidentally contain the correct number of pawns in order to fill the desired positions within them, repeat the algorithm for both of them. Else, let $G_1$ be the one that contains too many pawns that are instead missing in $G_2$. If necessary, make space in the treelike $G_2$ by moving pawns towards the leaves, which is always possible since this graph contains too little pawns. Then move the desired number of pawns from $G_1$ to $G_2$ along the cut edge in the original graph. Then again repeat the algorithm for the treelike subgraphs that still need to be adjusted. Since the considered subgraphs shrink continuously, the correct configuration will be eventually reached.
The collection of all turns involved in this procedure yields a path between $I$ and $J$ and thus shows that $G(H)$ is connected.
\end{proof}

Note that this lemma cannot make any general statement about a Hamiltonian with an interaction term that is not diagonal, since that might disconnect the fermionic graph by adding a further term in \eqref{eq:H-matrix-el}.

\subsection{Densities and \texorpdfstring{$N$}{N}-representability}
\label{sec:N-rep}

We now turn our attention to the particle density that is the principal protagonist of DFT. We consider an $N$-particle fermionic system and let $\Psi\in \mathcal{I}_N$, which we can expand in the basis $\{e_I\}_I$ with coefficients $\{\Psi_I\}_I$ as
\be
\Psi = \sum_I \Psi_I \, e_I,
\label{Psi_expansion}
\ee
in which we sum over all ordered multi-indices $I = (i_1, \ldots,i_N)$.
Since the state $\Psi$ was assumed to be normalized we further have
\be
1 = \sum_I |\Psi_I |^2.
\label{normalize}
\ee
The particle density $\rho_i$ at vertex $i$ is now defined as 
\be
\rho_i = \sum_{I \ni i} |\Psi_I |^2,
\label{rho_def}
\ee
where we sum just over the multi-indices $I$ that include the given vertex $i\in X$.
From this it immediately follows by comparison with \eqref{normalize} that $0 \leq \rho_i \leq 1$. More generally, we will denote by $\rho$ the mapping $\rho: X \to [0,1]$ that assigns $\rho_i$ to vertex $i$ and refer to this quantity as ``the density $\rho$''. The creation and annihilation operators can be used for the alternative expression
\begin{equation}\label{eq:def-rho}
	\rho_i = \|\hat a_i \Psi\|^2 = \langle \hat a_i \Psi,\hat a_i \Psi \rangle = \langle \Psi,\hat a^\dagger_i \hat a_i \Psi \rangle,
\end{equation}
or with an ensemble state $\Gamma\in\mathcal{D}_N$ as $\rho_i = \Tr (\Gamma \hat a^\dagger_i \hat a_i)$.
The quantity $\rho_i$ receives a physical interpretation as the probability to find a particle at vertex $i$
if we know the system to be in state $\Psi$ ($\Gamma$). 
Another property of the density that follows from our definitions is that
\be
N = \sum_{i} \rho_i.
\label{rho_sum}
\ee
The fact that the probabilities do not sum up to one is not a contradiction, but is due to the fact that these probabilities are non-exclusive, i.e.,
the fact that we find a particle at vertex $i$ does not exclude the possibility of finding another particle at $j$. Indeed the state $e_I$ gives
probability one for finding a particle at any vertex $i$ contained in $I$.
Let the corresponding density be denoted by $E_I$, such that $E_{I,i}=1$ if $i \in I$ ($N$ times) and $0$ otherwise ($M-N$ times). An alternative definition using the annihilation operators is $E_I = (\|\hat a_i e_I\|)_i$, because $\|\hat a_i e_I\| = 1$ if $i\in I$ and $0$ otherwise.
We collect all densities satisfying the above conditions in the set of physical densities,
\be\label{eq:def-P_N}
    \mathcal{P}_{M,N} = \left\{ \rho : X \to \R \;\middle|\; 0 \leq \rho_i \leq 1, \sum_{i=1}^M\rho_i = N \right\}.
\ee
Note that the set $\mathcal{P}_{M,N}$ is a $(M-1)$-simplex with normalization $N$, cropped by the $\rho_i\leq 1$ conditions, resulting in an $(M-1)$-dimensional convex polytope that is known as $(M,N)$-hypersimplex \cite{convex-polytopes-book,Rispoli2008-hypersimplex}. This means the border points of $\mathcal{P}_{M,N}$ have either at least one $\rho_i=0$ and lie on the border of the simplex or have at least one $\rho_i=1$. Conversely, if for all $i\in X$ a density has $0<\rho_i<1$ it lies in the interior of $\mathcal{P}_{M,N}$, a set later denoted as $\mathcal{P}_{M,N}^+$. The extreme points of $\mathcal{P}_{M,N}$ are then those densities that saturate all conditions, exactly the vectors $\{ E_I \}_I$, which lie at the corners and generate the convex polytope $\mathcal{P}_{M,N}$ \cite[§2.4]{convex-polytopes-book}.
This insight will allow us to show that all densities in $\mathcal{P}_{M,N}$ are (pure-state) $N$-representable, i.e., for any density in $\mathcal{P}_{M,N}$ there is a state in $\mathcal{I}_N$ that yields exactly this density. 

\begin{proposition}\label{prop:N-rep}
$\mathcal{P}_{M,N}$ is the maximal set of pure-state $N$-representable densities.
\end{proposition}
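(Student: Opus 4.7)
The plan is to establish the two inclusions separately. The ``only if'' direction is essentially contained in the earlier discussion: for any $\Psi\in\mathcal{I}_N$ expanded as in \eqref{Psi_expansion}, the density $\rho_i=\sum_{I\ni i}|\Psi_I|^2$ is manifestly nonnegative, bounded above by $\sum_I|\Psi_I|^2=1$ from \eqref{normalize}, and satisfies $\sum_i\rho_i=N$ by \eqref{rho_sum}. Thus any pure-state density lies in $\mathcal{P}_{M,N}$, so $\mathcal{P}_{M,N}$ is indeed an upper bound for the set of pure-state $N$-representable densities.

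For the converse—the substance of the proposition—I would exploit the polytope structure of $\mathcal{P}_{M,N}$ just established in the paragraph preceding the statement. Since $\mathcal{P}_{M,N}$ is the convex hull of its extreme points $\{E_I\}_I$, every $\rho\in\mathcal{P}_{M,N}$ admits a representation
\begin{equation}
    \rho = \sum_I c_I E_I, \qquad c_I\geq 0,\qquad \sum_I c_I = 1.
\end{equation}
The key idea is to lift this convex decomposition of densities to a coherent superposition of basis states simply by taking square roots. Define
\begin{equation}
    \Psi = \sum_I \sqrt{c_I}\, e_I \in \H_N.
\end{equation}
Then $\|\Psi\|^2=\sum_I c_I=1$, so $\Psi\in\mathcal{I}_N$, and a direct computation using \eqref{rho_def} gives
\begin{equation}
    \rho_i^\Psi = \sum_{I\ni i} |\sqrt{c_I}|^2 = \sum_{I\ni i} c_I = \sum_I c_I E_{I,i} = \rho_i,
\end{equation}
where the second-to-last equality uses that $E_{I,i}=1$ precisely when $i\in I$. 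This produces the required pure state.

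The only nontrivial step is the existence of the convex decomposition $\rho=\sum_I c_I E_I$; this is however guaranteed by the characterization of $\mathcal{P}_{M,N}$ as the $(M,N)$-hypersimplex with extreme points exactly $\{E_I\}_I$, as noted in the excerpt, combined with the standard fact that any point of a convex polytope is a convex combination of its vertices (Minkowski's theorem, or finite-dimensional Krein--Milman). If one wished to avoid invoking polytope theory altogether, a purely constructive alternative would be an induction on the number of fractional components of $\rho$: pick any extreme point $E_I$ supported on $N$ vertices where $\rho_i>0$, subtract a maximal multiple $c_I E_I$ so that at least one additional component of $\rho-c_I E_I$ becomes $0$ or $1$, and iterate. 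Either route completes the argument, and the ``lifting by square roots'' step is what makes the fermionic pure-state representability as permissive as the mixed-state one.
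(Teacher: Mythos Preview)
Your proof is correct and follows essentially the same approach as the paper: write $\rho$ as a convex combination of the extreme densities $E_I$, then lift to a pure state by taking square roots of the convex coefficients. The paper phrases the final verification through the annihilation-operator identity $\rho_i=\langle\hat a_i\Psi,\hat a_i\Psi\rangle$ and the orthogonality $\langle\hat a_i e_I,\hat a_i e_{I'}\rangle=0$ for $I\neq I'$, whereas you compute directly from $\rho_i=\sum_{I\ni i}|\Psi_I|^2$; these are equivalent, and your version is arguably more transparent.
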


\begin{proof}
Because of convexity any density in $\mathcal{P}_{M,N}$ can be written as $\rho_i = \sum_I \rho_I E_{I,i} = \sum_I \rho_I \|\hat a_i e_I\|$ with $\rho_I \in [0,1]$ and $\sum_I\rho_I=1$. We rewrite $\rho_I = |\Psi_I|^2$ with arbitrary phase choice for $\Psi_I$ and substitute $\|\hat a_i e_I\|$ by $\|\hat a_i e_I\|^2$ which holds since it has a $(0,1)$-value anyway. Then
\begin{equation}
\begin{aligned}
    \rho_i &= \sum_I |\Psi_I|^2 \|\hat a_i e_I\|^2 = \sum_{I,I'} \Psi_I^*\Psi_{I'} \langle \hat a_i e_I, \hat a_i e_{I'} \rangle \\
    &= \left\langle \hat a_i \sum_{I} \Psi_I e_I, \hat a_i \sum_{I'} \Psi_{I'} e_{I'} \right\rangle = \langle \hat a_i \Psi, \hat a_i \Psi \rangle 
\end{aligned}
\end{equation}
with $\Psi = \sum_I \Psi_I e_I$,
where the extension to a double sum is possible since $\langle \hat a_i e_I, \hat a_i e_{I'} \rangle = 0$ if $I\neq I'$. That the given set is maximal is clear because no (physical) density can lie outside of $\mathcal{P}_{M,N}$.
\end{proof}

This demonstrates pure-state $N$-representability for $\rho$ and of course automatically implies ensemble $N$-representability, although a simpler construction is feasible in this case by just forming the convex combination $\Gamma = \sum_I\rho_I\Gamma_I$ where $\Gamma_I$ are the projections onto the states $e_I$.
Because of the proposition above it indeed makes sense to take $\mathcal{P}_{M,N}$ as the set of all possible physical densities. In the continuum setting a comparable construction is possible but way more complicated \cite{harriman1981orthonormal}. The much subtler matter of $v$-representability will be discussed in Section~\ref{sec:v-rep}.

\subsection{Graph Laplacians}
\label{sec:graph-Laplacian}

In the previous discussions we left the specific form of $h$ open, but here we want to make a brief connection to the form of the Hamiltonian that appears, for example, in finite-difference approaches for discretizing the continuum Schr\"odinger operator.
This requires the graph theoretical definition of the Laplace operator representing the kinetic energy.

\begin{definition}
For a graph $G$, let the degree $d(i)$ of a vertex $i$ be the number of vertices adjacent to $i$. Then the graph Laplacian $\Delta$ is defined here as
\be\label{eq:graph-laplacian}
\Delta_{ij} = \left\{ \begin{array}{ll}  -d(i) \quad  & \mbox{if $i=j$} \\  1 & \mbox{if $i \sim j$}  \\ 0  & \mbox{otherwise.} \end{array} \right.
\ee
\end{definition}
For a cubic lattice this definition is readily seen to agree with the standard three-point finite-differencing formula for the second derivative.
Note that in graph theory usually an opposite sign convention is employed and the graph Laplacian is $L=-\Delta$ \cite{spectral-graph-book,graph-eigenvectors-book}. We used the definition given here such that the negative graph Laplacian fulfills the same semi-positivity condition as the continuum Laplacian,
\begin{align}
- \langle \psi, \Delta \psi \rangle &= -\sum_{i,j} \psi_i^* \Delta_{ij} \psi_j = \sum_i d(i) |\psi_i|^2 - \sum_{i\sim j} \psi_i^*\psi_j \nonumber\\&= \frac{1}{2} \sum_{i\sim j} |\psi_i-\psi_j|^2 \geq 0
\end{align}
for any $\psi \in \H_1$, which physically implies positivity of the kinetic energy.
With this ingredient we now define the one-particle Hamiltonian on a graph in analogy to the usual Schrödinger equation,
\be\label{eq:h-with-v}
h_{ij} = - \Delta_{ij} + v_i \delta_{ij},
\ee
where the first term represents the kinetic energy and the second term represents the external one-body potential. More precisely, we define the potential $v$
to be the mapping $v:X \to \mathbb{R}$ that assigns to every vertex $i$ the real number $v_i$. 
Note, however, that the results derived in the course of this paper do not depend on any special form of $h$, just on the local character of the external potential as in \eqref{eq:h-with-v}.

Contrary to the kinetic energy operator it is not possible to straightforwardly define a momentum operator on a general graph. Being the generator of translations such an operator can only be specified on graphs with certain symmetries, such as circular graphs.

\subsection{Simple graph examples}
\label{sec:graph-examples}

Two very simple graphs will be used to illustrate the above constructions, both of them also taking a role in providing counterexamples to the standard Hohenberg--Kohn statement formulated later in Section~\ref{sec:ex-non-unique}.
The first is the situation of two fermions of the triangle graph, so $N=2$ and $M=3$. The corresponding graph Laplacian is
\be
\Delta = \begin{pmatrix*}[r]
-2 & 1 & 1  \\
1 & -2 & 1  \\
1 & 1 & -2  
\end{pmatrix*}.
\ee
The simplest multi-particle Hamiltonian is then $H=-\Delta\otimes 1 - 1 \otimes\Delta$, not including any interactions or an external potential. Its matrix representation in the resulting fermionic Hilbert-space basis $\{e_I\}_I = \{e_1 \wedge e_2 , e_1 \wedge e_3,e_2 \wedge e_3\}$ with $L=3$ elements is then
\be
H = \begin{pmatrix*}[r]
4 & -1 & 1  \\
-1 & 4 & -1  \\
1 & -1 & 4  
\end{pmatrix*},
\ee
which can be calculated by applying $H$ on the basis elements as usual. For the first basis element this means
\begin{align}
H(e_1\wedge &e_2) = -\Delta e_1\wedge e_2 - e_1 \wedge \Delta e_2 \nonumber\\
&= -(-2e_1+e_2+e_3)\wedge e_2 - e_1 \wedge (e_1-2e_2+e_3) \nonumber\\
&= 2e_1\wedge e_2 - e_3\wedge e_2 + 2e_1\wedge e_2 - e_1\wedge e_3 \nonumber\\
&= 4e_1\wedge e_2 - e_1\wedge e_3 + e_2\wedge e_3 \quad\text{etc.},
\end{align}
where $e_i\wedge e_j = -e_j\wedge e_i$ and $e_i\wedge e_i=0$ have been used. This means the fermionic graph $G(H)$ is a complete graph as well, i.e., with edges between all vertices. The graphs $G(\Delta)$ and $G(H)$ are illustrated in Figure~\ref{fig:triangle-graph}.

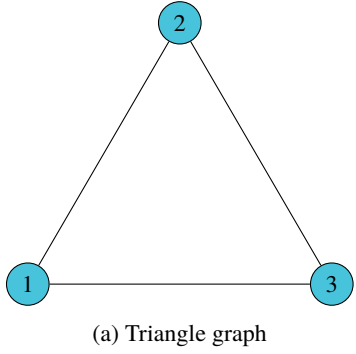
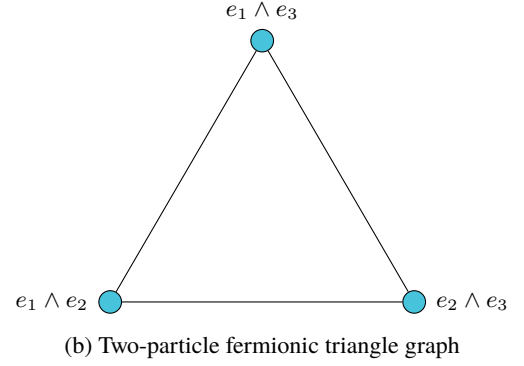
\begin{figure*}[ht]
\begin{subfigure}{.45\textwidth}
  \centering
  \begin{tikzpicture}[scale=2.0]
    \coordinate (C1) at (-1,0);
    \coordinate (C2) at (0,{sqrt(3)});
    \coordinate (C3) at (1,0);
    \foreach \m in {1,...,3}
    	\node[graphnode] (N\m) at (C\m) {\m};
    \draw (N1) -- (N2) -- (N3) -- (N1);
  \end{tikzpicture}
  \caption{Triangle graph}
\end{subfigure}
\hfill 
\begin{subfigure}{.45\textwidth}
  \centering
  \begin{tikzpicture}[scale=2.0]
    \coordinate (C1) at (-1,0);
    \coordinate (C2) at (0,{sqrt(3)});
    \coordinate (C3) at (1,0);
    \foreach \m in {1,...,3}
    	\node[graphnode] (N\m) at (C\m) {};
    \node[left=5pt] at (C1) {$e_1\wedge e_2$};
    \node[above=5pt] at (C2) {$e_1\wedge e_3$};
    \node[right=5pt] at (C3) {$e_2\wedge e_3$};
    \draw (N1) -- (N2) -- (N3) -- (N1);
  \end{tikzpicture}
  \caption{Two-particle fermionic triangle graph}
\end{subfigure}
\caption{Triangle graph example.}
\label{fig:triangle-graph}
\end{figure*}

The second example is two fermions on a square graph, so $N=2$ and $M=4$ with \be
\Delta = \begin{pmatrix*}[r]
-2 & 1 & 0 & 1  \\
1 & -2 & 1 & 0  \\
0 & 1 & -2 & 1  \\
1 & 0 & 1 & -2  
\end{pmatrix*},
\ee
so no diagonal edges are included. The basis set for the fermionic Hilbert space then is 
$\{ e_I \}_I = \{ e_1 \wedge e_2, e_1 \wedge e_3, e_1 \wedge e_4, e_2 \wedge e_3, e_2 \wedge e_4 , e_3 \wedge e_4 \}$
and the same Hamiltonian $H=-\Delta\otimes 1 - 1 \otimes\Delta$ as before in this basis comes out as
\be
H = \begin{pmatrix*}[r]
4 & -1 & 0 & 0 & 1 & 0  \\
-1 & 4 & -1 & -1 & 0 & 1  \\
0 & -1 & 4 & 0 & -1 & 0 \\
0 & -1 & 0 & 4 & -1 & 0 \\
1 & 0 & -1 & -1 & 4 & -1 \\
0 & 1 & 0 & 0 & -1 & 4
\end{pmatrix*}.
\ee
Here, the fermionic graph, such as the square graph itself, is not complete and some edges are missing, as illustrated in Figure~\ref{fig:square-graph}. The same example was used in \citet{tenHaaf1995MonteCarlo} to illustrate a method employing simplified, effective Hamiltonians in Green-function Monte Carlo calculations, where certain edges in the fermionic graph get cut, thus avoiding the ``sign problem''.

\begin{figure*}[ht]
\begin{subfigure}{.45\textwidth}
  \centering
  \begin{tikzpicture}[scale=2]
    \coordinate (C1) at (-1,1);
    \coordinate (C2) at (1,1);
    \coordinate (C3) at (1,-1);
    \coordinate (C4) at (-1,-1);
    \foreach \m in {1,...,4}
    	\node[graphnode] (N\m) at (C\m) {\m};
    \draw (N1) -- (N2) -- (N3) -- (N4) -- (N1);
  \end{tikzpicture}
  \caption{Square graph}
\end{subfigure}
\hfill 
\begin{subfigure}{.45\textwidth}
  \centering
  \begin{tikzpicture}[scale=2.0]
    \coordinate (C1) at (-1.25,0);
    \coordinate (C2) at (0,1);
    \coordinate (C3) at (1.25,0);
    \coordinate (C4) at (0.75,0);
    \coordinate (C5) at (0,-1);
    \coordinate (C6) at (-0.75,0);
    \foreach \m in {1,...,6}
    	\node[graphnode] (N\m) at (C\m) {};
    \node[left=5pt] at (C1) {$e_1\wedge e_2$};
    \node[above=5pt] at (C2) {$e_1\wedge e_3$};
    \node[right=5pt] at (C3) {$e_1\wedge e_4$};
    \node[left=5pt] at (C4) {$e_2\wedge e_3$};
    \node[below=5pt] at (C5) {$e_2\wedge e_4$};
    \node[right=5pt] at (C6) {$e_3\wedge e_4$};
    \draw (N1) -- (N2) -- (N3) -- (N5) -- (N1);
    \draw (N2) -- (N4) -- (N5) -- (N6) -- (N2);
  \end{tikzpicture}
  \caption{Two-particle fermionic square graph}
\end{subfigure}
\caption{Square graph example.}
\label{fig:square-graph}
\end{figure*}
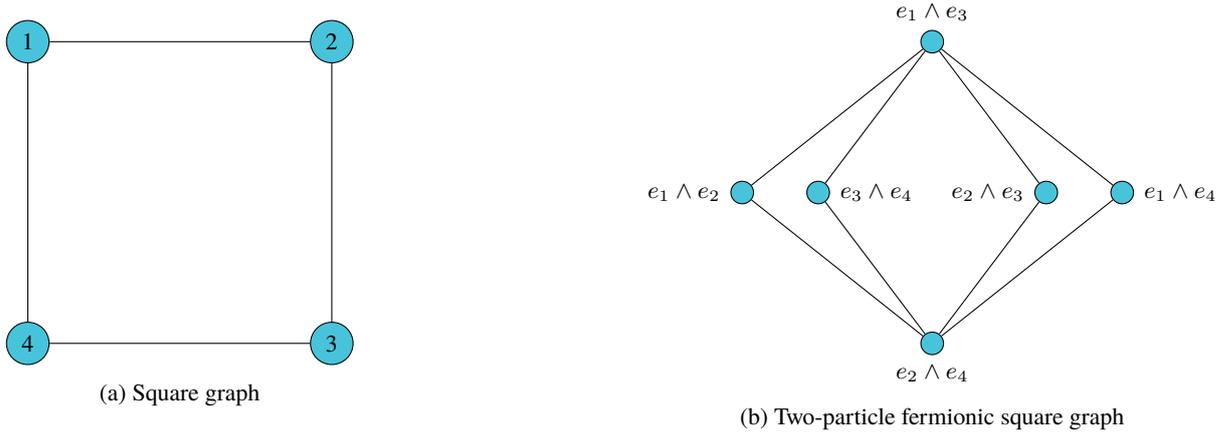

\section{Violations of the Hohenberg--Kohn theorem}
\label{sec:main:HK-violation}

\subsection{Unique \texorpdfstring{$v$}{v}-representability}
\label{sec:uv}

The Hohenberg-Kohn (HK) theorem for continuum systems \cite{Hohenberg-Kohn1964} states that if two Hamiltonians $H(v)$ and $H (v')$ corresponding to two different external potentials $v$ and $v'$
share the same ground-state density then $v$ and $v'$ can only differ by a constant. This statement holds independently whether the ground state is degenerate or not. Following the proof in \citet{pino2007} the theorem can be split into two parts, where the first one is very general and holds in all settings where the external potential couples only to the density, so in particular in the usual continuum case and the graph setting discussed here.

\begin{theorem}[HK part 1]\label{th:HK1}
Assume that two Hamiltonians $H = H_0 + V,H' = H_0 + V'$ that differ only in their external potentials $v,v'$ share a common ground-state density $\rho$. Then an (ensemble) ground state $\Psi$ $(\Gamma)$ of $H$ with density $\rho$ is also an (ensemble) ground state of $H'$ and vice versa.
\end{theorem}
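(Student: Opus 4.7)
The plan is to use the standard Rayleigh--Ritz variational argument, which carries over essentially verbatim from continuum DFT because the only property of the external potential used is that it couples linearly to the density. Let $\Psi$ be a ground state of $H$ at energy $E$ and $\Psi'$ a ground state of $H'$ at energy $E'$, both assumed to have the same density $\rho$. The key observation I would establish first is the identity
\begin{equation}
    \langle \Psi, V \Psi \rangle = \sum_i v_i \langle \Psi, \hat a_i^\dagger \hat a_i \Psi \rangle = \sum_i v_i \rho_i,
\end{equation}
so that $\langle \Psi, V \Psi \rangle = \langle \Psi', V \Psi' \rangle$ and analogously for $V'$, purely because $V$ depends only on $\rho$. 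This identity isolates the entire role of the potentials in the subsequent estimates.

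Next, I would apply the variational principle twice. Since $\Psi$ is a ground state of $H$ and $\Psi'$ is a normalized trial state,
\begin{equation}
    E \leq \langle \Psi', H \Psi' \rangle = \langle \Psi', H' \Psi' \rangle + \langle \Psi', (V - V') \Psi' \rangle = E' + \sum_i (v_i - v'_i)\rho_i,
\end{equation}
and symmetrically $E' \leq E + \sum_i (v'_i - v_i)\rho_i$. Adding these two bounds gives $E + E' \leq E + E'$, which forces both inequalities to be equalities. Equality in the first line means $\langle \Psi', H \Psi' \rangle = E$, so $\Psi'$ attains the ground-state energy of $H$ and is therefore a ground state of $H$; the symmetric statement shows $\Psi$ is a ground state of $H'$.

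For the ensemble case, I would repeat the same argument using density matrices $\Gamma, \Gamma' \in \mathcal{D}_N$. The only adjustment needed is replacing expectation values by traces, noting that $\Tr(\Gamma V) = \sum_i v_i \rho_i$ by the definition $\rho_i = \Tr(\Gamma \hat a_i^\dagger \hat a_i)$ given earlier, and invoking the variational characterization of the ground-state energy as the minimum of $\Tr(\Gamma H)$ over $\mathcal{D}_N$. The remaining algebra is identical.

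There is no real obstacle here; the argument is robust and does not use any feature of the graph setting beyond the locality of $V$ (in the sense that $V$ is a one-body operator diagonal in the vertex basis), which is what makes the identity $\langle \Psi, V \Psi \rangle = \sum_i v_i \rho_i$ hold. The subtlety reserved for later sections is that this result, unlike in the continuum, does not imply $v = v'$ up to a constant, because one cannot in general conclude that $\Psi$ uniquely determines $H$ through $H\Psi = E\Psi$ when $\Psi$ has vanishing components.
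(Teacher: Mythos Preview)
Your proof is correct and follows essentially the same Rayleigh--Ritz variational argument as the paper. The only cosmetic difference is that you add the two inequalities to force both to be equalities, whereas the paper chains them into a sandwich $E' \leq \langle \Psi, H'\Psi \rangle \leq E'$; the content is identical.
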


\begin{proof}
Let $\Psi,\Psi'$ be ground states of $H$ and $H'$ with ground-state energies $E,E'$, respectively, having the same density $\rho$. Then by the Rayleigh--Ritz variational principle
\begin{align}
    &E = \langle \Psi, H \Psi \rangle \leq \langle \Psi', H \Psi' \rangle = \langle \Psi', H' \Psi' \rangle + \sum_i (v_i-v'_i)\rho_i \nonumber\\
    &\quad\Longrightarrow \langle \Psi, H \Psi \rangle - \sum_i (v_i-v'_i)\rho_i \leq \langle \Psi', H' \Psi' \rangle
    \label{eq:HK-proof-1}
\end{align}
and thus
\begin{equation}
\begin{aligned}
    \label{eq:HK-proof-2}
    E' &= \langle \Psi', H' \Psi' \rangle \leq \langle \Psi, H' \Psi \rangle \\&= \langle \Psi, H \Psi \rangle - \sum_i (v_i-v'_i)\rho_i \leq E'.
\end{aligned}
\end{equation}
Hence $\langle \Psi, H' \Psi \rangle = E'$ and thus $\Psi$ is also a ground state of $H'$. The other direction and the proof for ensemble ground states works analogously.
\end{proof}

Since by this first part of the HK theorem it follows $H'\Psi = E'\Psi$, this together with $H\Psi=E\Psi$ leads to
\begin{equation}
    (H-H')\Psi=(V-V')\Psi=(E-E')\Psi.
\end{equation}
The second part of the HK theorem for continuum systems then relies on the assumption that $\Psi\neq 0$ almost everywhere to be able to conclude $v-v'=\text{const}$.
Recently this issue, and thereby the validity of the HK theorem for continuum systems, has been settled in the work of \citet{Garrigue2018} who used the proof structure of \citet{pino2007} and proved the necessary unique continuation property (UCP) for the many-particle Hamiltonians under consideration.
In essence, the UCP assures that if a wave function $\Psi$ vanishes sufficiently quickly at one point and solves the Schr\"odinger equation then $\Psi=0$ almost everywhere.
This then implies that an eigenfunction cannot vanish on a set of positive measure or else it would be identically zero \cite{FigueiredoGossez,Regbaoui}, a property that is crucially used in the proof of the HK theorem.

The UCP or a comparable property does not hold on finite graphs \cite{Isozaki2014-Rellich-no-UCP,Peyerimhoff2019-no-UCP}, since graph Hamiltonian eigenstates can vanish at many adjacent vertices (see \eqref{eq:complete-graph:localized-phi} for an example).
This also implies that, although the HK theorem is settled in the continuum case, it is not so in the discrete case. This can lead to situations where systems with different potentials (modulo constants) have the same ground state and consequently also the same ground-state density. On the other hand, because of the first part of the HK theorem, if two different potentials produce the same ground-state density, they must also share the same ground state. Hence, the property of ``unique $v$-representability'' (or its negation) is always shared by states and their densities. For future use it is therefore possible and useful to introduce the following terminology:

\begin{definition}
If $\Psi$ ($\Gamma$) is the (ensemble) ground state of a Hamiltonian $H_0+V$ with external potential $v$ and if there is no other potential $v'$ differing more than a constant from $v$ such that $\Psi$ ($\Gamma$) is also the (ensemble) ground state of $H_0+V'$, we say that $\Psi$ ($\Gamma$) and its corresponding ground-state density $\rho$ are \emph{uniquely $v$-representable} (uv).
\end{definition}

The validity of the complete HK theorem is then equivalent to the statement that all densities of interest are uv-ground-state densities.
In the next section we will establish sufficient conditions
for unique $v$-representability, even for general eigenstates, which is in line with the continuum case in which the UCP applies to any eigenstate \cite{Garrigue2018}.

It is readily seen that violations of unique $v$-representability and thus of the HK theorem occur in at least two cases, namely when the density $\rho_i$ on a vertex is either zero or one, i.e., the density lies on the boundary of $\mathcal{P}_{M,N}$.
In the case that the density is zero on vertex $i$ we can add an arbitrary repulsive potential $v_i$ at vertex $i$ without any effect on the potential energy $\sum_i v_i\rho_i$ and thus without changing the state $\Psi$ and its designation as a ground state.
On the other hand, in the case $\rho_i=1$ the density is fully saturated at that vertex and by applying an additional attractive potential on vertex $i$ we can make sure that the state $\Psi$ is unchanged and still remains the ground state.
These two situations can be ruled out if the ground-state wave function has enough non-zero coefficients in its expansion
of the form \eqref{Psi_expansion}.
To show this we first define 
\be
m= \binom{ M-1 }{ N-1}, \quad n= \binom{ M-1 }{ N }, \quad n+m= \binom{ M }{ N },
\label{nm}
\ee
where $n+m=L$ is the dimension of the Hilbert space $\H_N$.
If $\rho_i=0$ then according to \eqref{rho_def} 
at most $n$ coefficients, namely all possible ones not containing $i$ in a multi-index, can be non-zero. On the other hand
when $\rho_i=1$ then all non-zero coefficients must contain
$i$ in a multi-index and there are at most $m$ of such coefficients.
If $M > 2N$ then $n > m$ which means that, in this case, expansion \eqref{Psi_expansion} must contain more than $n$ non-zero coefficients to avoid both zero or full occupancy on some vertex.
On the other hand, if $M < 2N$ then $m > n$ and the expansion
must contain more than $m$ non-zero coefficients to avoid 
the same situation. For the remaining case $M=2N$ we have $m=n=\binom{2N-1}{N}$ and we need more than this number of coefficients to avoid either zero or 
full occupancies.
It is clearly conceivable that the HK theorem may also be violated in other cases than zero or full occupancy, i.e., inside the interior of $\mathcal{P}_{M,N}$, and we will see that this is indeed the case by explicit example.
In the next section we derive precisely how many coefficients in expansion \eqref{Psi_expansion} need to be non-zero to guarantee the validity of the HK theorem.

\subsection{Condition for unique \texorpdfstring{$v$}{v}-representability from Odlyzko's theorem}
\label{sec:Odlyzko}

In this section we give a sufficient condition for a state to be uniquely $v$-representable on graphs.
In the HK theorem the focus is on the ground state but our reasoning applies to any eigenstate
and so we will derive a condition to ensure that any eigenstate is produced by a unique potential (modulo constants). 
Let us assume that two potentials $V$ and $V'$ lead to the same eigenstate $\Psi$, then
\begin{align}
( H_0 + V ) \Psi &= E \Psi \quad\quad \text{and} \\
( H_0 + V' ) \Psi &= E' \Psi.
\end{align}
Then if we define $U= (V-E)- (V'-E')$ it follows from above that $U \Psi=0$. Written out more explicitly this gives for every coefficient from the expansion \eqref{Psi_expansion} that
\be\label{eq:u-condition}
(u_{i_1} + \ldots + u_{i_N} ) \, \Psi_I = 0
\ee
with $u_i = v_i-v'_i - \tfrac{1}{N}(E-E')$
for all $i\in I=(i_1,\ldots,i_N)$ with $i_1<\ldots<i_N$, which represents a selection of $N$ distinct integers from the set of vertices $X=\{1,\ldots,M \}$.
We would like to conclude that $u_i=0$ for all $i \in X$ which leads to the usual HK result that
$v = v' + \tfrac{1}{N}(E-E')$. To ensure this conclusion we need enough multi-indices $I$ for which the wave-function coefficient $\Psi_I$ does not vanish. Let us denote the collection of all such multi-indices for $\Psi$ by $C[\Psi]$, so $\Psi_I \neq 0$ for $I \in C[\Psi]$. Note that there is a finite number of possibilities for such collections because every single one cannot include more than $L=\binom{M}{N}$ elements. Furthermore, every state $\Psi \in I_N$ can be allocated to one of those classes, namely $C[\Psi]$, so they offer a finite classification of all states in terms of which coefficient are non-zero.
We then have from \eqref{eq:u-condition} the set of equations
\be
\sum_{i \in I} u_{i} = 0 \quad \quad \text{for all $I \in C[\Psi]$.}
\label{u_zero}
\ee
The key question is now how large the collection $C[\Psi]$ must be in order to guarantee that $u_i=0$ for all $i \in X$ is the only solution, which would give us the full HK theorem for states in this class. This can be translated into a question about the rank of certain types of matrices and can be best illustrated with an example that we will meet again in \eqref{Psi_A} below. If
$M=4$ and $N=2$ and the wave-function coefficient $\Psi_I$ is non-vanishing for $I \in C[\Psi]=\{(1,3),(1,4),(2,3),(2,4)\}$ then the system of equations (\ref{u_zero}) is equivalent to the matrix equation
\be\label{eq:rank-3}
\Upsilon[\Psi]\cdot u = 
\begin{pmatrix*}
1 & 0 & 1 & 0 \\
1 & 0 & 0 & 1 \\
0 & 1 & 1 & 0 \\
0 & 1 & 0 & 1 
\end{pmatrix*} \cdot
\begin{pmatrix*} u_1 \\ u_2 \\ u_3 \\ u_4
\end{pmatrix*} = 0.
\ee 
The rows of the coefficient matrix $\Upsilon[\Psi]$ are thus formed by the non-zero coefficients $\Psi_I$ of $\Psi$. In each row just the entries that correspond to the two indices in $I$ are one, while all other entries remain zero. Equivalently, the rows of $\Upsilon[\Psi]$ are the extreme densities $E_I$ with $I\in C[\Psi]$.
The rank of $\Upsilon[\Psi]$ in this example is $3$ which is less than the number of unknowns and consequently there is an infinite number of solutions, which in this case are of the form $u=(t,t,-t,-t)$ with $t \in \mathbb{R}$.
If we would now take $\Psi_I$ also to be non-zero for $I=(3,4)$, we can add the row $(0,0,1,1)$ to the matrix above and we find that the rank of the matrix becomes $4$ in which case $u=0$ is the only solution.

In the general case of \eqref{u_zero},
the matrix $\Upsilon[\Psi]$ that we need to consider always contains only the elements $0$ and $1$, a so-called $(0,1)$-matrix, has $M$ columns, and its row sum is always $N$. To guarantee uniqueness, we can therefore ask how many different rows a $(0,1)$-matrix with $M$ columns and row sum $N$ needs to have in order to guarantee that its rank is $M$.
This problem was addressed by \citet{Longstaff} and \citet{Odlyzko} in which the latter gave the complete answer that we repeat here as a theorem.

\begin{theorem}[Odlyzko]
The number $g(M,N)$, $1 \leq N < M$, defined by
\be
g(M,N) = \left\{ \begin{array}{cc}  \binom{M-1}{N} & \text{if  $M > 2N$}\\ \\
2 \, \binom{M-2}{N-1} & \text{if $M=2N$}  \\ \\
\binom{M-1}{N-1} & \text{if $M < 2N$}
\end{array} \right.
\label{odlyzko} 
\ee
is the smallest integer so that every $(0,1)$-matrix with $M$ columns, row sum $N$, and strictly more than $g(M,N)$ different rows has rank $M$.
\end{theorem}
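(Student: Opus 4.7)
The plan is to reformulate the matrix-rank statement as a combinatorial inequality. A $(0,1)$-matrix with $M$ columns and row sum $N$ fails to have rank $M$ precisely when some nonzero $u\in\R^M$ satisfies $\sum_{i\in I}u_i=0$ for every row $I$. Setting $C(u)=\{I\subset\{1,\ldots,M\}\,:\,|I|=N,\;\sum_{i\in I}u_i=0\}$, the theorem is thus equivalent to the sharp bound $\max_{u\neq 0}|C(u)|=g(M,N)$, together with the statement that this maximum is attained.

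For the lower bound I would exhibit one extremal vector in each regime. If $M>2N$, take $u=e_1$; the zero-sum family then consists of the $\binom{M-1}{N}$ subsets not containing vertex~$1$. If $M<2N$, take $u_1=N-1$ and $u_i=-1$ for $i\geq 2$; a direct computation gives $\sum_{i\in I}u_i=0$ iff $1\in I$, yielding $\binom{M-1}{N-1}$ subsets. If $M=2N$, take $u_1=u_2=M-2$ and $u_i=-2$ for $i\geq 3$; the resulting sum equals $2N(|I\cap\{1,2\}|-1)$, which vanishes iff $I$ contains exactly one of the two distinguished vertices, giving $2\binom{M-2}{N-1}$ subsets. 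These three constructions together saturate $g(M,N)$.

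The upper bound is the substantive content, and I would attack it by induction on $M$. Fixing $u\neq 0$ and singling out vertex $M$, split $C(u)=C^-\sqcup C^+$ according to whether $M\notin I$ or $M\in I$. If the truncation $u'=(u_1,\ldots,u_{M-1})$ is zero, then $u_M\neq 0$ forces $C^+=\emptyset$ and $|C^-|=\binom{M-1}{N}$, which is checked by hand to respect the bound in all three regimes. Otherwise $C^-$ is exactly the zero-sum family of $u'$ on $N$-subsets of $\{1,\ldots,M-1\}$, to which the inductive hypothesis applies directly, while $C^+$ is in bijection with the family of $(N-1)$-subsets of $\{1,\ldots,M-1\}$ whose $u'$-sum equals the prescribed value $-u_M$. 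The induction therefore has to be strengthened to bound the number of subsets attaining an \emph{arbitrary} target, after which the two partial bounds are recombined via the Pascal-type identities that underpin the definition of $g$.

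The principal obstacle is the non-zero-target half of this induction: bounding the number of $(N-1)$-subsets summing to a prescribed nonzero value is not symmetric to the zero-sum count, and the tight bound depends sensitively on how $u_M$ compares to the remaining coordinates of $u$. The crossover case $M=2N$ is especially delicate because its extremal configuration involves a two-element orbit rather than a single distinguished coordinate, so the induction must be set up carefully so that this branch is preserved when one of $M$ or $N$ drops by one. Should the direct induction not immediately deliver the sharp bound, my fallback would be a structural case analysis of $u$ by its sign pattern and the multiplicities of its coordinate values, which is the route Odlyzko takes in his original combinatorial proof.
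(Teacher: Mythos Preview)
The paper does not prove this theorem; it simply quotes the result from \citet{Odlyzko} as a black box and then applies it (via Corollary~\ref{cor:HK-necessary}) to the wave-function coefficient problem. So there is no in-paper proof to compare your attempt against.

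That said, your reformulation is correct: the rank condition is equivalent to the sharp bound $\max_{u\neq 0}|C(u)|=g(M,N)$, and your three lower-bound constructions are all valid and do saturate $g(M,N)$ in each regime. Your upper-bound sketch is honest about where the real work lies. The induction does need to be strengthened to arbitrary target sums (not just zero), and you correctly flag that the $M=2N$ branch is the delicate one because its extremal vector has a two-element distinguished set, which the naive single-vertex splitting does not preserve. Your proposal is a reasonable outline but not yet a proof; the missing piece is precisely the structural case analysis of $u$ that you defer to as a ``fallback'', which in Odlyzko's original argument is in fact the main engine. If you want the complete argument you should consult \citet{Odlyzko} directly, since the present paper neither reproduces nor sketches it.
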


In our example we have $g(4,2)=4$ such that we need at least $g(4,2)+1=5$ distinct rows, and therefore $5$ non-vanishing coefficients $\Psi_I$ to guarantee the unique solution $u=0$. Note that Odlyzko's theorem still allows for the existence of
rank $M$ matrices with a number of rows smaller or equal than $g(M,N)$, it just asserts that not \emph{every} matrix with that number of different rows will have rank $M$. For example, we can check for $M=4,N=2$ that a wave function with non-zero coefficients $\Psi_I$ with $I\in C[\Psi]=\{(1,2),(1,3),(1,4),(3,4)\}$ corresponds to a $(0,1)$-matrix $\Upsilon[\Psi]$ of rank $4$ even if this matrix has only $4$ different rows.

If we now apply the theorem of Odlyzko to the system of equations \eqref{u_zero} we conclude that
$u=0$ is the only possible solution in case the number of coefficients $|C[\Psi]|$ for which $\Psi_I$ does not vanish is strictly larger than $g(M,N)$. We formulate this as the following corollary.

\begin{corollary}
\label{cor:HK-necessary}
Let $\Psi$ be an $N$-particle eigenstate of $H_0+V$ on a graph with $M$ vertices. If the number of distinct multi-indices $I$ for which $\Psi_I$ in expansion \eqref{Psi_expansion} does not vanish is strictly larger than the Odlyzko number $g(M,N)$ given by \eqref{odlyzko}, then $\Psi$ is not an eigenvector of any other Hamiltonian $H_0+V'$ where $V'$ differs from $V$ by more than a constant.
\end{corollary}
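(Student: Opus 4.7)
The plan is to assume, for a direct argument, that $\Psi$ is simultaneously an eigenstate of $H_0 + V$ with eigenvalue $E$ and of $H_0 + V'$ with eigenvalue $E'$, and to deduce that $v' = v + \mathrm{const}$. Since the setup preceding the statement already reduces the problem to a rank question about $\Upsilon[\Psi]$, the corollary will follow by a direct application of Odlyzko's theorem.

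The first step is to subtract the two eigenvalue equations to obtain $(V - V')\Psi = (E - E')\Psi$, equivalently $U\Psi = 0$ with $U := (V - E) - (V' - E')$. Because $V$ and $V'$ act diagonally in the many-particle basis $\{e_I\}_I$ (each $e_I$ being an eigenvector of $\sum_i v_i \aopdag_i \aop_i$ with eigenvalue $\sum_{i \in I} v_i$), the relation $U\Psi = 0$ decouples into one scalar equation per basis coefficient,
\be
\Bigl(\sum_{i \in I} u_i\Bigr)\Psi_I = 0, \qquad u_i := v_i - v_i' - \tfrac{1}{N}(E - E').
\ee
Keeping only those $I$ for which $\Psi_I \neq 0$, i.e.\ $I \in C[\Psi]$, one reaches the homogeneous linear system $\sum_{i \in I} u_i = 0$ for all $I \in C[\Psi]$, which is exactly $\Upsilon[\Psi]\cdot u = 0$.

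The second step is to check that the matrix $\Upsilon[\Psi]$ meets the hypotheses of Odlyzko's theorem. By construction its rows are the extreme densities $E_I$ for $I \in C[\Psi]$, so it is a $(0,1)$-matrix with $M$ columns and constant row sum $N$. Moreover the map $I \mapsto E_I$ is injective (the support of $E_I$ is $I$), hence the $|C[\Psi]|$ multi-indices in $C[\Psi]$ contribute $|C[\Psi]|$ pairwise distinct rows. The assumption $|C[\Psi]| > g(M,N)$ therefore meets Odlyzko's bound for the number of distinct rows, yielding $\rank \Upsilon[\Psi] = M$, and so $u = 0$.

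The conclusion is then immediate: $u_i = 0$ for all $i$ means $v_i - v_i' = \tfrac{1}{N}(E - E')$ is the same constant on every vertex, contradicting the assumption that $V'$ differs from $V$ by more than a constant. The only point requiring any care at all is the injectivity of $I \mapsto E_I$ needed to translate the hypothesis $|C[\Psi]| > g(M,N)$ into a statement about the number of \emph{distinct} rows of $\Upsilon[\Psi]$; beyond that, the corollary is a clean consequence of the Odlyzko bound applied to the system already derived in the preceding paragraphs.
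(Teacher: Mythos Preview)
Your proof is correct and follows essentially the same approach as the paper: subtract the two eigenvalue equations to obtain $U\Psi=0$, decouple in the $\{e_I\}_I$ basis to the linear system $\Upsilon[\Psi]\cdot u=0$, and invoke Odlyzko's theorem to force $u=0$ when $|C[\Psi]|>g(M,N)$. The only addition on your part is the explicit remark that $I\mapsto E_I$ is injective, which the paper leaves implicit but which indeed is what makes the hypothesis on $|C[\Psi]|$ translate to the number of \emph{distinct} rows required by Odlyzko's theorem.
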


Let us now give a discussion of these results.
Referring back to $\eqref{nm}$ we see that for $M > 2N$ we have $g(M,N)=n$ and for $M < 2N$ we indeed have $g(M,N)=m$ such that the conditions in Corollary~\ref{cor:HK-necessary} 
for the case $M \neq 2N$ are equivalent to the conditions that exclude the $\rho_i=0$ and $\rho_i=1$ occupation cases that were discussed in the previous section. 
For the case $M=2N$ we have $m=n$ and $g(M,N)>m$ as can be checked by explicit calculation. 
In this case the Odlyzko condition is more strict than the one obtained from the reasoning in the previous section, which indicates the possibility of a violation of HK without zero or full occupancy as we will indeed find by explicit example in the next section.
We stress, however, that this does not necessarily imply HK violation for states where the number of non-zero coefficients is smaller or equal than $g(M,N)$.

\subsection{Examples for non-unique \texorpdfstring{$v$}{v}-representability}
\label{sec:ex-non-unique}

Here we present two explicit examples for non-unique $v$-representability by studying two systems of non-interacting fermions on graphs that are already familiar to us from Section~\ref{sec:graph-examples}. 
As a first example we again take two fermions on a triangle graph, i.e., we have the case
$N=2$ and $M=3$ for which we get $g(3,2)=2$ and therefore we need at least three non-zero coefficients (which means all of them since $L=\binom{M}{N}=3$ too) to guarantee unique $v$-representability by Corollary~\ref{cor:HK-necessary}.
For later reference we give the Hamiltonian for a general external potential $v$, although for our example it suffices to put the potential to zero on all but one vertex.
The one-particle Hamiltonian $h$ is then given by
\be
h(v) = 2 I +
\begin{pmatrix*}[r]
v_1 & -1 & -1  \\
-1 & v_2 & -1  \\
-1 & -1 & v_3  
\end{pmatrix*},
\ee
where $I$ denotes the identity matrix here, while the two-particle Hamiltonian is given by
\be
H(v) = 4 I +
\begin{pmatrix*}[r]
v_1+v_2 & -1 & 1  \\
-1 & v_1+v_3 & -1  \\
1 & -1 & v_2+v_3  
\end{pmatrix*}
\label{eq:H-triangle}
\ee
with respect to the basis $\{e_I\}_I = \{e_1 \wedge e_2 , e_1 \wedge e_3,e_2 \wedge e_3\}$. For the case $v_1=v_2=v_3=0$ the ground state is degenerate
and the ground-state subspace is spanned by the two orthonormal states
\be
 \Psi_A = \frac{1}{\sqrt{2}} (1,0,-1) \quad\quad \text{and} \quad\quad \Psi_B =\frac{1}{\sqrt{6}} (1,2,1)
 \label{PsiAB}
\ee
with eigenvalue $E_A=E_B= 3$ and corresponding densities calculated from \eqref{rho_def},
\be
\rho_A = \tfrac{1}{2}( 1,2,1 ) \quad\quad \text{and} \quad\quad \rho_B = \tfrac{1}{6}( 5,2,5 ).
\ee
We note that $\Psi_A$ has one zero coefficient which is responsible for the full occupancy on vertex $2$ in the density $\rho_A$ that consequently can be found at the boundary of $\mathcal{P}_{3,2}$.
Through applying an attractive potential $(0,-t,0)$ with $t >0$ the state $\Psi_A$ thus remains unchanged while the state $\Psi_B$ is modified. The 
degeneracy is lifted with $E_A=3-t$ and $E_B= \tfrac{1}{2}(9-t-\sqrt{9-2t+t^2})$ such that $\Psi_A$ becomes a non-degenerate ground state
which is insensitive to the value of $t$ as long as $t>0$. 
This is a clear counterexample to the HK theorem which finds its origin in a full occupation of one of the vertices. It is also a refutation of the statement of \citet{CCR1985} that ``no potential which is finite at a particular site can attract particles so strongly to that site that it always contains the maximum number of particles.''

As a second example we take $N=2$ fermions on a square graph with $M=4$ vertices for which $g(4,2)=4$ while $m=n=3$.
We therefore need at least $4$ non-zero coefficients to avoid zero or full occupancy, while the Odlyzko condition implies that we need at least $5$ non-zero coefficients
to guarantee unique $v$-representability. We shall see that for this example we can find a ground state with only $4$ non-zero coefficients and that HK is violated 
in the absence of full or zero occupancy. 
For the square graph we take the one-particle Hamiltonian
\begin{align}\label{eq:square-h}
h= 2 I +
\begin{pmatrix*}[r]
s+t & -1 & 0 & -1 \\
-1 & -s+t & -1 & 0 \\
0 & -1 & -s-t & -1 \\
-1 & 0 & -1 & s-t
\end{pmatrix*}
\end{align}
in which we chose the external potential in a particular form with $s,t\in\R$.
The system can be viewed as one in which we apply a combination of two potentials: a potential of the form $(s,-s,-s,s)$ with opposite values on 
the left-hand and right-hand side of the square and a 90° rotated one $(t,t,-t,-t)$ with opposite values on the top and bottom of the square.
The two-particle Hamiltonian for this system is given by
\begin{align}
H
= 4 I +
\begin{pmatrix*}[r]
2t & -1 & 0 & 0 & 1 & 0 \\
-1 & 0 & -1 & -1 & 0 & 1 \\
0 & -1 & 2s & 0 & -1 & 0 \\
0 & -1 & 0 & -2s & -1 & 0 \\
1 & 0 & -1 & -1 & 0 & -1 \\
0 & 1 & 0 & 0 & - 1 & - 2t
\label{Ham_uv}
\end{pmatrix*}
\end{align}
with respect to the basis
\be
\{ e_I \}_I = \{ e_1 \wedge e_2, e_1 \wedge e_3, e_1 \wedge e_4, e_2 \wedge e_3, e_2 \wedge e_4 , e_3 \wedge e_4 \}.
\ee
For all values $s,t\in\R$ the two normalized eigenstates and eigenvalues of $H$ corresponding to the two lowest energies 
are given by 
\begin{align}
\label{Psi_A}
\Psi_A = \; &\frac{1}{1+\alpha^2 (s)}(0, \alpha (s), \alpha^2 (s), 1, \alpha (s),0 ) \\& \text{with}\quad E_A=4- 2\sqrt{1+s^2} \quad\text{and} \nonumber\\
\label{Psi_B} 
\Psi_B = \; &\frac{1}{1+\alpha^2 (t)} (\alpha^2 (t), \alpha (t), 0,0,-\alpha(t), -1) \\& \text{with}\quad E_B= 4- 2\sqrt{1+t^2},\nonumber
\end{align}
where we defined $\alpha(t) = -t + \sqrt{1+t^2}$, and the corresponding densities are
\begin{align}
\rho_A &= \frac{1}{1+\alpha^2 (s)} ( \alpha^2 (s),1,1,\alpha^2 (s) ) \quad\quad\text{and} \\
\rho_B &= \frac{1}{1+\alpha^2 (t)} ( \alpha^2 (t),\alpha^2 (t),1,1 ) .
\end{align}
The densities can be rewritten as
\begin{align}
\rho_A &= \beta (s) ( 0,1,1,0 ) +
(1-\beta (s)) (1,0,0,1) 
\label{eq:rhoA}\quad\text{and} \\
\rho_B &= \beta (t) (0,0,1,1) + (1-\beta(t)) ( 1,1,0,0 ),
\label{eq:rhoB}
\end{align}
where we defined $\beta (t) = 1 /(1 + \alpha^2 (t)) \in (0,1)$.
In this case they do \emph{not} lie on the boundary of the density domain because $\alpha (t)$ does not become zero but approaches four extreme points in the density domain $\mathcal{P}_{4,2}$ (an octahedron) arbitrarily close. The densities, parameterized by $s$ and $t$, form the diagonals in the middle plane of the octahedron, a situation displayed in Figure~\ref{fig:octahedron}.
We further see that the densities $\rho_A,\rho_B$ clearly reflect the symmetry of the applied potentials. For example, for very large positive $s$ the function $\beta (s)$ approaches one and two particles localize on vertices $2$ and $3$ where the potential becomes large and negative, while for large negative $s$ the coefficient $\beta(s)$ approaches zero and the two particles localize on the opposite sites of the square. The states $\Psi_A$ and $\Psi_B$ have both $4$ non-zero coefficients so that we cannot have a zero or full occupancy on any vertex for both of them. However, the Odlyzko condition for these states is not satisfied, which opens up the possibility for a HK violation.
Indeed, if $|s| > |t|$ then $\Psi_A$ is a non-degenerate ground state which is insensitive to the 
value of $t$ as long as $|t|$ is smaller than $|s|$. For $|s|=|t|$ the two states become degenerate and if $|s|<|t|$ then $\Psi_B$ becomes the ground state which is independent of $s$ as long as 
$|s|$ remains smaller than $|t|$. This again clearly demonstrates the absence of unique $v$-representability.
Interestingly, this happens for a non-degenerate ground state and for a density that has neither zero nor full occupancy anywhere. It therefore presents a counterexample to 
a basic assumption in the
work of \citet{KohnPRL} who stated that any non-constant perturbation of a non-degenerate state on a lattice will always change that state.
It also constitutes a counterexample to the statement by \citet{ullrich2002} that ``the density of a non-degenerate ground state uniquely determines [the potential].''

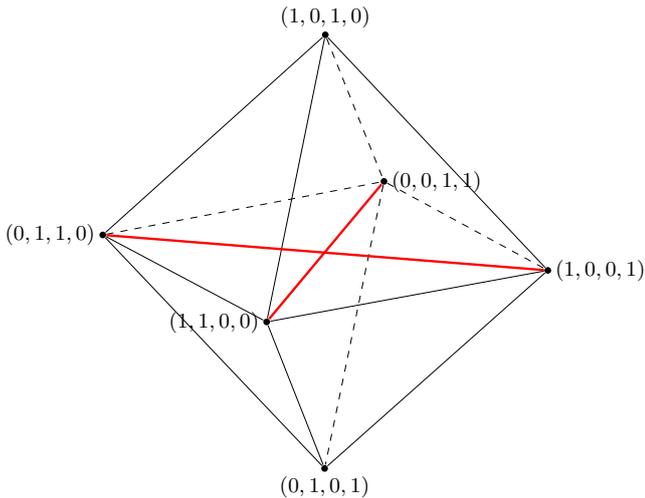
\begin{figure}[ht]
  \centering
  \resizebox{\columnwidth}{!}{%
  \begin{tikzpicture}
    \coordinate (C1) at (3.37,-0.13);
    \coordinate (C2) at (0.07,-3.12);
    \coordinate (C3) at (4.24,-2.32);
    \coordinate (C4) at (6.67,-3.65);
    \coordinate (C5) at (2.50,-4.42);
    \coordinate (C6) at (3.36,-6.61);
    \foreach \m in {1,...,6}
    	\node[circle,fill,inner sep=1.0pt] (N\m) at (C\m) {};
    \draw (N1) -- (N2);
    \draw[dashed] (N1) -- (N3);
    \draw (N1) -- (N4);
    \draw (N1) -- (N5);
    \draw[dashed] (N2) -- (N3);
    \draw[dashed] (N3) -- (N4);
    \draw (N2) -- (N5);
    \draw (N4) -- (N5);
    \draw (N2) -- (N6);
    \draw[dashed] (N3) -- (N6);
    \draw (N5) -- (N6);
    \draw (N4) -- (N6);
    \node [above] at (N1) {$(1,0,1,0)$};
    \node [left] at (N2) {$(0,1,1,0)$};
    \node [right] at (N4) {$(1,0,0,1)$};
    \node [right] at (N3) {$(0,0,1,1)$};
    \node [left] at (N5) {$(1,1,0,0)$};
    \node [below] at (N6) {$(0,1,0,1)$};
    \draw[-,color=red,line width=1pt] (N2) -- (N4);
    \draw[-,color=red,line width=1pt] (N3) -- (N5);
  \end{tikzpicture}
  }
\caption{Density domain $\mathcal{P}_{4,2}$ of the square-graph example with the non-uv densities $\rho_A,\rho_B$ from \eqref{eq:rhoA}-\eqref{eq:rhoB} in red.}
\label{fig:octahedron}
\end{figure}

Note that there is a qualitative difference in the range of potentials that allow for HK violation. The collection of potentials in the triangle graph that keep the ground state
unchanged corresponds to an unbounded domain in potential space since $t \in [0,\infty)$ in our example. On the other hand, in the square graph the corresponding 
domain in potential space is bounded, $t \in [-|s|, |s| \,]$ for a given $s$ in the example. As we will see, this is a general feature that we will prove as Corollary~\ref{cor:pot-bounded} later: 
If $0 < \rho_i <1$ then HK violation can only occur on a bounded domain in potential space. The potentials that lead to HK violations are also `rare', as we will argue in Section~\ref{sec:degen-non-uv}. But first we will establish that also in the density domain, non-uv examples are rare.

\subsection{Still almost all densities are uniquely \texorpdfstring{$v$}{v}-representable}
\label{sec:almost-all-uv-dens}

In Section \ref{sec:v-rep} we will prove that all densities with $0 < \rho_i < 1$ are ensemble $v$-representable.  The question remains whether they are \emph{uniquely} $v$-representable.
This is partially answered by the following theorem.

\begin{theorem}\label{th:non-uv-dens-measure-zero}
The set of non-uv densities is a subset of a finite union of linear subspaces of dimension $<M$ in $\R^M$ and thus has measure zero on the affine set of all physical densities $\mathcal{P}_{M,N}$ with $\sum_i\rho_i=N$.
\end{theorem}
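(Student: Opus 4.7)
The plan is to partition the non-uv densities according to the support pattern $C[\Psi]$ of an associated ground-state wave function and to realize each resulting class as a subset of a proper linear subspace of $\R^M$. The first step is to sharpen Corollary~\ref{cor:HK-necessary} to the equivalence that an eigenstate $\Psi$ of $H_0+V$ is uv if and only if $\rank\Upsilon[\Psi]=M$. Rewriting \eqref{u_zero} as the matrix equation $\Upsilon[\Psi]u=0$, full column rank forces $u=0$ and hence $v'-v$ to be constant, while any nonzero null-space vector $u$ is necessarily non-constant (since $\Upsilon[\Psi]\mathbf{1}=N\mathbf{1}_{|C[\Psi]|}\neq 0$) and corresponds to an alternative potential $v'=v-u-\tfrac{1}{N}(E-E')\mathbf{1}$ differing from $v$ by more than a constant. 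Combining this with Theorem~\ref{th:HK1}, if $\rho$ is non-uv then any ground state $\Psi$ producing $\rho$ must be a simultaneous ground state of two such offending Hamiltonians and is therefore non-uv itself, whence $\rank\Upsilon[\Psi]<M$.

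For each collection $C$ of ordered multi-indices from $X^N$ the plan is to introduce
\be
V_C = \linspan\{E_I : I \in C\} \subset \R^M,
\ee
so $\dim V_C = \rank\Upsilon[C]$. The decomposition $\rho = \sum_{I\in C[\Psi]}|\Psi_I|^2 E_I$ immediately places any ground-state density inside $V_{C[\Psi]}$, and by the first step a non-uv $\rho$ satisfies $\dim V_{C[\Psi]}<M$. Hence the set of non-uv densities is contained in
\be
\bigcup_{C\,:\,\rank\Upsilon[C]<M} V_C,
\ee
a finite union (the power set of multi-indices being finite) of proper linear subspaces of $\R^M$. Intersecting with the $(M-1)$-dimensional affine hyperplane $\{\rho\in\R^M:\sum_i\rho_i=N\}\supset\mathcal{P}_{M,N}$, each $V_C$ contributes an affine subspace of dimension at most $\dim V_C -1 \leq M-2$, and the resulting finite union has Lebesgue measure zero on $\mathcal{P}_{M,N}$.

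The delicate point is the rank equivalence used in the first paragraph: Corollary~\ref{cor:HK-necessary} is phrased merely as a sufficient condition via the cardinality bound $|C[\Psi]|>g(M,N)$, so one has to revisit its derivation from \eqref{u_zero} to extract the sharper rank characterization and to verify that every nonzero null-space vector indeed yields a legitimate non-uv witness (with the energy shift absorbed via the constant component of $u$). Once this equivalence is secured, the remainder of the argument is an essentially combinatorial finiteness statement together with a straightforward dimension count, and no further analysis of the specific physics of $H_0$ or $W$ is needed.
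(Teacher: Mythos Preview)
Your approach is essentially the paper's (row space of $\Upsilon[\Psi]$ is your $V_{C[\Psi]}$, and the key observation is that $\rho$ lies in this row space which is proper when $\Psi$ is non-uv), but there is a genuine gap: you tacitly assume that a non-uv density $\rho$ comes from a \emph{pure} ground state $\Psi$ when you write $\rho=\sum_{I\in C[\Psi]}|\Psi_I|^2 E_I$. A non-uv density may only be ensemble $v$-representable (the paper later exhibits exactly such densities on the cuboctahedron), so there need not be any single $\Psi\mapsto\rho$ to which your argument applies. The paper handles this by writing $\rho=\sum_n\lambda_n\rho_n$ for pure ground states $\Psi_n$, observing that each $\Psi_n$ is a simultaneous ground state of both Hamiltonians (a short variational argument, not quite Theorem~\ref{th:HK1} since the individual $\rho_n$ need not equal $\rho$), and then using that the \emph{same} null vector $u$ lies in every $\ker\Upsilon[\Psi_n]$. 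Hence all $\rho_n$, and therefore $\rho$, lie in the single proper subspace $u^\perp$. Your argument is easily repaired along these lines, but as written it does not cover the ensemble case.

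A minor point: your ``rank equivalence'' is stated as an if-and-only-if, but the converse direction (rank $<M$ implies non-uv) is not quite right: a nonzero $u\in\ker\Upsilon[\Psi]$ guarantees that $\Psi$ is an \emph{eigenstate} of $H_0+V+U$ with the same eigenvalue, not that it remains a \emph{ground} state. Fortunately you only use the forward implication (non-uv $\Rightarrow$ rank $<M$), which is correct, so this does not damage the proof---but the third paragraph's worry about ``securing the equivalence'' is misplaced.
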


\begin{proof}
First, let us recall the definition of the matrix $\Upsilon[\Psi]$ from Section~\ref{sec:Odlyzko}. We have $\Upsilon[\Psi]_{I,i} = 1$ if $i\in I$ and zero otherwise, where the row index $I \in C[\Psi]$, the set of multi-indices of all non-zero coefficients of $\Psi$ in expansion \eqref{Psi_expansion}.\\
Now, the density $\rho$ for a given (ensemble) state is a finite convex combination of pure states $\Psi_n$ with densities $\rho_n$ of the form
\be
    \rho = \sum_n \lambda_n \rho_n.
\ee
The pure-state density $\rho_n$ is given by
\be\label{eq:rho-from-T}
    \rho_{n,i} =   \sum_{I \ni i} |\Psi_{n,I} |^2 = \sum_{I} T[\Psi_n]_{i,I} |\Psi_{n,I} |^2,
\ee
which we rewrote as the multiplication with matrices $T[\Psi_n]$ where $T[\Psi_n]_{i,I}=1$ if $i\in I \in C[\Psi_n]$ and zero otherwise. However, this means that $T[\Psi_n]$ is exactly the transpose of $\Upsilon[\Psi_n]$, an observation that will become crucial in the course of this proof. If the given density $\rho$ is assumed non-uv then two different potentials $v$ and $v'$ differing by more than a constant not only give rise to the same density $\rho$ but also share the same (ensemble) ground state by the first part of the HK theorem (Theorem~\ref{th:HK1}). 
This means that all $\Psi_n$ in the ensemble are non-uv ground states of both $H(v)$ and $H(v')$ and fulfil \eqref{eq:u-condition} with the same non-zero $u=v-v'-\tfrac{1}{N}(E-E')$, which implies that the intersection $W=\bigcap_n \ker \Upsilon[\Psi_n]$ is non-zero and is a linear subspace of
the kernel of each of the $\Upsilon [\Psi_n]$. 
Since the image of the transpose of a matrix is the orthogonal complement of its kernel
(for a proof for general non-square matrices see \citet{prosalov-book}),
it follows that the image of $T[\Psi_n]$ is orthogonal to $W \subseteq \ker{\Upsilon[\Psi_n]}$ and hence $\rho_n \in W^\perp$ for all $n$ by \eqref{eq:rho-from-T}. It therefore follows that also
$\rho = \sum_n \lambda_n \rho_n \in W^\perp$ where $W^\perp$ has dimension of at most $M-1$
since $W$ has dimension of at least one. 
Every non-uv density matrix belongs to a class characterized by a finite collection of $\Upsilon[\Psi_n]$ matrices specified by the non-uv states in the ensemble and
as there is only a finite number of such collections
there exists only a finite number of subspaces of the type $W^\perp$ that we just discussed.
Finally, we arrive at our conclusion by noticing that the intersection of all those linear subspaces of dimension $<M$ that include non-uv densities $\rho$ with the affine subspace $\{\rho \in \R^M \mid \sum_i \rho_i=N \}$ that contains $\mathcal{P}_{M,N}$ will always lead to a measure-zero set.
%
\end{proof}

We will illustrate this result with the examples from the previous section. For the triangle graph the non-uv state $\Psi_A$ has
\be
    \Upsilon[\Psi_A] = \left( \begin{array}{ccc} 1 & 1 & 0 \\ 0 & 1 & 1 \end{array} \right)
\ee
with rank 2 and we saw in the proof above that the non-uv density that can result from this state must be included in the image of the transpose of $\Upsilon[\Psi_A]$,
\be\label{eq:T-image-1}
    \left( \begin{array}{cc} 1 & 0 \\ 1 & 1 \\ 0 & 1 \end{array} \right) \cdot \left( \begin{array}{c} |\Psi_{(1,2)}|^2 \\ |\Psi_{(2,3)}|^2 \end{array} \right) = \left( \begin{array}{c} \rho_1 \\ 1 \\ 1-\rho_1 \end{array} \right),
\ee
where we used the normalization $|\Psi_{(1,2)}|^2+|\Psi_{(2,3)}|^2=1$ for states that have $\Psi_{(1,3)}=0$.
The resulting density in \eqref{eq:T-image-1} has full occupancy $\rho_2=1$ and is thus on the border of $\mathcal{P}_{3,2}$, a set of measure zero.

This makes it interesting to check the second example of the square graph as well, where the non-uv density did not contain any zero or full occupancy. There,
\be
    \Upsilon[\Psi_A] = \left( \begin{array}{cccc} 1 & 0 & 1 & 0 \\ 1 & 0 & 0 & 1 \\ 0 & 1 & 1 & 0 \\ 0 & 1 & 0 & 1 \end{array} \right)
\ee
with rank 3
and thus for the image, using the
normalization $|\Psi_{(1,3)}|^2+|\Psi_{(1,4)}|^2+|\Psi_{(2,3)}|^2+|\Psi_{(2,4)}|^2 =1$, we find
\be
    \left( \begin{array}{cccc} 1 & 1 & 0 & 0 \\ 0 & 0 & 1 & 1 \\ 1 & 0 & 1 & 0 \\ 0 & 1 & 0 & 1 \end{array} \right) \cdot \left( \begin{array}{c} |\Psi_{(1,3)}|^2 \\ |\Psi_{(1,4)}|^2 \\ |\Psi_{(2,3)}|^2 \\ |\Psi_{(2,4)}|^2 \end{array} \right) = 
\begin{pmatrix*}
\rho_1 \\ 1-\rho_1 \\ \rho_3 \\ 1 -\rho_3 
\end{pmatrix*},
\ee
which represents a 2-dimensional subset of the 3-dimensional affine space $\{\rho\in \R^4 \mid \rho_1+\rho_2 + \rho_3+\rho_4=2 \}$ that does not lie on the border of $\mathcal{P}_{4,2}$, except when $\rho_1$ or $\rho_3 \in \{0,1\}$. The case $\rho_1=1-\rho_3$ with $\rho_3 = \beta (s)$ is just the one that corresponds to the known non-uv density $\rho_A$ in \eqref{eq:rhoA}.

\section{Positive results from the Rellich theorem}
\label{sec:main:Rellich}

\subsection{Openness of the set of non-degenerate, uniquely \texorpdfstring{$v$}{v}-representable ground-state densities}

Rellich's theorem \cite{rellich1937} for the analytic dependency of eigenvalues and eigenvectors on small perturbations is the following (see also the book of \citet[Ch.~1, §1]{rellich-book} and other textbooks \cite{Kato-book,Baumgaertel-book}).

\begin{theorem}[Rellich]\label{th:Rellich}
Let $A_{mn} (\lambda)$, $\lambda \in \R$, be a Hermitian matrix whose components $A_{mn}(\lambda)$ are analytic in $\lambda$ in a neighborhood of $\lambda=0$.
Then there exists a neighborhood of $\lambda=0$ in which the eigenvalues are analytic functions of $\lambda$ and in which we can furthermore choose an orthonormal basis of eigenvectors that are analytic in $\lambda$ as well.
\end{theorem}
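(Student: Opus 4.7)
The plan is to reduce the analyticity of the eigenvalues to an algebraic statement about the characteristic polynomial and then bootstrap to the eigenvectors via Riesz spectral projections. First I would extend $A(\lambda)$ holomorphically to a complex neighborhood of $\lambda=0$; the Hermiticity hypothesis is then used only on the real axis. The eigenvalues are the roots in $\mu$ of
\begin{equation}
    p(\lambda,\mu) = \det(\mu I - A(\lambda)),
\end{equation}
a polynomial in $\mu$ whose coefficients depend holomorphically on $\lambda$. Puiseux's theorem on algebraic functions then yields, after a suitable relabelling of branches, locally convergent expansions
\begin{equation}
    \mu_j(\lambda) = \sum_{k \ge k_0} c_{j,k}\, \lambda^{k/p_j}
\end{equation}
for integer ramification indices $p_j \ge 1$.

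The critical step is showing that every $p_j$ equals $1$. For small real $\lambda > 0$ the eigenvalues of $A(\lambda)$ are real, which compared term by term in the Puiseux expansion forces all $c_{j,k}$ to be real. Continuing the same branch to small $\lambda < 0$ through the upper half plane, the factor $\lambda^{k/p_j}$ picks up a phase $\e^{\i k\pi/p_j}$; since the continued value must again be a real eigenvalue of the Hermitian matrix $A(\lambda)$, we need $c_{j,k}\sin(k\pi/p_j)=0$ for every $k$, so $c_{j,k}$ vanishes unless $k$ is a multiple of $p_j$. The series then reduces to one in integer powers of $\lambda$, the ramification index is effectively $1$, and each eigenvalue is analytic in a neighborhood of $\lambda=0$. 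This is the core of Rellich's insight: in the non-Hermitian case genuine branching does occur, as the standard example $\bigl(\begin{smallmatrix} 0 & 1 \\ \lambda & 0 \end{smallmatrix}\bigr)$ with eigenvalues $\pm\sqrt{\lambda}$ demonstrates.

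Once the $\mu_j(\lambda)$ are known to be analytic, I would construct analytic spectral projections via the resolvent. For each distinct eigenvalue $\mu_j(0)$ pick a small positively oriented contour $\gamma_j$ in the complex $z$-plane that encloses $\mu_j(0)$ and no other eigenvalue of $A(0)$, and set
\begin{equation}
    P_j(\lambda) = -\frac{1}{2\pi \i} \oint_{\gamma_j} (A(\lambda) - zI)^{-1}\, \d z.
\end{equation}
For $\lambda$ in a sufficiently small real neighborhood of $0$, all eigenvalues emerging from the splitting of $\mu_j(0)$ remain inside $\gamma_j$, so $P_j(\lambda)$ is the spectral projector onto the corresponding invariant subspace; joint analyticity of the integrand in $(\lambda,z)$ yields analyticity of $P_j(\lambda)$, and its range has constant dimension $d_j$. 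Starting from an orthonormal basis $\{\psi_k^{(0)}\}_{k=1}^{d_j}$ of $\mathrm{ran}\, P_j(0)$, the vectors $\tilde\psi_k(\lambda) = P_j(\lambda)\psi_k^{(0)}$ form an analytic frame whose Gram matrix is analytic and equals the identity at $\lambda=0$; multiplying by its analytic inverse square root produces an analytic orthonormal eigenbasis of $\mathrm{ran}\, P_j(\lambda)$, and concatenating these over all $j$ yields the desired analytic orthonormal eigenbasis of the full space.

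The hard step is precisely the Hermiticity argument forcing $p_j = 1$: without it the characteristic roots can have genuine algebraic branch points, and ruling this out is the whole content of Rellich's theorem. Everything else --- the Puiseux expansion, the Riesz projection construction, and the analytic Gram--Schmidt --- is routine complex analysis and linear algebra, and the finite-dimensional setting frees us from the subtler operator-theoretic machinery needed in the unbounded case.
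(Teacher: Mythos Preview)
The paper does not actually prove Rellich's theorem; it merely states it and points to the literature (Rellich's book, Kato, Baumg\"artel). So there is no paper proof to compare against, and your proposal should be judged on its own merits.

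Your treatment of the eigenvalues is the standard one and is correct: the Puiseux expansion combined with reality on the real axis kills the fractional powers, and this is indeed the heart of the matter.

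The eigenvector construction, however, has a genuine gap. The Riesz projection $P_j(\lambda)$ is analytic and projects onto the $A(\lambda)$-invariant subspace associated with the \emph{cluster} of eigenvalues emanating from $\mu_j(0)$. Your analytic Gram--Schmidt on $P_j(\lambda)\psi_k^{(0)}$ produces an analytic orthonormal basis of $\mathrm{ran}\,P_j(\lambda)$, but there is no reason these vectors are eigenvectors of $A(\lambda)$: when a degenerate eigenvalue $\mu_j(0)$ of multiplicity $d_j>1$ splits for $\lambda\neq 0$, the subspace $\mathrm{ran}\,P_j(\lambda)$ contains several distinct eigenspaces, and projecting fixed vectors followed by orthonormalisation does not respect this finer decomposition. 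What is missing is the \emph{reduction process}: use your analytic orthonormal frame to compress $A(\lambda)$ to a $d_j\times d_j$ Hermitian-analytic matrix on $\mathrm{ran}\,P_j(\lambda)$ and recurse; when the block does not split one factors out a power of $\lambda$ as in $A(\lambda)=\mu_j(0)I+\lambda B(\lambda)$ and recurses on $B$. Termination of this recursion (Kato's ``total multiplicity'' argument, or Rellich's original induction) is precisely the non-routine part of the eigenvector statement, and your sketch skips it entirely.
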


The indicated ``choice'' refers to a few circumstances which are related to degeneracy and non-uniqueness.
In the case that the eigenvectors at $\lambda=0$ are degenerate, we cannot freely choose them at $\lambda=0$, but the
requirement of analyticity picks out specific vectors from the unperturbed degenerate manifold (see \citet[Ch.~1, §1]{rellich-book} for an insightful discussion). The correct choice is obtained from the eigenvectors at finite $\lambda$ by letting $\lambda \rightarrow 0$.
If the eigenvalue is non-degenerate, the choice of the eigenvector is unique up to a phase factor, which we can choose to be any analytic function. 
Moreover, if we insist on ordering the states by their eigenvalues then due to possible eigenvalue crossings the eigenvalues as functions of $\lambda$ may display a non-differentiable kink and the eigenstates may jump discontinuously. The latter is illustrated by states \eqref{Psi_A} and \eqref{Psi_B} in our square-graph example, which suddenly switch their role as ground states at $|s|=|t|$ where we take the parameter in Rellich's theorem to be $\lambda=t$ for a fixed value of $s$.
However, since these crossings happen at points with a finite value of $\lambda$, an order-preserving analytic choice of eigenvalues is always possible for sufficiently small $\lambda$.

Let us give an application of Rellich's theorem that serves the topological study of the density-potential mapping. We will establish that the set of uv-ground-state densities coming from a non-degenerate ground state forms an open set. 
The proof corrects and simplifies a proof by \citet{KohnPRL}, who incorrectly assumed that any non-degenerate ground state must be uniquely $v$-representable, a claim refuted in Section~\ref{sec:ex-non-unique} above by explicit example.

\begin{theorem}\label{th:U-open}
The set $\mathcal{U}_{M,N} \subseteq \mathcal{P}_{M,N}$ of uv densities from non-degenerate ground states is an open set, i.e., if $\rho \in \mathcal{U}_{M,N}$ then there is an $\eps > 0$ such that $\rho' \in \mathcal{U}_{M,N}$ whenever $\| \rho'  - \rho \| < \eps$.
\end{theorem}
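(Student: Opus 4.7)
The plan is to combine Rellich's theorem with the inverse function theorem. Fix $\rho\in\mathcal{U}_{M,N}$ and choose a potential $v$ whose (non-degenerate) ground state $\Psi=\Psi(v)$ is uv and produces density $\rho$. I will show (i) that in some open neighborhood $\mathcal{N}$ of $v$ the ground state remains non-degenerate and uv, and (ii) that the density map $v'\mapsto\rho(v')$ sends $\mathcal{N}$ onto an open neighborhood of $\rho$ in the affine hyperplane $\{\rho\in\R^M:\sum_i\rho_i=N\}$. Taken together these imply that $\mathcal{U}_{M,N}$ contains a full neighborhood of $\rho$ inside $\mathcal{P}_{M,N}$, which is the desired conclusion.

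For (i), apply Theorem~\ref{th:Rellich} along any line $v+\lambda\delta v$ (or, equivalently, use the Kato contour-integral representation of the ground-state projector $P(v')=\frac{1}{2\pi\i}\oint(z-H(v'))^{-1}\d z$ to obtain joint real-analyticity of $\Psi$ in $v'$ near $v$). Non-degeneracy is an open condition because the spectral gap $E_1(v')-E(v')$ is continuous and strictly positive at $v$. For uv-ness, analyticity of each expansion coefficient $\Psi_I(v')$ combined with $\Psi_I(v)\neq 0$ for $I\in C[\Psi]$ yields an open set on which $\Psi_I(v')\neq 0$; intersecting over the finitely many such $I$ gives a neighborhood on which $C[\Psi(v')]\supseteq C[\Psi]$. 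The matrix $\Upsilon[\Psi(v')]$ therefore contains all the rows of $\Upsilon[\Psi]$, and since uv-ness of $\Psi$ is equivalent to $\rank\Upsilon[\Psi]=M$ (compare the discussion following Corollary~\ref{cor:HK-necessary}), we also have $\rank\Upsilon[\Psi(v')]=M$, so $\Psi(v')$ is uv.

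For (ii), the Hellmann--Feynman theorem at a non-degenerate ground state gives $\rho_i(v')=\partial E(v')/\partial v'_i$, so the Jacobian $\chi_{ij}=\partial\rho_i/\partial v_j$ is the Hessian of the concave energy $E$: a symmetric negative semidefinite matrix which annihilates the constant vector $\mathbf{1}=(1,\ldots,1)$ (because constant shifts in $v$ leave $\rho$ invariant). Standard second-order Rayleigh--Schr\"odinger perturbation gives
\begin{equation}
u^{\mathrm{T}}\chi u = -2\sum_{n>0}\frac{|\langle \Psi_n,U\Psi\rangle|^2}{E_n-E}, \qquad U=\sum_i u_i \hat{a}^\dagger_i\hat{a}_i,
\end{equation}
which vanishes iff $U\Psi\in\R\Psi$, i.e.\ $U\Psi=c\Psi$ for some $c\in\R$. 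Expanding $\Psi$ as in~\eqref{Psi_expansion}, this last condition reads $\sum_{i\in I}u_i=c$ for every $I\in C[\Psi]$; equivalently $\tilde u:=u-(c/N)\mathbf{1}\in\ker\Upsilon[\Psi]$. Since $\Psi$ is uv we have $\ker\Upsilon[\Psi]=\{0\}$, hence $\ker\chi=\R\mathbf{1}$, and $\chi$ descends to an invertible linear map on the $(M-1)$-dimensional quotient $\R^M/\R\mathbf{1}$, whose image is exactly the tangent space of the affine hyperplane $\{\sum_i\rho_i=N\}$. The inverse function theorem then shows that the density map is a local diffeomorphism at $v$, sending a neighborhood of $v$ to a neighborhood of $\rho$. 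Any uv density has $0<\rho_i<1$ for all $i$ (Section~\ref{sec:uv}), so $\rho$ lies in $\mathcal{P}_{M,N}^+$ and, after possibly shrinking the neighborhood, remains inside $\mathcal{P}_{M,N}$.

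The genuinely non-routine step is the identification $\ker\chi=\R\mathbf{1}$, which ties the analytic density-density response of the ground state to the combinatorial $(0,1)$-matrix $\Upsilon[\Psi]$ underlying the Odlyzko analysis; once this connection is made, the combination of Rellich's theorem, the Hellmann--Feynman identity, and the inverse function theorem packages together into openness.
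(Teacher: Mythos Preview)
Your argument is correct and follows the same skeleton as the paper's proof (Rellich analyticity, non-degeneracy of the density--potential Jacobian deduced from uv-ness, then the inverse function theorem), but your packaging differs in two places that are worth remarking on. First, your part (i)---persistence of uv-ness via $C[\Psi(v')]\supseteq C[\Psi]$ and hence $\rank\Upsilon[\Psi(v')]\geq\rank\Upsilon[\Psi]=M$---is not in the paper; the paper establishes negative definiteness of $J$ only at the base point $v$ and then jumps from the inverse function theorem to ``each density in this neighbourhood is uniquely $v$-representable,'' leaving implicit the step that the nearby states are themselves uv (which would follow from continuity of $J$, but your rank argument is cleaner and more direct). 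Second, for the Jacobian you use Hellmann--Feynman plus the second-order sum-over-states formula and identify $\ker\chi$ with $\ker\Upsilon[\Psi]+\R\mathbf{1}$, whereas the paper differentiates the eigenvalue equation once and uses Rayleigh--Ritz on $(H-E_0)\,\d\Psi/\d\lambda$; the two computations are equivalent, but yours makes the link to the Odlyzko matrix $\Upsilon[\Psi]$ explicit.

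One small point to tighten: the equivalence ``$\Psi$ uv $\Leftrightarrow$ $\rank\Upsilon[\Psi]=M$'' that you invoke is only the forward implication in the paper (Corollary~\ref{cor:HK-necessary} and its discussion). The reverse---that a non-degenerate ground state with $\rank\Upsilon[\Psi]<M$ is non-uv---needs the observation that if $U\Psi=0$ with $u$ non-constant then $\Psi$ remains the \emph{ground} state of $H_0+V+\eps U$ for small $\eps$ by continuity of the spectral gap; this is easy but should be stated. The paper's own proof glosses over the same point when it writes ``$U\Psi_0=0$ would imply that $v$ as well as $v+u$ represent this state.''
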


\begin{proof}
As always, we consider a graph with $M$ vertices. 
Since all potentials are only fixed up to an additive constant, we can always choose $v_M=0$, leaving $v_i$ for $i=1,\ldots,M-1$ as free variables.
Furthermore, since $\rho_M=N- \sum_{i <M} \rho_i$, we only need to consider $\rho_i$ for $i=1, \ldots,M-1$ as well.
We choose a density $\rho$ which belongs to a non-degenerate uv-ground state with eigenvalue $E_0$ of $H(v)$ and apply a small perturbation $\lambda u$ for a given non-zero potential $u$ with $u_M=0$ again, i.e.,
we consider the external potential $v+\lambda u$.
According to Rellich's theorem (Theorem~\ref{th:Rellich}) we can choose the eigenstate $\Psi(\lambda)$ of the perturbed system in an analytic fashion such that it connects smoothly to 
a ground state of the unperturbed system at $\lambda=0$ (see the remarks after Theorem \ref{th:Rellich}).
Since the ground state was presumed to be non-degenerate,
this state is unique up to a phase factor and will yield the prescribed density $\rho$.
Moreover, for small enough $\lambda$
there will be no eigenvalue crossing such that $\Psi (\lambda)$ remains a non-degenerate ground state itself.
Our Hamiltonian is then given by
\be
( H_0 + V + \lambda U ) \Psi (\lambda) = E (\lambda) \Psi (\lambda)
\label{H_lambda}
\ee
and we set $\Psi_0=\Psi(0)$ and $E_0=E(0)$. The matrix $U$ (and similarly $V$) is diagonal in the $\{e_I\}_I$ basis with
diagonal elements $U_I = \sum_{i \in I} u_i$. Now since $\Psi_0$ is a uv-state we cannot have $U\Psi_0=0$ since that would imply that $v$ and $v+u$ represent this state. Consequently, $U\Psi_0 \neq 0$ and by continuity from Rellich's theorem we also have
\be
U \Psi (\lambda) \neq 0 
\label{nonzero}
\ee
for $\lambda$ small enough.
For later reference we further define
\be
\langle U\rangle (\lambda) = \langle \Psi (\lambda), U \Psi (\lambda) \rangle = \sum_{i=1}^{M-1} u_i \rho_i (v+\lambda u),
\label{Udef}
\ee
where $\rho(v+\lambda u)$ is the density corresponding to $\Psi(\lambda)$.
From (\ref{H_lambda}) it follows by differentiation that
\be
[ H - E_0 ] \frac{\d\Psi}{\d\lambda}\Big|_{\lambda=0} = \left[ \left.\frac{\d E}{\d \lambda}\right|_{\lambda=0}  - U\right] \Psi_0
= U^\prime \Psi_0 \neq 0,
\label{dPdl}
\ee
where we defined $H=H(v)=H_0+V$ and furthermore $U'= c\, I -U$ or equivalently $u_i'=c/N-u_i$ with constant $c=\d E/\d\lambda|_{\lambda=0}$. The last inequality in \eqref{dPdl} follows since $\Psi_0$ is 
an uv-state and the only way for $U' \Psi_0$ to vanish is by the same argument as above that $u_i'=0$, which implies that
$u_i=c/N$ is a constant. However, since we put $u_M=0$ this can only be the case of $u=0$
which contradicts our choice of a non-zero $u$.\\
Since the right hand side of \eqref{dPdl} does not vanish, it follows that $\d\Psi/\d\lambda|_{\lambda=0}$ cannot be a ground state of $H$. We now take
the inner product with $\d\Psi/\d\lambda|_{\lambda=0}$ on the first equation of \eqref{dPdl} and then add the complex conjugate of the equation obtained. This yields
\begin{align}
\begin{aligned}
0 &< 2 \left\langle  \frac{\d\Psi}{\d\lambda}\Big|_{\lambda=0}, [ H - E_0 ] \frac{\d\Psi}{\d\lambda}\Big|_{\lambda=0} \right\rangle \\ &= -\left\langle \frac{\d\Psi}{\d\lambda}\Big|_{\lambda=0} , U \Psi_0 \right\rangle -
\left\langle  \Psi_0 , U \frac{\d\Psi}{\d\lambda}\Big|_{\lambda=0} \right\rangle 
\\&\quad + \frac{\d E}{\d \lambda}\Big|_{\lambda=0} \cdot \frac{\d}{\d\lambda} \| \Psi (\lambda) \|^2 \Big|_{\lambda=0} ,
\end{aligned}
\end{align}
where for the inequality we used the Rayleigh--Ritz principle and the fact that
$\d\Psi/\d\lambda|_{\lambda=0}$ is not a ground state of $H$.
Moreover, the last term on the right hand side vanishes since $\| \Psi (\lambda ) \| =1$
and we therefore find, using the definition (\ref{Udef}), that
\be
\frac{\d\langle U\rangle}{\d\lambda} \Big|_{\lambda=0} = \left\langle \frac{\d\Psi}{\d\lambda} \Big|_{\lambda=0} , U \Psi_0 \right\rangle + \left\langle  \Psi_0 , U \frac{\d\Psi}{\d\lambda} \Big|_{\lambda=0} \right\rangle  < 0.
\label{U_pos}
\ee
On the other hand
\be\begin{aligned}
\frac{\d\langle U\rangle}{\d\lambda}\Big|_{\lambda=0} &= \sum_{i=1}^{M-1} u_i \frac{\d\rho_i }{\d\lambda} (v+\lambda u) \Big|_{\lambda=0} \\&=
 \sum_{i,j=1}^{M-1} u_i \frac{\partial \rho_i}{\partial v_j} (v) u_j = \sum_{i,j=1}^{M-1} u_i  J_{ij} u_j,
\end{aligned}\ee
where we defined $J_{ij} = \partial \rho_i/\partial v_j (v)$ to be the Jacobian matrix of partial derivatives of $\rho_i$ with respect to $v_j$ evaluated at $v$. Since
\eqref{U_pos} establishes that this matrix is negative definite it follows
that $\det(J) \neq 0$. The inverse function theorem then guarantees that in a small enough neighbourhood 
of $\rho$ the inverse one-to-one mapping from densities to potentials exists. This means that each density in this neighbourhood is uniquely $v$-representable and belongs to a non-degenerate ground state, which proves the theorem.
\end{proof}

The given examples suggest that the statement of openness could be extended to all uv densities, not only those coming from non-degenerate states. Since we have no strict proof for this at the moment, it should be left standing as a conjecture.

\subsection{Degeneracy generates non-unique \texorpdfstring{$v$}{v}-representability}
\label{sec:degen-non-uv}

In this section we examine (non-)unique \texorpdfstring{$v$}{v}-representability from the side of the potentials. The questions put forward are if potentials that lead to non-uv densities (non-uv potentials) are rare as well and how those potentials are located within the space of all potentials. In this sense this section is complementary to Section~\ref{sec:almost-all-uv-dens}, where almost all densities were proven to be uv. Contrary to the other results in this work, the statements in this section will not have the character of hard theorems, but be more in the style of an explorative outlook.

For this, we start from any potential $v_0$ that results in a uv ground state without any degeneracy, a situation that seems to be by far the most prevalent one, judging from the examples given in Section~\ref{sec:ex-non-unique}. Since the ground state for $v_0$ is uv, it has just the right coefficients in the usual expansion \eqref{Psi_expansion} non-zero to allow for a unique zero solution for \eqref{u_zero}. Now, if we follow an arbitrary path $v(\lambda)\in\R^M$ parameterized by $\lambda$ starting from $v(0)=v_0$, we know by invoking Rellich's theorem (Theorem~\ref{th:Rellich}) that the coefficients of the ground state can be chosen analytic in $\lambda$---as long as no level crossings occur, degeneracy arises, and a previously excited state takes over the role as the new ground state. However, analyticity in $\lambda$ means that the coefficients cannot be zero over any $\lambda$-interval if they have been non-zero before, so the ground state for $v(\lambda)$ will be uv almost everywhere along the path as long as we do not hit any degeneracy region. However, if we do, a new ground state that is not uv can arise where the number of non-zero coefficients is below the Odlyzko condition. Then, along a different path in potential space, these coefficients can (analytically) stay equal to zero---until we hit another degeneracy region. Therefore, sets in potential space that lead to degeneracies form the boundaries of sets of possible non-uv potentials. Now, the work of \citet{ullrich2002} includes the interesting observation that any potential manifold that leads to a $g$-fold degeneracy of the ground state must be of dimensionality $D_g=M-1-\tfrac{1}{2}(g-1)(g+2)$ (where the free additive constant for the potential has already been fixed). Now clearly the lowest possible degeneracy is $g=2$ and gives the highest possible dimension $D_2=M-3$. If the boundary of a set of non-uv potentials has dimension $M-3$ or less then this set itself can only have maximal dimension $D_2+1=M-2$. This tells us that in the space of all possible potentials with a fixed additive constant, which has dimension $M-1$ itself, the set of non-uv potentials actually has measure zero.

This picture of regions of non-uv potentials emanating from potentials that have degenerate ground states will be illustrated with the examples from Section~\ref{sec:ex-non-unique}. In the triangle case we found the single potential $v=0$ to create a two-fold degeneracy of the ground state. Since $g=2$ and $M=3$, we have $D_g=0$, the dimensionality of a point. Then the potential set $\{(0,-t,0)\mid t>0\}$ of dimension $D_g+1=1$ was noted to lead to non-uv densities, by symmetry the same is true for $\{(-t,0,0)\mid t>0\}$ and $\{(0,0,-t)\mid t>0\}$. Hence, in potential space, we have three rays all originating in $v=0$, the point of degeneracy, as non-uv potentials. The corresponding three densities all lie on the border of the density domain $\mathcal{P}_{3,2}$ since they have full occupancy on one vertex. The situation will be discussed in much more detail in Section~\ref{sec:triangle-v-rep}.

The other example was the square graph where we found the condition $|s|=|t|$ for the potential in \eqref{eq:square-h} that leads to degeneracy, again with multiplicity $g=2$. Since now $M=4$ we have $D_g=1$, the dimension of the two lines $s=t$ and $s=-t$ in potential space. For a fixed $s$ the range of potentials connecting two degeneracy points $t=\pm s$, parameterized by $t\in [-|s|,|s|]$, is then non-uv, while the same holds conversely for a fixed $t$. This situation is shown in Figure~\ref{fig:square-v-plot}. Hence, the non-uv region in potential space indeed has dimension $D_g+1=2$ and is spanned by both parameters $s,t$, just the resulting non-uv density is not always the same and switches between $\rho_A$ for $|s|>|t|$ and $\rho_B$ for $|s|<|t|$ while both densities (and many more) are possible ground-state densities in the degeneracy case at $|s|=|t|$. Note that here the $\rho_A$ still depends on $s$ while being insensitive to $t$, and $\rho_B$ reacts the other way around. For $s\to\infty$, $\rho_A\to (0,1,1,0)$, and for $s\to -\infty$, $\rho_A\to (1,0,0,1)$. On the other hand, for $t\to\infty$, $\rho_B\to (0,0,1,1)$, and for $t\to -\infty$, $\rho_B\to (1,1,0,0)$. Therefore, with a large potential on the square graph, one approaches extreme points instead of instantly hitting the border of the density domain like in the triangle case. This situation already found a graphical representation in Figure~\ref{fig:octahedron}, where the $(s,t)$-plane in potential space gets mapped to the middle plane of the octahedron.

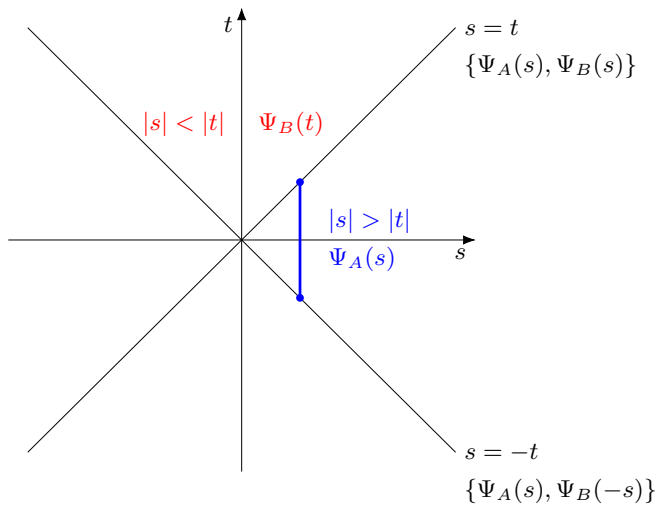
\begin{figure}[ht]
	\centering
	\resizebox{\columnwidth}{!}{%
	\begin{tikzpicture}[every node/.style={scale=2}] 
	    \draw[-{Latex[length=3mm]}] (-6,0) -- (6,0) node[below left] {$s$};
        \draw[-{Latex[length=3mm]}] (0,-6) -- (0,6) node[below left] {$t$};
        \draw[-] (-5.5,-5.5) -- (5.5,5.5) node[right] {$s=t$};
        \draw[-] (-5.5,5.5) -- (5.5,-5.5) node[right] {$s=-t$};
        \draw[pattern=north west lines, pattern color=blue, opacity=0.0, fill opacity=0.2] (0,0) -- (5,5) -- (5,-5) -- (0,0);
        \draw[pattern=north west lines, pattern color=blue, opacity=0.0, fill opacity=0.2] (0,0) -- (-5,5) -- (-5,-5) -- (0,0);
        \draw[pattern=north west lines, pattern color=red, opacity=0.0, fill opacity=0.2] (0,0) -- (5,5) -- (-5,5) -- (0,0);
        \draw[pattern=north west lines, pattern color=red, opacity=0.0, fill opacity=0.2] (0,0) -- (5,-5) -- (-5,-5) -- (0,0);
        \node[right,color=blue] at (2,0.5) {$|s|>|t|$};
        \node[right,color=blue] at (2,-0.5) {$\Psi_A(s)$};
        \node[left,color=red] at (-0.2,3) {$|s|<|t|$};
        \node[right,color=red] at (0.2,3) {$\Psi_B(t)$};
        \node[right] at (5.5,4.5) {$\{\Psi_A(s),\Psi_B(s)\}$};
        \node[right] at (5.5,-6.5) {$\{\Psi_A(s),\Psi_B(-s)\}$};
        \draw[-,color=blue,line width=2pt] (1.5,-1.5) -- (1.5,1.5);
        \node at (1.5,-1.5) [circle,fill,inner sep=1.0pt,color=blue] {};
        \node at (1.5,1.5) [circle,fill,inner sep=1.0pt,color=blue] {};
	\end{tikzpicture}
	}
	\caption{The potential space for the square graph example from \eqref{eq:square-h} in the $(s,t)$-parameter space. Along the diagonals 2-fold degeneracy arises, while the hatched regions do not have degeneracy but are non-uv. A non-uv potential set that gives a fixed density $\rho_A(s)$ is displayed as a blue line that links two degeneracy regions.}
	\label{fig:square-v-plot}
\end{figure}

Following the statement of \citet{ullrich2002} that potentials which create degeneracies are rare, we have argued in this section that also those that lead to non-uv ground states are uncommon. By consistently adapting their argument to the graph setting one should be able to make this statement more rigorous and give a full picture of the topology of the density-potential mapping.

\section{Special results from the Perron--Frobenius theorem}
\label{sec:main:PF}

\subsection{Positivity of the non-interacting ground-state density}
\label{sec:pos-dens}

In Section~\ref{sec:uv} we noted that one cause of HK violation is the possibility of having zero density $\rho_i=0$ on a vertex.
In this section we will show that this situation is excluded for the non-interacting case and small perturbations.
These results are based on the Perron--Frobenius (PF) and Rellich theorems. Nevertheless, having a strictly positive density does not necessarily help with the HK theorem, as it does in the continuum case \cite{Lammert2018}, because still a critical number of wave-function components above the Odlyzko condition might be zero.

We start by explaining the basic setting for the PF theorem. Therein one considers a real symmetric matrix $A_{ij}=A_{ji} \in \mathbb{R}$
with $A_{ij} \geq 0$. To this matrix we can assign a graph $G(A)$ by the adjacency relation $i \sim j$ whenever $A_{ij} \neq 0$ for $i \neq j$ and $i \not\sim j$ otherwise. If the graph is connected then for any choice of two vertices $i$ and $j$ there is a positive integer $n$ such that 
\be
(A^n)_{ij} > 0,
\ee
which graphically means that there is a path of length $n$, i.e., having $n$ edges, connecting the vertices $i$ and $j$.
A matrix with this property is called {\em irreducible}. For such matrices we have the following theorem about the eigenvector with maximal eigenvalue \cite{minc-book,meyer-book,prosalov-book} that has already been employed by \citet{englisch1983vrep-PF} and \citet{CCR1985} for the case of bosonic lattice systems.

\begin{theorem}[Perron--Frobenius]\label{th:PF}
Let a real, non-negative $M \times M$ matrix $A$ be symmetric and irreducible. Then the following statements hold:

\begin{enumerate}
\item $A$ has a non-degenerate, maximal eigenvalue $\lambda_{\mathrm{max}} \in \mathbb{R}$ such that all other eigenvalues are strictly smaller.
\item The eigenspace belonging to $\lambda_{\mathrm{max}}$ is one-dimensional and the corresponding eigenvector 
$x=(x_1, \ldots,x_M)$ can be chosen in such a way that all its components are strictly positive, i.e., $x_j >0$ for $j=1, \ldots, M$.
\item There are no other non-negative eigenvectors corresponding to the lower eigenvalues, i.e., all other eigenvectors must have at least one negative or complex component.
\end{enumerate}
\end{theorem}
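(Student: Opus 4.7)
The plan is to prove all three statements together via the Rayleigh--Ritz characterization of the top eigenvalue combined with a propagation-along-paths argument in which irreducibility enters essentially. Since $A$ is real symmetric, $\lmax$ is real and equals $\max_{\|x\|=1}\langle x,Ax\rangle$, attained on the unit sphere of $\R^M$.

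First, I would show that a componentwise non-negative maximizer exists. Given any unit maximizer $x$, the vector of entrywise absolute values $|x|=(|x_1|,\ldots,|x_M|)$ has unit norm and satisfies $\langle |x|,A|x|\rangle=\sum_{i,j}A_{ij}|x_i||x_j|\geq\sum_{i,j}A_{ij}x_ix_j=\langle x,Ax\rangle$, since every $A_{ij}\geq 0$. Hence $|x|$ is itself a maximizer and therefore an eigenvector of $A$ for $\lmax$.

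Second, and this is the main step, I would upgrade $|x|\geq 0$ to strict componentwise positivity. If $|x|_j=0$ at some vertex $j$, the $j$-th row of the eigenvalue equation reads $0=\lmax|x|_j=\sum_k A_{jk}|x|_k$, and since every summand is non-negative each must vanish. Thus $|x|_k=0$ whenever $k\sim j$ in $G(A)$. Iterating this propagation along edges and using connectedness of $G(A)$ forces $|x|_k=0$ at every vertex, contradicting $\||x|\|=1$. This connectedness step is where I expect the main subtlety to lie: it is precisely the point where irreducibility cannot be dropped, and one must formulate the iteration cleanly as an induction on path length from $j$ to an arbitrary target vertex. Call the resulting strictly positive eigenvector $x>0$; this already gives statement (2).

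Finally, the remaining claims follow quickly. For the non-degeneracy in (1), if two linearly independent eigenvectors for $\lmax$ existed, both could be chosen strictly positive by the previous step; then the real combination $x_1-c\, x_2$ with $c=\min_i x_{1,i}/x_{2,i}$ would be a non-zero eigenvector for $\lmax$ with at least one vanishing component, reproducing the previous contradiction. That $\lmax$ is strictly larger than every other eigenvalue then follows because the modulus argument of the first step shows any other real eigenvalue $\lambda$ satisfies $|\lambda|\leq\lmax$, and equality combined with a non-negative eigenvector would collide with non-degeneracy. For (3), any eigenvector $y$ with eigenvalue $\lambda\neq\lmax$ is orthogonal to $x$ by symmetry of $A$, so $\sum_i x_i y_i=0$; since $x_i>0$ everywhere, the hypothesis $y_i\geq 0$ for all $i$ would force $y=0$, ruling out the existence of such a non-negative eigenvector below $\lmax$.
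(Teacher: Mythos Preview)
The paper does not prove this theorem; it is stated with references to standard linear-algebra texts (Minc, Meyer, Prasolov) and then applied. So there is no paper proof to compare against, only your argument to assess.

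Your route via the Rayleigh quotient, the absolute-value trick, and propagation along edges is the standard one and is essentially correct. Two small points are worth tightening. First, in the non-degeneracy step you claim that two linearly independent eigenvectors for $\lmax$ ``could be chosen strictly positive by the previous step.'' The previous step only guarantees that $|x|$ is strictly positive for any real maximizer $x$; taking absolute values of two independent vectors can collapse them to the same ray. The clean fix is: once you have one strictly positive eigenvector $x_1$, pick any linearly independent real eigenvector $y$ for $\lmax$ and set $x_2=x_1+\eps y$ for small $\eps>0$, which is strictly positive and still independent of $x_1$. Your combination $x_1-c\,x_2$ with $c=\min_i x_{1,i}/x_{2,i}$ then works as written. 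Second, the sentence about ``the modulus argument'' showing $|\lambda|\leq\lmax$ is unnecessary here: since $A$ is symmetric all eigenvalues are real and $\lmax$ is by definition the largest, so ``all other eigenvalues are strictly smaller'' is exactly the non-degeneracy you just established, nothing further is required. Your argument for (3) via orthogonality to the Perron vector is correct and is the cleanest way to get that statement in the symmetric case.
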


This theorem has an immediate application to graph Hamiltonians. We start by considering a real, one-particle Hamiltonian $h_{ij} = h_{ji} \in \mathbb{R}$
corresponding to a connected graph $G(h)$. 
If $h_{ij} \leq 0$ for $i \neq j$ (as it is the case for the negative graph Laplacian) then by a suitable shift of the diagonal of $h$ by adding $-c \,I$ with $c > 0$ we can assure that
$A_{ij} = - h_{ij} + c\, \delta_{ij} \geq 0$ and $A$ thus satisfies the requirements of the PF theorem. The eigenvector with maximal eigenvalue of $A$ then exactly corresponds to the ground-state eigenvector of $h$ that has minimal energy. If we normalize this vector and call it $\phi_0$, we can guarantee $\phi_{0,i} > 0$ and thus have of course also a positive density for the one-particle ground state.
The theorem can, however, not in general be applied to the $N$-particle Hamiltonians such as in the examples of Section~\ref{sec:graph-examples}, because the corresponding Hamiltonians also include $+1$ in the off-diagonal entries.
Yet, in the case of \emph{non-interacting} fermions on the graph $G(h)$ any 
$N$-particle ground state
is of the form
\be
\Psi= \phi_0 \wedge \phi_1 \wedge \ldots \wedge \phi_{N-1},
\ee
where $\phi_j$ for $j=1,\ldots, N-1$ are other orthonormal eigenstates of $h$ chosen according to the aufbau principle. If the highest excited state $\phi_{N-1}$ in $\Psi$ above belongs to a degenerate multiplet then the choice according to the aufbau principle is not unique and we also get degeneracy in the $N$-particle ground state. However, every such choice always contains the non-degenerate, strictly positive state $\phi_0$ and therefore for any pure-state ground-state density $\rho$ we surely have
\be\label{eq:rho-from-Slater}
\rho_i = \phi_{0,i}^2 + | \phi_{1,i} |^2+ \ldots +  | \phi_{N-1,i} |^2 >0.
\ee
We therefore conclude that any (ensemble) ground-state 
density for $N$ non-interacting electrons cannot vanish on any of the vertices as a consequence of the Perron--Frobenius theorem applied on the one-particle Hamiltonian.
We collect our results in the following corollary.

\begin{corollary}\label{cor:PF-one-particle-H}
If $h$ is a real, symmetric one-particle Hamiltonian corresponding to a connected graph $G(h)$ and $h_{ij} \leq 0$ for $i \neq j$ then the ground state $\phi_0$ of $h$
is non-degenerate and can be chosen to be strictly positive on every vertex of $G(h)$, i.e., $\phi_{0,i} >0$ for all $i\in X$. Moreover, for any ground-state density $\rho$ of $N$ non-interacting 
fermions on the graph $G(h)$ we have $\rho_i >0$ for all $i \in X$.
\end{corollary}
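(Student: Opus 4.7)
The plan is to reduce everything to a single application of the Perron--Frobenius theorem (Theorem~\ref{th:PF}) on the one-particle Hamiltonian and then to propagate the positivity of the Perron vector through the Slater-determinant structure of the non-interacting $N$-particle ground state.

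First, I would construct from $h$ an auxiliary matrix that meets the hypotheses of Theorem~\ref{th:PF}. Pick $c\in\R$ large enough that $-h_{ii}+c\geq 0$ for every $i$, and set $A=-h+cI$. Then $A_{ij}=-h_{ij}\geq 0$ for $i\neq j$ by assumption, and $A$ is real and symmetric. Crucially, $A_{ij}\neq 0$ precisely when $h_{ij}\neq 0$ for $i\neq j$, so $G(A)=G(h)$ is connected, which is exactly the irreducibility hypothesis of Theorem~\ref{th:PF}. Applying that theorem yields a non-degenerate maximal eigenvalue $\lmax(A)$ and an associated eigenvector $\phi_0$ that may be chosen with $\phi_{0,i}>0$ for every $i\in X$. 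Since the spectra satisfy $\spec(h)=c-\spec(A)$ with the same eigenvectors, $\phi_0$ is the non-degenerate ground state of $h$ and is strictly positive on all vertices. This proves the first half of the corollary.

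For the second half, I would invoke the fact that non-interacting fermions have ground states of Slater-determinant form. Since $\phi_0$ is a non-degenerate lowest-energy one-particle eigenstate, the aufbau principle forces every pure $N$-particle ground state to include $\phi_0$ as a factor, i.e., to take the shape $\Psi=\phi_0\wedge\phi_1\wedge\ldots\wedge\phi_{N-1}$ with $\{\phi_k\}_{k=0}^{N-1}$ orthonormal eigenstates of $h$ occupying the $N$ lowest one-particle levels. A short calculation with the creation and annihilation operators gives the single-particle density \eqref{eq:rho-from-Slater}, namely
\be
\rho_i=\sum_{k=0}^{N-1}|\phi_{k,i}|^2\geq \phi_{0,i}^2>0,
\ee
so every pure ground-state density is strictly positive on all vertices. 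Any ensemble ground-state density is a convex combination of such pure ones and thus inherits the same strict positivity.

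I do not expect a genuine obstacle: the only delicate step is checking that the aufbau-constructed Slater determinant really is a ground state even when the top occupied level $\phi_{N-1}$ belongs to a degenerate multiplet, but that amounts only to noting that any admissible choice of the higher orbitals still leaves the bottom slot occupied by the PF eigenvector $\phi_0$, which is all that is needed for the positivity bound above. Connectedness of $G(h)$ is essential; if it were dropped, the argument would apply blockwise and one could lose positivity on a disconnected component of $G(h)$ that is not occupied.
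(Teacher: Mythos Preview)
Your proposal is correct and follows essentially the same line as the paper: shift the diagonal to obtain a non-negative matrix $A=-h+cI$, apply the Perron--Frobenius theorem to get a non-degenerate, strictly positive one-particle ground state $\phi_0$, and then use that every aufbau Slater determinant contains $\phi_0$ to obtain $\rho_i\geq\phi_{0,i}^2>0$ via \eqref{eq:rho-from-Slater}. Your explicit remark that ensemble ground-state densities inherit positivity as convex combinations of pure ones is a small addition the paper leaves implicit.
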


Now, Rellich's theorem (Theorem~\ref{th:Rellich}) can be used to slightly extend the result of Corollary~\ref{cor:PF-one-particle-H} to small perturbations of the non-interacting Hamiltonian, or to perturbations of any other Hamiltonian that has a non-zero ground-state density.

\begin{corollary}
Let $H$ be an $N$-particle Hamiltonian that has a non-zero density coming from a non-degenerate ground state. Then for any Hermitian matrix $W$ the perturbed Hamiltonian $H+\lambda W$ will also have a non-zero ground-state density for small enough $\lambda$.
\end{corollary}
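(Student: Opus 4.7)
The plan is to apply Rellich's theorem (Theorem~\ref{th:Rellich}) directly to the one-parameter family $A(\lambda) = H + \lambda W$, whose entries are manifestly analytic (indeed linear) in $\lambda$. Since $\Psi_0$ is a non-degenerate ground state of $H$ there is a strictly positive gap $E_1(0) - E_0(0) > 0$ to the next eigenvalue. Rellich's theorem guarantees that in a neighborhood of $\lambda = 0$ we may pick analytic branches $E_0(\lambda), E_1(\lambda)$ and an analytic branch of eigenvectors $\Psi(\lambda)$ with $\Psi(0) = \Psi_0$. By continuity of the eigenvalues the gap $E_1(\lambda) - E_0(\lambda)$ remains positive on some (possibly smaller) neighborhood, so on that neighborhood $\Psi(\lambda)$ is still the non-degenerate ground state of $H + \lambda W$.

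Next I would note that each component of the ground-state density,
\begin{equation}
    \rho_i(\lambda) = \langle \Psi(\lambda), \hat{a}_i^\dagger \hat{a}_i \Psi(\lambda) \rangle,
\end{equation}
is an analytic, and therefore continuous, function of $\lambda$ in this neighborhood. By assumption $\rho_i(0) > 0$ for every $i \in X$, and since the vertex set $X$ is finite we may take the minimum: $\min_{i \in X} \rho_i(0) > 0$. Continuity then yields some $\delta > 0$ such that $\rho_i(\lambda) > 0$ simultaneously for all $i \in X$ whenever $|\lambda| < \delta$, which is exactly the claim.

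The proof is essentially a one-line consequence of Rellich's theorem, so there is no genuine obstacle; the only mild subtlety is keeping track of two smallness conditions on $\lambda$ at once, namely (i) that the ground-state gap persists so that $\Psi(\lambda)$ really is the ground state (and not some other eigenstate that has crossed it), and (ii) that the finitely many density components $\rho_i(\lambda)$ all stay positive. Both hold on an open interval around $\lambda = 0$, and intersecting the two suffices.
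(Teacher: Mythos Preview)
Your proof is correct and follows exactly the approach the paper intends: the corollary is stated in the paper without a detailed proof, merely prefaced by the remark that Rellich's theorem can be used to extend the positivity result to small perturbations. Your write-up supplies precisely the details implicit in that remark---analytic branches, persistence of the gap, and continuity of the finitely many density components---so there is nothing to add or correct.
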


\subsection{Unique \texorpdfstring{$v$}{v}-representability for a linear chain with interacting fermions}
\label{sec:lin-chain}

As explained in Section~\ref{sec:Odlyzko}, the validity of the HK theorem can be assured if the ground-state expansion in terms of the states $\{e_I\}_I$ contains a sufficient number of
non-zero coefficients determined by the Odlyzko condition. In general it is a difficult task to ensure that the ground state has the required number of non-zero coefficients. However, if 
we restrict ourselves to a certain graph topology then more detailed results can be obtained. We will do precisely this for the case of a linear chain with a real Hamiltonian.

For a linear chain we can label the vertices $1,\ldots , M$ from left to right along the chain. The chain is then represented by a one-particle Hamiltonian
$h_{ij}$ that for $i\neq j$ is only non-zero when $j = i \pm 1$ (except at the start and the end of the chain where the connection is only in one direction) and we will assume that $h_{ij} <0$ in that case (like it would be for the negative graph Laplacian). We can further allow for any external potential already included in $h$ and an arbitrary interaction $W$, as long as it is diagonal in the $\{e_I\}_I$ basis, so the Hamiltonian that we consider is
\be
H = \sum_j\sum_{i = j \pm 1} h_{ij} \,  \hat{a}^\dagger_i \hat{a}_j + W.
\label{Hamil2}
\ee

\begin{theorem}
Let the graph $G(h)$ of a real, one-particle Hamiltonian $h$ with $h_{ij} \leq 0$ for $i \neq j$ be a linear chain and thus consider the Hamiltonian \eqref{Hamil2} where $W$ is any interaction diagonal in the $\{e_I\}_I$ basis. Then the ground state $\Psi = \sum_I \Psi_I e_I$ is non-degenerate and can be chosen such that 
$\Psi_I >0$ for all $I$. In particular it follows that $\Psi$
is uniquely $v$-representable.
\end{theorem}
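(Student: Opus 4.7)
The approach is to apply the Perron--Frobenius theorem (Theorem~\ref{th:PF}) directly to the $N$-particle Hamiltonian $H$ after shifting by a large enough constant, and then combine the positivity of all wave-function coefficients with Odlyzko's result (Corollary~\ref{cor:HK-necessary}). Concretely, I would first consider $A = -H + cI$ for $c$ sufficiently large that all diagonal entries of $A$ are non-negative. To invoke Perron--Frobenius on $A$ I must verify two properties: non-negativity of its off-diagonal entries, and irreducibility of $A$. Irreducibility is equivalent to connectedness of $G(H)$ and follows immediately from Lemma~\ref{lem:ferm-graph-connected}, since $G(h)$ (the chain) is connected and $W$ is diagonal.

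The central step is showing that every off-diagonal matrix element of $H$ in the basis $\{e_I\}_I$ is non-positive. By the computation \eqref{eq:H-matrix-el}, $\langle e_I, H e_{I'}\rangle$ is non-zero only when $I$ and $I'$ differ in exactly one entry, say $i_k$ in $I$ versus $i'_k$ in $I'$, in which case it equals $h_{i_k i'_k}$ up to a fermionic sign. Here the chain structure enters crucially: $h_{i_k i'_k} \neq 0$ forces $i'_k = i_k \pm 1$, and because all other entries of $I$ and $I'$ coincide, no other particle can lie strictly between $i_k$ and $i'_k$. Consequently $i_k$ occupies the same ordinal position $k$ inside $I$ as $i'_k$ does inside $I'$, so the factors $(-1)^{k-1}$ produced by $\hat{a}_{i_k}$ acting on $e_I$ and by $\hat{a}_{i'_k}$ acting on $e_{I'}$ cancel. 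This gives $\langle e_I, H e_{I'}\rangle = h_{i_k i'_k} \leq 0$; the contribution of $W$ is purely diagonal and plays no role in this sign bookkeeping.

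With both Perron--Frobenius hypotheses in place, Theorem~\ref{th:PF} delivers a non-degenerate largest eigenvalue of $A$ and hence a non-degenerate ground state $\Psi$ of $H$, whose coefficients $\Psi_I$ in the $\{e_I\}_I$ basis can all be chosen strictly positive. The number of non-zero coefficients is therefore $|C[\Psi]| = \binom{M}{N}$, which exceeds $g(M,N)$ in every branch of \eqref{odlyzko} by a short binomial check (for the borderline case $M=2N=2$ one appeals instead to the trivial direct verification of \eqref{u_zero}). Corollary~\ref{cor:HK-necessary} then concludes the argument and yields unique $v$-representability of $\Psi$.

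The main obstacle is precisely the sign bookkeeping for the off-diagonal entries. The result hinges on the linear geometry of the chain, where a hop only involves adjacent vertices and therefore never ``crosses'' another occupied site, so that no minus signs are generated. For more general topologies, cycles or longer-range edges permit hops that jump over occupied vertices and introduce genuine $-1$ off-diagonal entries in $H$; these obstruct a direct application of Perron--Frobenius, as is already visible in the mixed-sign fermionic Hamiltonian of the triangle example in Section~\ref{sec:graph-examples}.
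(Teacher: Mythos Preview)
Your proposal is correct and follows essentially the same approach as the paper's proof: verify that the off-diagonal entries of $H$ in the $\{e_I\}_I$ basis are non-positive via the nearest-neighbour chain structure (no fermionic sign appears because a hop $j\to j\pm 1$ cannot cross another occupied vertex), invoke connectedness from Lemma~\ref{lem:ferm-graph-connected}, apply Perron--Frobenius to a shifted $-H$, and conclude unique $v$-representability from Corollary~\ref{cor:HK-necessary}. Your explicit check that $\binom{M}{N}>g(M,N)$ (with the separate treatment of $M=2N=2$) is a small addendum the paper leaves implicit.
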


\begin{proof}
We first demonstrate that the off-diagonal elements $H_{IJ}$ of $H$ in the $\{e_I\}_I$ basis are non-positive, i.e., $H_{IJ} \leq 0$ if $I \neq J$.
Since $W$ by our choice only has diagonal elements we only need to consider the one-particle part of the Hamiltonian.
We therefore act with $h_{ij} \hat{a}^\dagger_i \hat{a}_j $ on a basis vector $e_I$. If $j$ does not occur in the multi-index $I$
or when $i \neq j\pm 1$
then clearly $h_{ij} \hat{a}^\dagger_i \hat{a}_j e_I=0$, so let us assume $j \in I$ and $i = j \pm 1$. We can then write $I = ( \ldots, i', j, i'', \ldots )$ with $i'<j<i''$
and we find $h_{ij} \hat{a}^\dagger_i \hat{a}_j e_I= h_{ij} e_J$ where $J= ( \ldots, i', j\pm1, i'', \ldots )$. Now either $i''=j+1$ or $i'=j-1$, in which case $e_J=0$,
or $i' < j\pm1 <i''$, in which case $e_J$ is one of the ordered basis vectors and gets a negative prefactor $h_{ij} <0$ for $i=j \pm1$. This proves our assertion.\\
Since we have established that $H_{IJ} \leq 0$ for $I \neq J$ and furthermore the corresponding fermionic graph is connected (Lemma~\ref{lem:ferm-graph-connected}), the PF theorem (Theorem~\ref{th:PF}) applies.
We can thus conclude that the many-particle ground state $\Psi$ expanded in the $\{e_I\}_I$ basis is non-degenerate and can be chosen such that all coefficients have $\Psi_I >0$.
By Corollary~\ref{cor:HK-necessary} this state is then uniquely $v$-representable.
\end{proof}

\section{Constrained-search functionals and \texorpdfstring{$v$}{v}-representability}
\label{sec:main:CS-v-rep}

\subsection{General notions}
\label{sec:CS-functionals}

The notion of $v$-representability is clearly a central concept to DFT: Determine the set of all densities for which a potential exists, such that the respective ground state yields just this density. It is therefore the precondition for a well-defined density-potential mapping. This can be generalized into allowing not only pure ground states but also a ground-state ensemble and thus a mixture of degenerate ground states, which yields the respective density. This describes ``ensemble $v$-representability'' in comparison to ``pure-state $v$-representability''.
A necessary condition for densities to be ensemble $v$-representable on an even infinitely large lattice was given by \citet{CCR1985}. In Theorem~\ref{th:v-rep} below we give a simplified proof for the graph case. The sets of densities under consideration are the most general ones: First, the physical densities $\mathcal{P}_{M,N}$ already defined in \eqref{eq:def-P_N} where only $0\leq\rho_i\leq 1$ and the proper normalization are taken into account and which are also all $N$-representable, as demonstrated in Section~\ref{sec:N-rep}, and second, the open set $\mathcal{P}_{M,N}^+$ with strict inequalities.
\begin{align}
	&\mathcal{P}_{M,N} = \left\{ \rho : X \to \R \;\middle|\; 0 \leq \rho_i \leq 1, \sum_{i=1}^M\rho_i = N \right\}\\
	&\mathcal{P}_{M,N}^+ = \left\{ \rho : X \to \R \;\middle|\; 0 < \rho_i < 1, \sum_{i=1}^M\rho_i = N \right\}
\end{align}
Note that both sets are convex and that the set of physical densities $\mathcal{P}_{M,N}$ is the closure of the open set $\mathcal{P}_{M,N}^+$.

In the following we fix $H_0$ as the internal part of the Hamiltonian including possible interactions, while the external potential $v \in \R^M$ couples as usual to the density and acts as an operator $V$ on $\H_N$.
From the Rayleigh--Ritz variational principle we get the ground state or ground-state ensemble for a given potential $v$ as the minimizers of the respective variational energy expressions
\begin{align}
    E(v) &= \inf_{\Psi\in\mathcal{I}_N} \left\{ \langle \Psi, (H_0+V)\Psi \rangle \right\} \\
    &= \inf_{\Psi\in\mathcal{I}_N} \left\{ \langle \Psi, H_0\Psi \rangle + \sum_i v_i\rho[\Psi]_i \right\} \quad\text{and} \nonumber\\
    E(v) &= \inf_{\Gamma\in\mathcal{D}_N} \left\{ \Tr (\Gamma (H_0+V)) \right\} \\
    &= \inf_{\Gamma\in\mathcal{D}_N} \left\{ \Tr (\Gamma H_0) + \sum_i v_i\rho[\Gamma]_i \right\}, \nonumber
\end{align}
where the variation goes over all normalized wave functions or density matrices. The variation can be split into two parts: First vary over all states yielding a fixed density $\rho$, denoted by $\Psi\mapsto\rho$ and $\Gamma\mapsto\rho$, and then over all possible physical densities in $\mathcal{P}_{M,N}$. We then have
\begin{equation}
    \label{eq:E-inf-inf}
    E(v) = \inf_{\rho\in\mathcal{P}_{M,N}} \left\{ \inf_{\Psi\in \mathcal{I}_N} \{ \langle \Psi, H_0\Psi \rangle \mid \Psi \mapsto \rho \} + \sum_i v_i\rho_i \right\}
\end{equation}
and
\begin{equation}
    \label{eq:E-inf-inf-Gamma}
    E(v) = \inf_{\rho\in\mathcal{P}_{M,N}} \left\{ \inf_{\Gamma\in \mathcal{D}_N} \{ \Tr (\Gamma H_0) \mid \Gamma \mapsto \rho \} + \sum_i v_i\rho_i \right\}.
\end{equation}
This variational form of the ground-state energy allows for an alternative and very direct proof of the first part of the HK theorem, stating that two Hamiltonians that differ only in their external potentials and share a common ground-state density $\rho$ also share a ground state with this density.

\begin{proof}(Alternative proof for Theorem~\ref{th:HK1}
\cite{lammert-workshop})
Let the common ground-state density be fixed as $\rho$, then the outer infimum in \eqref{eq:E-inf-inf} is void and we have
\begin{equation}
    E(v) = \inf_{\Psi\in \mathcal{I}_N} \{ \langle \Psi, H_0\Psi \rangle \mid \Psi \mapsto \rho \} + \sum_i v_i\rho_i.
\end{equation}
However, the remaining infimum is entirely independent of $v$ and will have the same value for all potentials. Consequently, the given density alone determines the ground state. Obviously, the proof is the same considering ensemble states. 
\end{proof}

Realizing in the proof above that the universal, $v$-independent part in the expression for $E(v)$ fulfills an important function, the following definitions of the so-called constrained-search density functionals \cite{Levy79,Lieb1983} on an extended domain that is the full vector space $\mathbb{R}^M$ arise.
\begin{align}\label{eq:tilde-F-def}
	&\tilde F(\rho) = \left\{ \begin{array}{ll}
		\inf_{\Psi\in \mathcal{I}_N} \{ \langle \Psi, H_0\Psi \rangle \mid \Psi \mapsto \rho \} \;\; &\rho \in \mathcal{P}_{M,N}  \\
		+\infty &\rho \in \R^M \setminus \mathcal{P}_{M,N}
	\end{array}\right. \\
	\label{eq:F-def}
	&F(\rho) = \left\{ \begin{array}{ll}
		\inf_{\Gamma\in \mathcal{D}_N} \{ \Tr (\Gamma H_0) \mid \Gamma \mapsto \rho \} \;\;\;\, &\rho \in \mathcal{P}_{M,N}  \\
		+\infty &\rho \in \R^M \setminus \mathcal{P}_{M,N}
	\end{array}\right.
\end{align}
From the definition it is evident that $\tilde F(\rho) \geq F(\rho)$ for all $\rho$ because of the larger search space of the second.
Next, observe that the ground-state energy functional $E(v)$ in \eqref{eq:E-inf-inf-Gamma} (or \eqref{eq:E-inf-inf}) is then given as the Legendre--Fenchel transform or convex conjugate \cite[§12]{rockafellar-book} of $F(\rho)$ (or equally of $\tilde F(\rho)$), where we use a slightly different convention than usual that is specifically adapted to DFT \cite{Kvaal2014},
\begin{equation}\label{eq:E-LF}
	E(v) = \inf_{\rho\in\R^M} \left\{ F(\rho) + \sum_i v_i\rho_i \right\}.
\end{equation}
With another Legendre--Fenchel transformation we can transform $E(v)$ back to $F(\rho)$,
\begin{equation}
	F(\rho) = \sup_{v\in\R^M} \left\{ E(v) - \sum_i v_i\rho_i \right\}, \quad \rho \in \R^M.
	\label{eq:LF}
\end{equation}
For this to be possible, the functional $F$ has to fulfil several properties \cite[Cor.~12.2.1]{rockafellar-book} that do in fact hold: $F$ is proper because it is bounded below, closed because it is lower semi-continuous, and convex from the linearity of the trace in \eqref{eq:F-def} (see \citet{Lieb1983} for the last two properties). Since $\tilde F$ in general fails to be convex, as we will explicitly demonstrate later in Section~\ref{sec:cuboctahedron}, it is only $F$ that we get back from the described double Legendre--Fenchel transformation starting from either $\tilde F$ or $F$.
\be
\begin{aligned}
\begin{tikzpicture}
  \matrix (m)
    [
      matrix of math nodes,
      row sep    = 0.2em,
      column sep = 2em
    ]
    {
      \tilde F & & \\
      & E & F \\
      F & & \\
    };
  \path
    (m-1-1) edge [->] node [above] {\scriptsize LF} (m-2-2)
    (m-3-1) edge [->] node [below] {\scriptsize LF} (m-2-2)
    (m-2-2) edge [->] node [above] {\scriptsize LF} (m-2-3);
\end{tikzpicture}
\end{aligned}
\ee
However, this means that $F$ is the convex hull of $\tilde F$.

\begin{proposition}\label{prop:ch}
$F = \ch\, \tilde F$.
\end{proposition}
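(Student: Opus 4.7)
The plan is to use the commutative diagram of Legendre--Fenchel (LF) transforms that has just been established in the excerpt, together with the standard convex-analysis fact that the biconjugate of a function equals its closed convex hull.

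First I would establish the easy inequality $F \leq \ch\,\tilde F$. We already know $F \leq \tilde F$ pointwise, since every pure-state wave function $\Psi\mapsto\rho$ yields a density matrix $\Gamma=|\Psi\rangle\langle\Psi|\mapsto\rho$ with $\Tr(\Gamma H_0)=\langle\Psi,H_0\Psi\rangle$, so the ensemble infimum in \eqref{eq:F-def} is taken over a superset. Because $F$ is convex (as recorded just before the proposition), $F$ is a convex minorant of $\tilde F$. Since $\ch\,\tilde F$ is by definition the largest convex (lower semi-continuous) function dominated by $\tilde F$, this gives $F \leq \ch\,\tilde F$.

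The reverse inequality $\ch\,\tilde F \leq F$ is the content of the diagram preceding the proposition. Applying LF to $\tilde F$ yields $E$ (the variational principle for pure states, as in \eqref{eq:E-inf-inf}), and applying LF to $F$ also yields $E$ (the variational principle for ensembles, as in \eqref{eq:E-inf-inf-Gamma}); these agree because the minimum of the linear functional $\Gamma \mapsto \Tr(\Gamma(H_0+V))$ over the convex set $\mathcal{D}_N$ is attained at an extreme point, i.e.\ a pure state. Applying LF a second time one gets $F$ back (since $F$ was verified to be proper, closed, and convex, so \eqref{eq:LF} holds). Therefore the double LF of $\tilde F$ equals $F$. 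By the standard theorem of Rockafellar, the biconjugate of any function bounded below on $\R^M$ is precisely its closed convex hull, and here $\tilde F$ is bounded below (it equals $+\infty$ off the compact polytope $\mathcal{P}_{M,N}$ and is bounded on it since $H_0$ is a bounded Hermitian operator on the finite-dimensional $\H_N$). Hence the biconjugate of $\tilde F$ equals $\ch\,\tilde F$, and combining these identifications yields $\ch\,\tilde F = F$.

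The main obstacle is largely a matter of verifying the convex-analytic prerequisites: checking that $\tilde F$ is bounded below with compact effective domain so that its convex hull is automatically closed (and thus coincides with its biconjugate), and checking that the mild regularity needed to write LF$(E)=F$ follows from the three properties (proper, closed, convex) already noted after \eqref{eq:LF}. Once these are granted, the result is an immediate corollary of the fact, already displayed in the commutative diagram, that $\tilde F$ and $F$ have the same LF transform $E$.
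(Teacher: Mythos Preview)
Your proposal is correct and reaches the result, but by a genuinely different route from the paper. The paper gives a direct, constructive argument: it writes an arbitrary $\Gamma\in\mathcal{D}_N$ with $\Gamma\mapsto\rho$ as a convex combination $\sum_n\lambda_n\Gamma_n$ of pure states, then splits the minimization defining $F(\rho)$ into an outer minimization over the weights $\lambda_n$ and pure-state densities $\rho_n$ (subject to $\sum_n\lambda_n\rho_n=\rho$) and an inner minimization over $\Gamma_n\mapsto\rho_n$; the inner minimum is exactly $\tilde F(\rho_n)$, and the outer one is then literally the formula for $\ch\,\tilde F(\rho)$. Your approach instead leans on the Legendre--Fenchel machinery already assembled before the proposition: since $\tilde F$ and $F$ share the same conjugate $E$, and $F$ equals its own biconjugate, the biconjugate of $\tilde F$ is $F$; identifying the biconjugate with the closed convex hull then finishes. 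The paper's argument is more elementary and self-contained (no appeal to Rockafellar), while yours makes efficient reuse of the LF diagram and explains from the duality viewpoint \emph{why} the identity should be expected.

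One small caution on your closure step: boundedness of $\tilde F$ together with compactness of its effective domain is \emph{not} by itself enough to force $\ch\,\tilde F$ to be closed (take $f(0)=1$, $f(x)=0$ for $x\in(0,1]$, $f=+\infty$ elsewhere; then $\ch f$ fails lower semi-continuity at $0$). What you additionally need is lower semi-continuity of $\tilde F$ on $\mathcal{P}_{M,N}$, which here follows easily since $\Psi\mapsto\langle\Psi,H_0\Psi\rangle$ is continuous on the finite-dimensional sphere and the constraint set $\{\Psi\in\mathcal{I}_N:\Psi\mapsto\rho\}$ is compact and varies continuously with $\rho$. With that verified, your identification of $\ch\,\tilde F$ with the biconjugate is valid and the argument goes through.
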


\begin{proof}
Any density matrix $\Gamma\mapsto\rho$ can be written as a convex combination $\Gamma = \sum_n \lambda_n\Gamma_n$, where the $\Gamma_n$ correspond to pure states. Thus from the definition,
\begin{widetext}
\begin{equation}
\begin{aligned}
		F(\rho) &=
		\min_{\Gamma \in \mathcal{D}_N} \left\{ \sum_n\lambda_n\Tr (\Gamma_n H_0) \;\middle|\; \sum_n\lambda_n\Gamma_n \mapsto \rho, \lambda_n\in[0,1], \sum_n\lambda_n=1, \Gamma_n\;\text{pure} \right\}  \\
		&= \min_{\rho \in \mathcal{P}_{M,N}} \left\{ \sum_n\lambda_n \min_{\Gamma_n \in \mathcal{D}_N} \left\{ \Tr (\Gamma_n H_0) \;\middle|\; \Gamma_n \mapsto \rho_n, \Gamma_n\;\text{pure} \right\} \;\middle|\;  \sum_n\lambda_n\rho_n \mapsto \rho, \lambda_n\in[0,1], \sum_n\lambda_n=1 \right\} \\
		&= \min_{\rho \in \mathcal{P}_{M,N}} \left\{ \sum_n\lambda_n \tilde F(\rho_n) \;\middle|\;  \sum_n\lambda_n\rho_n \mapsto \rho, \lambda_n\in[0,1], \sum_n\lambda_n=1 \right\}.
\end{aligned}
\end{equation}
\end{widetext}
This is exactly the convex hull of $\tilde F$.
\end{proof}

The convexity of $F$ restricted to $\mathcal{P}_{M,N}$ has a notable consequence, since such functionals are automatically differentiable almost everywhere \cite[Th.~25.5]{rockafellar-book}. Furthermore, differentiability of $F$ at a density point $\rho\in\mathcal{P}_{M,N}^+$ means that $F$ has a unique subgradient there \cite[Th.~25.1]{rockafellar-book}, $v=-\nabla F(\rho)$ (modulo a constant), which exactly gives the (unique) minimum in \eqref{eq:E-LF}, so $\rho$ is found to be uv. Consequently, almost all densities are uv, exactly the result we derived in Theorem~\ref{th:non-uv-dens-measure-zero}. The possible non-differentiability of $F$ in continuum DFT has been previously discussed by \citet{Lammert2007}.
The proof of $v$-representability in \citet{englisch2} for finite-dimensional state spaces rests on finding continuous tangent functionals to $F$, a notion closely related to the subgradient, but we will instead present the approach of \citet{CCR1985} in the next section that avoids any reference to differentiability.

\subsection{Results on \texorpdfstring{$v$}{v}-representability}
\label{sec:v-rep}

$E(v)$ in the supremum of \eqref{eq:LF} is continuous in $v$ as a consequence of Rellich's theorem (Theorem~\ref{th:Rellich}) or simply because every finite concave function is continuous. However, the search space $\R^M$ is not compact, so at this point we cannot be sure that the supremum is always attained for $\rho\in\mathcal{P}_{M,N}$. However, if we are able to show that the search space can equivalently be replaced by a compact set then any choice of $\rho$ leads to a corresponding potential $v$ and we just need to make sure that this $\rho$ also comes from a ground state of $H_0+V$ to have $v$-representability. This is the content of the following theorem adapted from \citet{CCR1985} that guarantees $v$-representability by ensembles for all $\rho \in \mathcal{P}_{M,N}^+$, yet not for the whole $\mathcal{P}_{M,N}$.

\begin{theorem}[ensemble $v$-representability]\label{th:v-rep}
For all $\rho \in \mathcal{P}_{M,N}^+$ there is a $v \in \R^M$ such that the Hamiltonian $H_0+V$ has a ground-state ensemble $\Gamma \mapsto \rho$.
\end{theorem}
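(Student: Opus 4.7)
The plan is to invert the Legendre--Fenchel relation \eqref{eq:LF} defining $F$ by showing that for $\rho\in\mathcal{P}_{M,N}^+$ the supremum there is \emph{attained} at some $v^*\in\R^M$, and then to identify this maximizer as a potential whose Hamiltonian $H_0+V^*$ admits an ensemble ground state of density $\rho$. First I would quotient out the gauge freedom $v\mapsto v+c\mathbf{1}$ (which shifts $E$ by $cN$ and leaves the objective $E(v)-\sum_i v_i\rho_i$ invariant since $\sum_i\rho_i=N$), working on the $(M-1)$-dimensional hyperplane $\{v\in\R^M\mid v_M=0\}$. On this subspace the objective is concave---as a pointwise infimum of affine functions via \eqref{eq:E-inf-inf-Gamma}---and continuous (Rellich, Theorem~\ref{th:Rellich}).

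The crux of the argument is \emph{coercivity}: for $\rho\in\mathcal{P}_{M,N}^+$ the objective tends to $-\infty$ as $\|v\|\to\infty$ in the gauge-fixed space. Using the trial ensemble $\Gamma=|e_I\rangle\langle e_I|$ at an extreme density $E_I$ one obtains for every multi-index $I$ the bound
\[
E(v)-\sum_i v_i\rho_i \;\leq\; \langle e_I,H_0 e_I\rangle+\sum_i v_i(E_{I,i}-\rho_i).
\]
Writing $v=ta$ with $a$ on the unit sphere of the gauge-fixed space and $t\to\infty$, coercivity along the ray reduces to showing $\min_{\sigma\in\mathcal{P}_{M,N}}a\cdot\sigma<a\cdot\rho$ for every nonzero $a$, which holds precisely because $\rho$ lies in the \emph{interior} of the hypersimplex: no non-trivial supporting hyperplane of $\mathcal{P}_{M,N}$ passes through $\rho$. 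A compactness argument on the unit sphere then upgrades this to uniform coercivity.

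Combining coercivity, concavity, and continuity, the supremum defining $F(\rho)$ is attained at some $v^*$. First-order optimality of the concave objective $v\mapsto E(v)-v\cdot\rho$ at $v^*$ is the statement that $\rho$ belongs to the superdifferential $\partial E(v^*)$. Since $E(v)=\inf_{\Gamma\in\mathcal{D}_N}(\Tr(\Gamma H_0)+\sum_i v_i\rho[\Gamma]_i)$ is an infimum of affine functions over the compact convex set $\mathcal{D}_N$, the envelope theorem identifies
\[
\partial E(v^*)\;=\;\bigl\{\rho[\Gamma]\,\bigm|\,\Gamma\in\mathcal{D}_N\text{ is a ground-state ensemble of }H_0+V^*\bigr\},
\]
which is already convex as the image under the linear map $\Gamma\mapsto\rho[\Gamma]$ of the convex set of ground-state ensembles. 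Hence $\rho=\rho[\Gamma]$ for some such $\Gamma$, proving ensemble $v$-representability.

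The principal obstacle is the coercivity step, and it is precisely there that the hypothesis $\rho\in\mathcal{P}_{M,N}^+$ is indispensable: if any $\rho_i\in\{0,1\}$, then the potential can be sent to infinity along $\pm e_i$ without the objective decreasing, reflecting exactly the freedom exploited in Section~\ref{sec:ex-non-unique} to realise non-uv densities on the boundary of $\mathcal{P}_{M,N}$.
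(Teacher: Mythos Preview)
Your proof is correct and reaches the same conclusion, but the route differs from the paper's in both the coercivity step and the final identification.

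For coercivity, the paper fixes the gauge by normalizing $E(v)=0$ rather than $v_M=0$, then constructs an explicit trial density $\tilde\rho$ that equals $p\rho_i$ on vertices where $v_i\geq 0$ and $q\rho_i$ where $v_i<0$ (with $0<p<1<q$ chosen to keep $\tilde\rho\in\mathcal{P}_{M,N}$, which is exactly where $\rho_i<1$ is used). Plugging $\tilde\rho$ into the variational inequality $-\sum_i v_i\tilde\rho_i\leq \|H_0\|$ and combining with the near-optimality bound for $\rho$ yields a quantitative estimate $\|v\|_1\leq R_\rho$ depending on $\min_i\rho_i$, $p$, $q$, and $\|H_0\|$. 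Your argument instead exploits the geometric fact that an interior point of the hypersimplex is not hit by any non-trivial supporting hyperplane, giving $\min_I a\cdot E_I<a\cdot\rho$ for every nonzero gauge-fixed direction $a$, and then uses compactness of the unit sphere to make this uniform.

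For the identification step, the paper argues directly: once $F(\rho)=E(v^*)-v^*\cdot\rho$, rearrangement shows $\rho$ minimizes $F(\cdot)+v^*\cdot(\cdot)$; since the infimum defining $F(\rho)$ over the compact set $\mathcal{D}_N$ is attained at some $\Gamma\mapsto\rho$, this $\Gamma$ achieves $E(v^*)$ and is therefore a ground-state ensemble. Your route via Danskin's theorem and the identification $\partial E(v^*)=\{\rho[\Gamma]:\Gamma\text{ ground-state ensemble of }H_0+V^*\}$ is equivalent but packages the argument in subdifferential language.

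What each approach buys: the paper's explicit estimate \eqref{eq:v-bound} immediately yields Corollary~\ref{cor:pot-bounded}, bounding the set of all representing potentials for a given interior density---a result you would have to extract separately. Your argument is conceptually cleaner, makes the role of the interior-point hypothesis transparent, and would transfer more readily to other convex density domains.
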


\begin{proof}
In order to prove ensemble $v$-representability we will show that for all $\rho \in \mathcal{P}_{M,N}^+$ the supremum in \eqref{eq:LF} is actually a maximum. This means that there is $v \in \R^M$ such that $F(\rho) = E(v) - \sum_i v_i\rho_i$. However, rearranging this to $E(v) = F(\rho) + \sum_i v_i\rho_i$ it identically means that $\rho$ is a minimizer in \eqref{eq:E-LF} for the determined $v$. However, since in \eqref{eq:F-def} a continuous function is varied over a compact set, for every $\rho\in\mathcal{P}_{M,N}$ there is a $\Gamma \in \mathcal{D}_N$ that minimizes \eqref{eq:F-def} and this density matrix is then the required ground-state ensemble.\\
To start with, we note that by adding a constant shift to a potential $v$, the associated ground-state density does not change while $E(v)$ is shifted. Therefore, without loss of generality, we always limit the $v$ under consideration to those that give $E(v)=0$.\\
Considering now an arbitrary $\rho \in \mathcal{P}_{M,N}^+$, the supremum in \eqref{eq:LF} means that for every $\eps>0$ we can find a $v\in \R^M$ with $E(v)=0$ such that
\begin{equation}
F(\rho)-\eps \leq - \sum_i v_i\rho_i.
\end{equation}
This estimate leads to 
\begin{equation}
\begin{aligned}
\label{eq:v-estimate-1}
\sum_i v_i\rho_i &= \sum_{\substack{i \\ v_i > 0 }} v_i\rho_i -\sum_{\substack{i \\ v_i < 0 }} |v_i|\rho_i \\&= D^+ - D^- \leq -F(\rho)+\eps \leq \|H_0\|+\eps,
\end{aligned}
\end{equation}
where we separated the positive and negative contributions from the energy of the external potential, $D^+$ and $D^-$, and use the operator norm of $H_0$ to bound $|F(\rho)|$ from above.
On the other hand, from \eqref{eq:E-LF} and \eqref{eq:F-def} we then have for all $\tilde\rho \in \mathcal{P}_{M,N}$ and $v\in\R^M$ with $E(v)=0$ that
\begin{equation}\label{eq:v-rho-inequality}
	-\sum_i v_i\tilde\rho_i \leq F(\tilde\rho) \leq \|H_0\|.
\end{equation}
We now choose a density $\tilde\rho$ that has the value $p\rho_i$ for all $i\in X$ where $v_i \geq 0$ and $q\rho_i$ otherwise. Here $p,q$ are chosen with $0<p<1<q$ such that $\tilde\rho$ is still normalized to $N$ as required. This is clearly possible since $\rho_i<1$ everywhere and so there is room to raise the density by a factor $q>1$ where $v_i<0$ while lowering it with $p<1$ where $v_i>0$ and still having $q\rho_i \leq 1$ and keeping it normalized. The extreme cases where the potential is purely positive or negative will be considered afterwards. Now \eqref{eq:v-rho-inequality} is
\begin{equation}\label{eq:v-estimate-2}
	-pD^+ + qD^- \leq F(\tilde\rho) \leq \|H_0\|
\end{equation}
and the combination of \eqref{eq:v-estimate-1} and \eqref{eq:v-estimate-2} gives
\begin{align}
&(q-p) D^+ \leq (1+q)\|H_0\| + q\eps,\\
&(q-p) D^- \leq (1+p)\|H_0\| + p\eps,\quad\text{and finally}\\
&D^+ + D^- = \sum_i |v_i|\rho_i \leq \frac{p+q+2}{q-p}\|H_0\| + \frac{p+q}{q-p}\eps.\label{eq:v-estimate-3}
\end{align}
In the case $v_i\geq 0$ everywhere we simply have $D^- = 0$ and thus the stronger estimate $D^+ + D^- = D^+ - D^- \leq \|H_0\|+\eps$ from \eqref{eq:v-estimate-1} holds. If $v_i<0$ everywhere then $D^+=0$ and we get $D^+ + D^- = -D^+ + D^- \leq \|H_0\|$ directly from \eqref{eq:v-rho-inequality} by just setting $\tilde\rho=\rho$. This means that in any case we can rely on the estimate \eqref{eq:v-estimate-3} when looking for potentials that yield a ground-state density $\rho$. Since we know $\min_i\rho_i > 0$ we even have the 1-norm bound
\begin{align}
\|v\|_1&=\sum_i |v_i| \leq (\min_i\rho_i)^{-1} \left( \frac{p+q+2}{q-p}\|H_0\| + \frac{p+q}{q-p}\eps \right) \nonumber\\&= R_\rho \label{eq:v-bound}
\end{align}
for the potential, where the r.h.s.\ just depends on the chosen $\rho$ since this also determines the parameters $q$ and $p$.
Overall, we have concluded that in order to find the supremum in \eqref{eq:LF} we can limit the search to potentials inside the closed 1-norm ball with finite radius $R_\rho$, which is clearly compact. Since $E$ is known to be continuous, the extreme value theorem yields a definitive maximum for \eqref{eq:LF} inside this ball.
\end{proof}

Such ensemble $v$-representability was also demonstrated for a coarse-grained version of continuum DFT \cite{lammert2006coarse,lammert2010well}.
On the other hand, it has been stated recently by \citet{RoesslerVerdozzi2018} that there are densities in $\mathcal{P}_{M,N}^+$ on simple lattice systems that are not ensemble $v$-representable, which would be in opposition to the theorem above. However, the claim rests on a numerical reverse-engineering procedure for finding a suitable potential $v$ where uniqueness may have been assumed. Since their plot for the deviation from the reference density shows a kink at its minimum, this rather points to non-uniqueness than non-$v$-representability, such as in our HK-violation examples (Section~\ref{sec:ex-non-unique}).

The proof above suggests a simple corollary that limits the set of possible potentials yielding a given density.

\begin{corollary}\label{cor:pot-bounded}
The set of potentials (modulo a constant) that leads to any given $\rho \in \mathcal{P}_{M,N}^+$ as the ground-state density is a bounded set, as can be directly seen from estimate \eqref{eq:v-bound}.
\end{corollary}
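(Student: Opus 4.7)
The plan is to observe that the estimate \eqref{eq:v-bound} was derived in the proof of Theorem~\ref{th:v-rep} for any potential $v$ (normalized so that $E(v)=0$) that comes $\varepsilon$-close to realizing the supremum in \eqref{eq:LF} at the given density $\rho$. Every potential whose ground-state density is $\rho$ in fact realizes this supremum exactly, so the estimate applies with $\varepsilon=0$ and yields a bound depending only on $\rho$.

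First I would fix the gauge freedom by restricting attention to potentials with $E(v)=0$; any additive constant can be absorbed without changing the set of ground states. Then I would make the key identification: if $v$ has $\rho\in\mathcal{P}_{M,N}^+$ as a (pure or ensemble) ground-state density, then $\rho$ attains the outer infimum in \eqref{eq:E-inf-inf-Gamma}, so that
\begin{equation}
E(v) \;=\; F(\rho) + \sum_i v_i \rho_i,
\end{equation}
which, combined with $E(v)=0$, gives $-\sum_i v_i \rho_i = F(\rho)$. In terms of the Legendre--Fenchel duality expressed by \eqref{eq:LF}, this is precisely the statement that $v$ realizes the supremum at $\rho$ without any approximation error, so we may take $\varepsilon=0$ in the estimates of the proof of Theorem~\ref{th:v-rep}.

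Second I would rerun, with $\varepsilon=0$, the chain leading to \eqref{eq:v-estimate-3}: split the potential-energy expression into $D^{+}-D^{-}\leq \|H_0\|$ using \eqref{eq:v-estimate-1}, and use a rescaled density $\tilde\rho$ (with $0<p<1<q$ chosen compatibly with the constraint $\tilde\rho_i\in[0,1]$, which is possible precisely because $\rho\in\mathcal{P}_{M,N}^+$) to obtain $-pD^{+}+qD^{-}\leq \|H_0\|$ via \eqref{eq:v-estimate-2}. Adding the resulting bounds on $D^{+}$ and $D^{-}$ and dividing by $\min_i\rho_i>0$ delivers
\begin{equation}
\|v\|_1 \;\leq\; (\min_i\rho_i)^{-1}\,\frac{p+q+2}{q-p}\,\|H_0\| \;=\; R_\rho,
\end{equation}
a finite number depending only on $\rho$ (through $\min_i\rho_i$ and the allowable choices of $p,q$).

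There is essentially no serious obstacle here: the content of the corollary is an \emph{a posteriori} reading of the proof of Theorem~\ref{th:v-rep}. The only subtlety worth stating explicitly is that the boundedness must be understood modulo the additive constant fixed by the normalization $E(v)=0$; without such a gauge fixing the set of representing potentials is of course unbounded by trivial shifts. With this convention, the corollary follows immediately from the displayed estimate.
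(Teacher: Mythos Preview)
Your proposal is correct and follows exactly the route the paper intends: the corollary is stated without a separate proof precisely because it is the $\varepsilon=0$ specialization of the estimate \eqref{eq:v-bound}, and you have identified the key point that any potential actually yielding $\rho$ as a ground-state density attains the supremum in \eqref{eq:LF} exactly. Your explicit remark that the gauge $E(v)=0$ must be fixed to make the boundedness statement meaningful is a useful clarification.
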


The above result tells us that ensemble representability in $\mathcal{P}_{M,N}^+$ is not only possible but even realized within bounded potential sets determined by the given density.
This leaves open the question of \emph{pure-state} $v$-representability. The following proposition connects this problem with the value of the two constrained-search functionals.

\begin{proposition}\label{prop:pure-state-v-rep}
A density $\rho \in \mathcal{P}_{M,N}^+$ is pure-state $v$-representable if and only if $F(\rho)=\tilde F(\rho)$.
\end{proposition}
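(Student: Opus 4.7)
The plan is to exploit the variational characterizations directly, using that both constrained-search infima are attained (the set $\{\Psi \in \mathcal{I}_N \mid \Psi \mapsto \rho\}$ is compact and $\Psi \mapsto \langle \Psi, H_0 \Psi\rangle$ is continuous, and similarly for $\mathcal{D}_N$), and combining them with ensemble $v$-representability from Theorem~\ref{th:v-rep}. The trivial inequality $\tilde F(\rho) \geq F(\rho)$, coming from the fact that pure states form a subset of ensembles, will be used throughout.

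For the forward implication, suppose $\rho$ is pure-state $v$-representable, with pure ground state $\Psi$ of $H_0+V$ satisfying $\Psi \mapsto \rho$. First, for any other pure state $\Psi' \mapsto \rho$ the Rayleigh--Ritz principle applied to $H_0+V$ gives $\langle \Psi, H_0\Psi \rangle \leq \langle \Psi', H_0 \Psi'\rangle$ after the common potential term $\sum_i v_i\rho_i$ cancels. Thus $\Psi$ attains $\tilde F(\rho) = \langle \Psi, H_0\Psi\rangle$. Second, repeating the argument against any ensemble $\Gamma \mapsto \rho$ yields $\langle \Psi, H_0\Psi\rangle \leq \Tr(\Gamma H_0)$, so $\tilde F(\rho) \leq F(\rho)$. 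Combined with $\tilde F(\rho) \geq F(\rho)$ this gives equality.

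For the reverse implication, assume $\rho \in \mathcal{P}_{M,N}^+$ satisfies $F(\rho) = \tilde F(\rho)$. By Theorem~\ref{th:v-rep} there exists $v \in \R^M$ and a ground-state ensemble $\Gamma_0$ of $H_0+V$ with $\Gamma_0 \mapsto \rho$; since $\Gamma_0$ attains the supremum in \eqref{eq:LF} at this $\rho$, we have the duality identity $\Tr(\Gamma_0 H_0) = F(\rho)$, i.e.\ $\Gamma_0$ realizes the infimum defining $F(\rho)$. Let $\Psi_0 \mapsto \rho$ be a minimizer of $\tilde F(\rho)$, which exists by compactness. The assumed equality yields $\langle \Psi_0, H_0\Psi_0\rangle = \tilde F(\rho) = F(\rho) = \Tr(\Gamma_0 H_0)$. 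Adding $\sum_i v_i\rho_i$ to both sides gives $\langle \Psi_0, (H_0+V)\Psi_0\rangle = \Tr(\Gamma_0(H_0+V)) = E(v)$, so $\Psi_0$ is a pure ground state of $H_0+V$ with density $\rho$, establishing pure-state $v$-representability.

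I do not anticipate any substantive obstacle: the entire argument reduces to the observation that the potential term $\sum_i v_i \rho_i$ acts as a constant on the fiber $\{\Psi \mapsto \rho\}$, so minimizing the total energy $\langle \Psi, (H_0+V)\Psi\rangle$ at fixed density is the same as minimizing the internal energy $\langle \Psi, H_0 \Psi\rangle$. The only point requiring care is to invoke Theorem~\ref{th:v-rep} in the reverse direction, which supplies the potential $v$ a priori and hence bridges the gap between the purely variational equality $F(\rho)=\tilde F(\rho)$ and the assertion that the pure-state minimizer is in fact a ground state of some Hamiltonian $H_0+V$.
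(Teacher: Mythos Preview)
Your proof is correct and follows essentially the same approach as the paper's. The forward direction is identical in spirit (Rayleigh--Ritz with the potential term cancelling on the density fiber), while your reverse direction is actually more carefully argued than the paper's one-sentence version: you make explicit the invocation of Theorem~\ref{th:v-rep} to produce the potential $v$, which the paper leaves implicit but is indeed necessary to pass from the purely variational equality $F(\rho)=\tilde F(\rho)$ to the conclusion that the pure minimizer is a \emph{ground} state of some $H_0+V$.
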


\begin{proof}
We first show that from pure-state $v$-representability it follows $F(\rho)=\tilde F(\rho)$.
The overall estimate $F \leq \tilde F$ is clear from the larger search space for the infimum in $F$. Hence, we just need to show $\tilde F(\rho) \leq F(\rho)$ if $\rho$ is pure-state $v$-representable. In such a case there is a $\Psi \in \H_N$ that is a ground state of $H_0+V$ and has $\Psi \mapsto \rho$. Now take any $\Gamma \mapsto \rho$ to be $\Gamma = \sum_n \lambda_n \Gamma_n$ with $\sum_n\lambda_n=1$, $\lambda_n \in [0,1]$, and $\Gamma_n$ corresponding to a pure state $\Psi_n$. We also have from \eqref{eq:tilde-F-def} that $\tilde F(\rho) = \langle \Psi,H_0\Psi \rangle$ and so with the Rayleigh--Ritz variation principle
\begin{align}
	E(v)&=\tilde F(\rho) + \sum_i v_i \rho_i = \sum_n\lambda_n \langle \Psi,(H_0+V)\Psi \rangle \nonumber \\
	&\leq \sum_n\lambda_n \langle \Psi_n,(H_0+V)\Psi_n \rangle \\
	&= \sum_n\lambda_n \Tr(\Gamma_n (H_0+V)) = \Tr(\Gamma H_0) + \sum_i v_i \rho_i. \nonumber
\end{align}
This means we derived $\tilde F(\rho) \leq \Tr(\Gamma H_0)$ for arbitrary density matrices that have density $\rho$. Taking the infimum over all these will result in $\tilde F(\rho) \leq F(\rho)$ and finishes the first part of the proof.\\
For the other implication, assume $F(\rho)=\tilde F(\rho)$. However, this just means that instead of a density matrix one can choose a pure state to achieve the same minimum, thus $\rho$ is $v$-representable by this pure state.
\end{proof}

By finding a single counterexample for pure-state $v$-representability in $\mathcal{P}_{M,N}^+$ we would thus know that $\tilde F \neq F$ and consequently, since $F$ is the convex hull of $\tilde F$, the pure-state constrained-search function $\tilde F$ cannot be convex. The argument for non-pure-state $v$-representability of \citet[Sec.~V]{Levy1982} depends on at least 3-fold degeneracy and finding arbitrarily many equations that must be simultaneously fulfilled (by checking different positions in the continuum) and thus does not work on a graph. \citet[Th.\ 3.4(i)]{Lieb1983}, on the other hand, gave a group theoretical argument that also relies on at least 3-fold degeneracy but is invalid in the discrete setting discussed here as we will show in Section~\ref{sec:complete-graph}.
To reconcile the results of non-pure-state $v$-representability with the graph setting, we give an explicit counterexample in Section~\ref{sec:cuboctahedron}. However, before that we discuss an example where actually every density is pure-state $v$-representable and thus $\tilde F = F$.

\subsection{Triangle graph: Full \texorpdfstring{$v$}{v}-representability and explicit constrained-search functional}
\label{sec:triangle-v-rep}

The two-particle Hamiltonian for the triangle graph with a general potential $v$ was already given in \eqref{eq:H-triangle} together with the 2-fold degenerate ground-state space for $v=0$ spanned by the two orthonormal states
\be
 \Psi_A = \frac{1}{\sqrt{2}} (1,0,-1) \quad \text{and} \quad \Psi_B =\frac{1}{\sqrt{6}} (1,2,1).
\ee
in the $\H_2$-basis $\{e_1\wedge e_2,e_1 \wedge e_3,e_2\wedge e_3\}$ with densities $\rho_A=\frac{1}{2}(1,2,1)$ and $\rho_B=\frac{1}{6}(5,2,5)$.
This multiplicity disappears as soon as a non-constant potential is applied, symmetry is broken, and the degeneracy is lifted. That the Hamiltonian with a non-constant $v$ actually has no multiple eigenvalues can be checked by showing that the discriminant of the characteristic polynomial of $H(v)$ is non-zero for all $v$, which can be done explicitly in this case with some effort. This means that the only densities that can come from ensemble states are the ground-state densities for $v=0$, collected in the set $\mathcal{C}$, all coming from the states $\Psi_A,\Psi_B$ above. Hence, in order to test the possibility of full pure-state $v$-representability, just the mixtures of $\Psi_A$ and $\Psi_B$ have to be scrutinized. However, any convex combination $\lambda_A\rho_A+\lambda_B\rho_B \in \mathcal{C}$ is also simply the density of the linear combination $\sqrt{\lambda_A}\Psi_A+\i\sqrt{\lambda_B}\Psi_B$ with the real $\Psi_A,\Psi_B$, so the two real dimensions of the complex plane that correspond to the double degeneracy save the day. It is thus straightforwardly shown that \emph{any} density in $\mathcal{P}_{3,2}^+$ is also pure-state $v$-representable and consequently $\tilde F=F$ on the triangle graph by Proposition~\ref{prop:pure-state-v-rep}.

Since any convex combination of two densities from the set $\mathcal{C}$ of ensemble ground-state densities for $v=0$ is again in $\mathcal{C}$, this set is convex. Furthermore, it is closed, because it is the image of a compact set $\{c_A\Psi_A+c_B\Psi_B \mid |c_A|^2+|c_B|^2=1\}$ under the continuous mapping $\Psi \mapsto \rho$.
To get a graphical representation of the density-potential map we observe that the set $\mathcal{P}_{3,2}$ forms a triangle itself: The extreme points are the densities $(1,1,0)$, $(1,0,1)$, and $(0,1,1)$, and all other elements of $\mathcal{P}_{3,2}$ are convex combinations of these three points and so it is natural to use barycentric coordinates for their representation in Figure~\ref{fig:triangle-mapping}. The uniform density $\bar\rho=\tfrac{2}{3}(1,1,1)=\tfrac{1}{2}\rho_A + \tfrac{1}{2}\rho_B$ coming from $1/\sqrt{2}\Psi_A + \i/\sqrt{2}\Psi_B$, which is clearly a ground-state solution to $v=0$ already due to symmetry, forms the center. The possible linear combinations $\Psi=c_A\Psi_A+c_B\Psi_B$ with $|c_A|^2+|c_B|^2=1$ that form the degenerate ground-state manifold for $v=0$ all fulfil a certain density constraint: Take without loss of generality $c_A=\alpha$ and $c_B=\beta \e^{\i\varphi}$ with $\alpha,\beta,\varphi\in\R$ and $\alpha^2+\beta^2=1$, then the density of $\Psi$ is
\be\begin{aligned}
 \rho = \frac{1}{6}(& 3\alpha^2+5\beta^2+2\sqrt{3}\alpha\beta\cos\varphi,\\[-0.2em] & 6\alpha^2+2\beta^2,\\ & 3\alpha^2+5\beta^2-2\sqrt{3}\alpha\beta\cos\varphi).
\end{aligned}\ee
We form the Euclidean distance to the center $\bar\rho$ and use $\alpha^2+\beta^2=1$ to get
\begin{align}
\|\rho-\bar\rho\| &= \left( \sum_i (\rho_i-\bar\rho_i)^2 \right)^{\frac{1}{2}} \nonumber\\&= \left( \frac{1}{6}(1 - 2\alpha^2 + 2\alpha^4 + 2\alpha^2(1-\alpha^2)\cos(2\varphi)) \right)^{\frac{1}{2}} \nonumber\\&\leq \frac{1}{\sqrt{6}}
\label{eq:incircle-condition}
\end{align}
if the cosine is estimated by 1. Independent of $\alpha,\beta$ the choice of $\varphi$ that maximizes the above expression will always yield $1/\sqrt{6}$. This means all the possible $\rho \in \mathcal{C}$ from the degenerate ground states for $v=0$ form a closed circular region of radius $1/\sqrt{6}$ around the center $\bar\rho$. However, $1/\sqrt{6}$ is precisely the incircle radius of an equilateral triangle of side length $\|(1,1,0)-(1,0,1)\|=\sqrt{2}$, so the set $\mathcal{C}$ will be the incircle region of $\mathcal{P}_{3,2}$, touching the border of the triangle at three points. All this is displayed in Figure~\ref{fig:triangle-mapping} correspondingly, together with some details on the mapping back to potentials that will be discussed next.

When plotting the associated potentials we will choose the gauge condition $\sum_i v_i=0$ which can always be achieved by subtracting the constant $\tfrac{1}{M}\sum_i v_i$ from a given potential.
Already in Sections~\ref{sec:ex-non-unique} we noted that by applying an attractive potential from $\{(0,-t,0)\mid t>0\}$, or equivalently from $\{(t,-2t,t)\mid t>0\}$, the ground state $\Psi_A$ remains unchanged (because it has full density $\rho_{A,2}=1$ at vertex 2) while degeneracy is lifted and thus all those potentials yield the same exceptional ground-state density $\rho_A=(\tfrac{1}{2},1,\tfrac{1}{2})$ that formed the first counterexample to the HK theorem. By symmetry, the same is also true for the potentials $(-2t,t,t)$ and $(t,t,-2t)$, just with permuted exceptional densities $(1,\tfrac{1}{2},\tfrac{1}{2})$ and $(\tfrac{1}{2},\tfrac{1}{2},1)$. They occur exactly at those three points where the incircle touches the border of the equilateral density triangle, consequently they are in $\mathcal{P}_{3,2}$ but not in $\mathcal{P}_{3,2}^+$ (still they are $v$-representable, even by an infinite number of different potentials). These considerations display the topological richness of the density-potential mapping for this simplest, non-trivial example of a fermionic graph: While the whole incircle region $\mathcal{C}$ including its boundary with the exception of the three exceptional points is mapped many-to-one to the unique potential $v=0$ (as always modulo an additive constant) the three exceptional points are mapped one-to-many to three rays extending straight into infinity and dividing the whole two-dimensional potential space (after removing the additive constant) into three separate regions. The three remaining open regions in the spikes of the density triangle $\mathcal{S}_1,\mathcal{S}_2,$ and $\mathcal{S}_3$ were already shown to come from non-degenerate states that arise from non-constant potentials, and also the density inside the spikes (not on the boundary) does not have zero or full occupancy at any vertex. One can thus argue in the triangle case that none of the coefficients of the corresponding wave function is zero, so by the Odlyzko condition (Corollary~\ref{cor:HK-necessary}) they are all uv-densities and therefore belong to a unique potential. Those densities are consequently mapped one-to-one to the potential space and form the open set $\mathcal{U}_{3,2} = \mathcal{S}_1\cup\mathcal{S}_2\cup\mathcal{S}_3$ of non-degenerate, uniquely $v$-representable ground-state densities from Theorem~\ref{th:U-open}. The remaining border of the density triangle, with the exception of the exceptional points, cannot be reached by any potential but only be approximated by very large potentials. The whole situation is summarized in Figure~\ref{fig:triangle-mapping}.

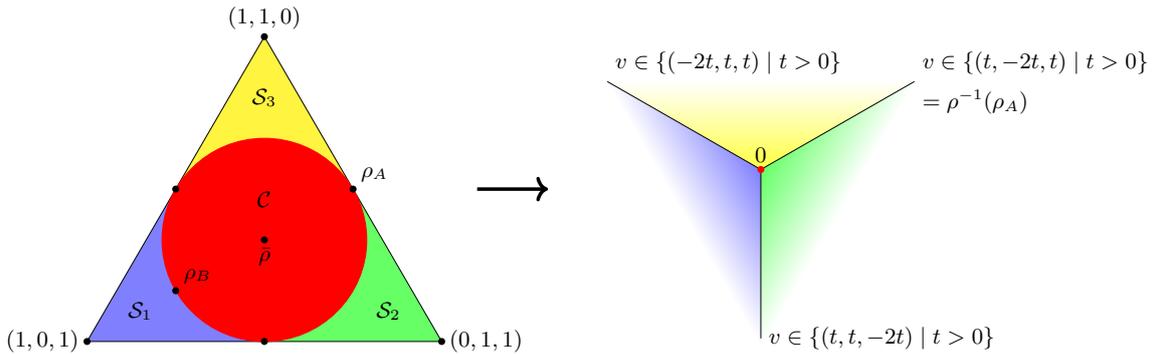
\begin{figure*}[ht]
	\centering
	\resizebox{.85\textwidth}{!}{%
	\begin{tikzpicture}
	\def\ORX{0}; 
	\def\ORY{-1};
	\def\L{5}; 
    \coordinate (R1) at (\ORX,{\ORY+\L*sqrt(3)/2-\L/2/sqrt(3)}); 
    \coordinate (R2) at ({\ORX-\L/2},{\ORY-\L/2/sqrt(3)}); 
    \coordinate (R3) at ({\ORX+\L/2},{\ORY-\L/2/sqrt(3)}); 
	\coordinate (RI) at ({\ORX-\L/4},{\ORY+\L*sqrt(3)/2/2-\L/2/sqrt(3)});
	\coordinate (RII) at ({\ORX+\L/4},{\ORY+\L*sqrt(3)/2/2-\L/2/sqrt(3)});
	\coordinate (RIII) at (\ORX,{\ORY-\L/2/sqrt(3)});
	\coordinate (RB) at ({\ORX-\L/4},{\ORY-\L/4/sqrt(3)});
    \path[fill=yellow,opacity=0.75] (RI) -- (R1) -- (RII) -- (RI);
    \path[fill=green,opacity=0.6] (RII) -- (R3) -- (RIII) -- (RII);
    \path[fill=blue,opacity=0.5] (RIII) -- (R2) -- (RI) -- (RIII);
    \draw (R1) -- (R2) -- (R3) -- (R1);
    \node at (R1) [circle,fill,inner sep=1.0pt] {};
    \node at (R2) [circle,fill,inner sep=1.0pt] {};
    \node at (R3) [circle,fill,inner sep=1.0pt] {};
    \node[above] at (R1) {$(1,1,0)$};
    \node[left] at (R2) {$(1,0,1)$};
    \node[right] at (R3) {$(0,1,1)$};
	\draw [red,fill=red] (\ORX,\ORY) circle ({\L/2/sqrt(3)});
	\node [above left] at (RI) {};
	\node at (RI) [circle,fill,inner sep=1.0pt] {};
	\node [above right] at (RII) {$\rho_{A}$};
	\node at (RII) [circle,fill,inner sep=1.0pt] {};
	\node [below] at (RIII) {};
	\node at (RIII) [circle,fill,inner sep=1.0pt] {};
	\node [below] at (\ORX,\ORY) {$\bar\rho$};
	\node at (\ORX,\ORY) [circle,fill,inner sep=1.0pt] {};
	\node [above right] at (RB) {$\rho_B$};
	\node at (RB) [circle,fill,inner sep=1.0pt] {};
	\node at ({(\ORX-\L/2)*0.7+\ORX*0.3},{(\ORY-\L/2/sqrt(3))*0.7+\ORY*0.3}) {$\mathcal{S}_1$};
	\node at ({(\ORX+\L/2)*0.7+\ORX*0.3},{(\ORY-\L/2/sqrt(3))*0.7+\ORY*0.3}) {$\mathcal{S}_2$};
	\node at ({\ORX*0.7+\ORX*0.3},{(\ORY+\L*sqrt(3)/2-\L/2/sqrt(3))*0.7+\ORY*0.3}) {$\mathcal{S}_3$};
	\node at ({\ORX*0.2+\ORX*0.8},{(\ORY+\L*sqrt(3)/2-\L/2/sqrt(3))*0.2+\ORY*0.8}) {$\mathcal{C}$};
	\draw [->, line width=.5mm] (3,{\ORY+\L*sqrt(3)/2/2-\L/2/sqrt(3)}) -- (4,{\ORY+\L*sqrt(3)/2/2-\L/2/sqrt(3)});
	\def\OPX{7}; 
	\def\OPY{0};
	\def\L{2.5};
	\coordinate (V1) at (\OPX,\OPY-2.4);
	\coordinate (V2) at ({\OPX-\L*sqrt(3)/2},{\OPY+\L/2});
	\coordinate (V3) at ({\OPX+\L*sqrt(3)/2},{\OPY+\L/2});
    \shade[bottom color=yellow!75,top color=white] (V2) -- (\OPX,\OPY) -- (V3) -- (V2);
    \shade[bottom color=green!60,top color=white,transform canvas={rotate around={240:(\OPX,\OPY)]}}] (V2) -- (\OPX,\OPY) -- (V3) -- (V2);
    \shade[bottom color=blue!50,top color=white,transform canvas={rotate around={120:(\OPX,\OPY)]}}] (V2) -- (\OPX,\OPY) -- (V3) -- (V2);
	\draw (\OPX,\OPY) -- (V1);
	\draw (\OPX,\OPY) -- (V2);
	\draw (\OPX,\OPY) -- (V3);
	\node [right] at (V1) {$v \in \{(t,t,-2t)\mid t>0\}$};
	\node [above right] at (V2) {$v \in \{(-2t,t,t)\mid t>0\}$};
	\node [above right] at (V3) {$v \in \{(t,-2t,t)\mid t>0\}$};
	\node [below right] at (V3) {$=\rho^{-1}(\rho_A)$};
	\node at (\OPX,\OPY) [circle,fill=red,inner sep=1.0pt] {}; 
	\node [above] at (\OPX,\OPY) {$0$};
	\end{tikzpicture}
	}
	\caption{Here the topological features of the (multivalued) density--potential mapping in the triangle graph example are displayed. We look upon the density domain $\mathcal{P}_{3,2}$ from the $(1,1,1)$ direction in $\R^3$ space, where it appears as an equilateral triangle. The red circle $\mathcal{C}$ corresponds to the origin $v=(0,0,0)$ in the plane of gauged potentials with $\sum_i v_i=0$. Furthermore, the three exceptional densities, where the incircle touches the triangle, are mapped to the three displayed rays in the potential plane. Lastly, the three colored spikes of the triangle $\mathcal{S}_1,\mathcal{S}_2,\mathcal{S}_3$ are mapped to the three corresponding areas in the potential plane that are separated by the rays.}
	\label{fig:triangle-mapping}
\end{figure*}


\begin{figure}[ht]
    \centering
    \resizebox{\columnwidth}{!}{%
    \includegraphics[width=\columnwidth]{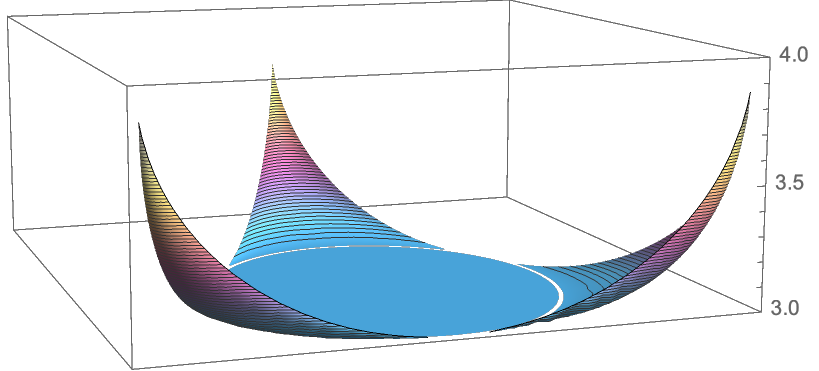}
    }
    \caption{Plot of the functional $F(\rho)=\tilde F(\rho)$ on the density domain $\mathcal{P}_{3,2}$ for the triangle graph example. The central, circular region that attains the minimum $F(\rho)=3$ comes from the two degenerate ground states as $\Psi = c_A\Psi_A + c_B\Psi_B \mapsto \rho$.}
    \label{fig:F-triangle}
\end{figure}

However, we can go much further in this simplest, yet extremely vivid example. It is even possible to give a closed, analytical expression for the functional $\tilde F(\rho)$ by choosing a parameterization for the trial wave function and minimizing \eqref{eq:tilde-F-def}. Using the already introduced division of $\mathcal{P}_{3,2}^+$ including the three exceptional points into the closed circular region $\mathcal{C}$ and the three open spike regions $\mathcal{S}_1,\mathcal{S}_2,\mathcal{S}_3$ (bottom left, bottom right, and top), we get after a quite tedious calculation that is given in Appendix~\ref{sec:app} that
\begin{widetext}
\begin{equation}\label{eq:full-triangle-tildeF}
    \tilde{F}(\rho) = \left\{ \begin{array}{ll}
        3 & \rho\in\mathcal{C} \\[0.5em]
        4 + 2  \left( \sqrt{(1-\rho_1) (1-\rho_3)} - \sqrt{(1-\rho_1) (1-\rho_2)} - \sqrt{(1-\rho_2) (1-\rho_3)}  \right) & \rho \in \mathcal{S}_1  \\[0.5em]
        4+2 \left(-\sqrt{(1-\rho_1) (1-\rho_3)} - \sqrt{(1-\rho_1) (1-\rho_2)} + \sqrt{(1-\rho_2) (1-\rho_3)} \right)\quad & \rho \in \mathcal{S}_2 \\[0.5em]
        4+2\left( -\sqrt{(1-\rho_1) (1-\rho_3)} + \sqrt{(1-\rho_1) (1-\rho_2)} - \sqrt{(1-\rho_2) (1-\rho_3)}  \right) & \rho \in \mathcal{S}_3 .
\end{array}
\right.
\end{equation}
\end{widetext}
The plot of the functional is displayed in Figure~\ref{fig:F-triangle}. It would be possible now to check that $\tilde F$ is convex directly and thus $F=\tilde F$, a result already obtained by arguing that any density in $\mathcal{P}_{3,2}^+$ is pure-state $v$-representable.
The only other example for an explicit form of a functional $\tilde F$ in the literature that we are aware of is by \citet{schonhammer1987discontinuity}, but only for a 2-site model.

Since we have obtained $F = \tilde{F}$ analytically, we can actually find the ground-state energy and density by direct minimization of the energy functional such as in \eqref{eq:E-LF} without any reference to the wave function. We have
\be
E(v) = \inf_\rho \left\{ \tilde F (\rho) + \rho_1 v_1 + \rho_2 v_2 + \rho_3 v_3 \right\}
\ee
where the minimization is over all densities $0 \leq \rho_i \leq 1$ and $\rho_1+\rho_2+\rho_3=2$.
Since we can always put $v_3=0$ by a constant shift of the potential we can write this as
\be
E(v) = \inf_{\rho_1,\rho_2} \left\{ \tilde{F} (\rho_1,\rho_2,2-\rho_1-\rho_2) + \rho_1 v_1 + \rho_2 v_2 \right\},
\ee
where $v_1,v_2$ are specified. The minimizing equations from the differentiation of the functional as already mentioned at the end of Section~\ref{sec:CS-functionals} are
\be
v_j = -\frac{\partial \tilde{F} (\rho_1,\rho_2,2-\rho_1-\rho_2)}{\partial \rho_j} \quad \quad (j=1,2)
\label{eq:Fderiv}
\ee
which determines the density if we specify the potential. 
If the potential $v$ is such that $\rho \in \mathcal{S}_2$ we need to employ
\begin{equation}
\begin{aligned}
\tilde F (\rho_1,\rho_2,2-\rho_1- \rho_2) = 4&+2 \left(  \sqrt{(1-\rho_2)(\rho_1+\rho_2-1)} \right. \\ &-  \sqrt{(1-\rho_1)(\rho_1+\rho_2-1)} \\&\left. - \sqrt{(1-\rho_1)(1-\rho_2)} \right)
\end{aligned}
\end{equation}
for which \eqref{eq:Fderiv} gives the following equations,
\begin{align}
v_1=&-\frac{\partial \tilde F (\rho_1,\rho_2,2-\rho_1-\rho_2)}{\partial \rho_1} \nonumber\\ =&
-\sqrt{ \frac{1-\rho_2}{\rho_1+\rho_2-1 } }
-
\sqrt{ \frac{\rho_1+\rho_2-1 }{1-\rho_1}}
\\&+
\sqrt{ \frac{1-\rho_1}{\rho_1+\rho_2-1 } }
-
\sqrt{ \frac{1-\rho_2}{1-\rho_1 } }, \nonumber
\end{align}
and
\begin{align}
v_2=&-\frac{\partial \tilde F  (\rho_1,\rho_2,2-\rho_1-\rho_2)}{\partial \rho_2} \nonumber\\=&
-\sqrt{ \frac{1-\rho_2}{\rho_1+\rho_2-1 } }
+
\sqrt{ \frac{\rho_1+\rho_2-1 }{1-\rho_2}}
\\&+
\sqrt{ \frac{1-\rho_1}{\rho_1+\rho_2-1 } }
-
\sqrt{ \frac{1-\rho_1}{1-\rho_2 } }. \nonumber
\end{align}
Take, for example, $(v_1,v_2,v_3)=(2,1,0)$, then we find that these equations
have the solution $(\rho_1,\rho_2,\rho_3)=(0.2121,\allowbreak 0.8176,\allowbreak 0.9704)$ which indeed lies inside the green density area $\mathcal{S}_2$ in Figure~\ref{fig:triangle-mapping} and gives
$\tilde F(\rho)=3.0832$ and $E(v)=\tilde F(\rho)+ 2\rho_1 +\rho_2=4.3249$, which can be checked by direct diagonalization of $H(v)$ to be the ground-state eigenvalue of the Hamiltonian \eqref{eq:H-triangle} for our choice of potential. We have therefore shown by explicit example that it is possible to find the ground-state energy without diagonalizing the Hamiltonian if
the analytic form of $F(\rho)$ is known and the functional is found to be differentiable (which it always is almost everywhere, as we remarked earlier).

\subsection{Complete graph: Counterexample to Lieb's non-convexity proof}
\label{sec:complete-graph}

A complete graph consists of $M$ vertices in which every vertex is connected to all other vertices. This is in a sense the opposite situation to the linear chain discussed in Section~\ref{sec:lin-chain} because it is maximally connected, whereas the linear chain is minimally connected.
For example, the complete graph with $M=3$ is the triangle graph that we already studied in detail in the previous section.
Taking the one-particle Hamiltonian $h$ from \eqref{eq:h-with-v}, for a complete graph we have
\be
h_{ii} = (M-1) + v_i, \quad \quad h_{ij} =-1  \quad ( i \neq j).
\ee
By Corollary~\ref{cor:PF-one-particle-H} we can find a non-degenerate one-particle ground state
\be
\phi_0 = (\phi_{0,1},\ldots, \phi_{0,M}) \in \H_1 \quad \text{with all} \quad \phi_{0,i} > 0.
\ee
If $\phi_k\in\H_1$ for $k=1,\ldots, N-1$ are the lowest, orthonormal, excited one-particle eigenstates of $h$ then the non-interacting $N$-particle ground state (or one out of the ground state multiplet, in case of degeneracy) is given by 
\be
\Psi_0 = \phi_0 \wedge \phi_1 \wedge \ldots \wedge \phi_{N-1} \in \H_N.
\ee
The density $\rho$ of the $N$-particle ground-state is the sum of one-particle densities evaluated at vertex $i$ such as in \eqref{eq:rho-from-Slater}, and we get
\be
\rho_i = \sum_{k=0}^{N-1} | \phi_{k,i} |^2 \geq \phi_{0,i}^2 > 0
\ee
since $\phi_{0,i}$ was found to be strictly positive.
This is not special to the complete graph but valid for all non-interacting $N$-particle states on a graph described by a graph Laplacian
(or more generally when we can apply the Perron--Frobenius theorem to the one-particle Hamiltonian, as demonstrated in Section~\ref{sec:pos-dens}).

For $v=0$ we define the uniform one-particle wave function
\be
\phi_0 = \frac{1}{\sqrt{M}} (1,1,\ldots,1)
\ee
for which we get $h\phi_0 = 0$ and this will turn out to be precisely the ground state. The excited states $\phi_k$ for $k=1,\ldots,M-1$ must be orthonormal to the eigenstate $\phi_0$, so they must be of the form
\be
\begin{aligned}
&\phi = (c_1,c_2,\ldots,c_M), 
\\&\langle \phi_0,\phi \rangle = \frac{1}{M}(c_1+c_2+\ldots + c_M)=0.
\end{aligned}
\ee
This in turn already means that any such vector is an eigenvector of $h$ with eigenvalue $M$, since
\be
\begin{aligned}
(h\phi)_i &= \sum_j h_{ij}c_j \\&= h_{ii} c_i + \sum_{j \neq i} h_{ij} c_j \\&= (M-1) c_i - \sum_{j \neq i} c_j \\&= M c_i - \sum_{j} c_j = M c_i.
\end{aligned}
\ee
Hence, the excited states are $(M-1)$-fold degenerate with eigenvalue $M$. One possible choice for such an excited state would be the maximally localized
\be\label{eq:complete-graph:localized-phi}
\phi = \frac{1}{\sqrt{2}} (1,-1,0,\ldots,0)
\ee
that immediately shows that a unique continuation property (UCP) for eigenstates of graph Hamiltonians is not achievable, since most coefficients are zero here.

We can give an alternative construction for the excited states using the following basic relations for the $M$-th roots of unity: Let $\omega= \exp(2 \pi \i/M)$ and define the $M$ vectors $\phi_k$ with components
\be\label{eq:complete-graph:phi-k-excited}
\phi_{k}= \frac{1}{\sqrt{M}} ( \omega^k, \omega^{2k}, \ldots, \omega^{Mk} ),\, k \in \{0,\ldots,M-1\}.
\ee
These vectors correspond to plane waves in the discrete setting and are therefore the basis for the discrete Fourier space.
Then, from a well-known relation for the roots of unity, we have
\be
\langle \phi_k, \phi_l \rangle = \frac{1}{M} \sum_{j=1}^M \, \omega^{(l-k)j} =  \delta_{k,l},
\ee
which establishes the fact that the orbitals $\phi_k$ define an orthonormal basis for $\H_1$ and are indeed the sought-after eigenstates of $h$.
Interestingly, the density of all these orbitals is the same,
\be
|\phi_k |^2 = \frac{1}{M} (1,1,\ldots,1),
\ee
and we have therefore found an orthonormal basis of equal density, which is also an eigenbasis. Finally, it should be noted that this equidensity eigenbasis forms a clear contradiction to the suggested proof of non-convexity of $\tilde F$ in \citet[Theorem~3.4]{Lieb1983}\footnote{Note that the original reference also contains a typo in the formulation of the theorem, where it says $F$, which is always convex as noted here in Section~\ref{sec:CS-functionals}, but actually the pure-state constrained-search functional $\tilde F$ is meant.}. In Lieb's proof it is stated that for a Hamiltonian with a certain rotational symmetry, as we have here with full permutational symmetry in the complete graph when $v=0$, the density of a uniformly mixed state including all orthonormal, degenerate ground-state orbitals will also have the full symmetry. In our case we just take two particles, $N=2$, and the ground states $\Phi_k = \phi_0\wedge\phi_k$, $k=1,\ldots,M-1$. Furthermore, it says that if $\Phi$ is any pure ground state, and hence a linear combination of the $\Phi_k$, then its density will \emph{not} have the same symmetry. However, in our example all $\Phi_k$ have exactly the same fully symmetric density. This means that the described procedure to find a non-pure-state $v$-representable density cannot work in general. However, we will rescue the argument right away in the next section with a more complex pure-state \texorpdfstring{$v$}{v}-representability counterexample.

\subsection{Cuboctahedron graph: Pure-state \texorpdfstring{$v$}{v}-representability counterexample}
\label{sec:cuboctahedron}


Here we give an example of a density $\rho \in \mathcal{P}_{M,N}^+$ which is \emph{not} pure-state $v$-representable. This issue was already addressed by \citet{Levy1982} and \citet{Lieb1983} who pointed out that any such example needs a system with at least three-fold degeneracy.
Here, we only have to show that the chosen ground-state ensemble density cannot come from a pure ground state of the \emph{same} Hamiltonian, which obviously saves us a lot of work.
This is because the first part of the HK theorem (Theorem~\ref{th:HK1}) states that if $H=H_0+V$ has a ground state $\Psi$ with a given density then $\Psi$ will also be the ground state for any other $H'=H_0+V'$ that allows the same ground-state density. Hence, it is always enough to check for just one potential.
The system under consideration is a graph with $M=12$ and the symmetry of a 3D structure that is known as the \emph{cuboctahedron}, see Figure~\ref{fig:gr12}. 
\citet{ullrich2002} used a very similar system in order to demonstrate that in general just very few potentials lead to degeneracy while there are many densities coming from degenerate states, a result that we already used in Section~\ref{sec:degen-non-uv} and that was also observed in the example of Section~\ref{sec:triangle-v-rep}.

\begin{figure}[ht]
\centering
\resizebox{\columnwidth}{!}{%
\begin{tikzpicture}
\def\centerarc[#1](#2)(#3:#4:#5)
    { \draw[#1] ($(#2)+({#5*cos(#3)},{#5*sin(#3)})$) arc (#3:#4:#5);}
  \centerarc[black](0,0)(0:360:3);
  \coordinate (g7) at (3*cos{45},3*sin{45}) ;
  \coordinate (g6) at (3*cos{135},3*sin{135}) ;
  \coordinate (g5) at (3*cos{225},3*sin{225}) ;
  \coordinate (g8) at (3*cos{315},3*sin{315}) ;
  
  \coordinate (g9) at (3*cos{135},0);
  \coordinate (g10) at (0,3*sin{45});
  \coordinate (g11) at (3*cos{45},0);
  \coordinate (g12) at (0,3*sin{315});
  
  \coordinate (g1) at (-1.5*cos{45},- 1.5*cos{45});
  \coordinate (g2) at (-1.5*cos{45},1.5*cos{45});
  \coordinate (g3) at (1.5*cos{45},1.5*cos{45});
  \coordinate (g4) at (1.5*cos{45},-1.5*cos{45});
  
  \draw[black] (g5) -- (g6) -- (g7) -- (g8) -- (g5);
  \draw[black] (g1) -- (g2) -- (g3) -- (g4) -- (g1);
  \draw[black] (g9) -- (g10) -- (g11) -- (g12) -- (g9);

  \node[graphnode] at (g1) { 1 };
   \node[graphnode] at (g2) { 2 };
   \node[graphnode] at (g3) { 3 };
   \node[graphnode] at (g4) { 4 };
   \node[graphnode] at (g5) { 5 };
   \node[graphnode] at (g6) { 6 };
   \node[graphnode] at (g7) { 7 };
   \node[graphnode] at (g8) { 8 };
    \node[graphnode] at (g9) { 9 };
   \node[graphnode] at (g10) {  \scriptsize 10 };
   \node[graphnode] at (g11) { \scriptsize 11 };
   \node[graphnode] at (g12) {  \scriptsize 12 };
 
	\node[inner sep=0pt] at (5.5,0)
		{\includegraphics[width=0.2\textwidth]{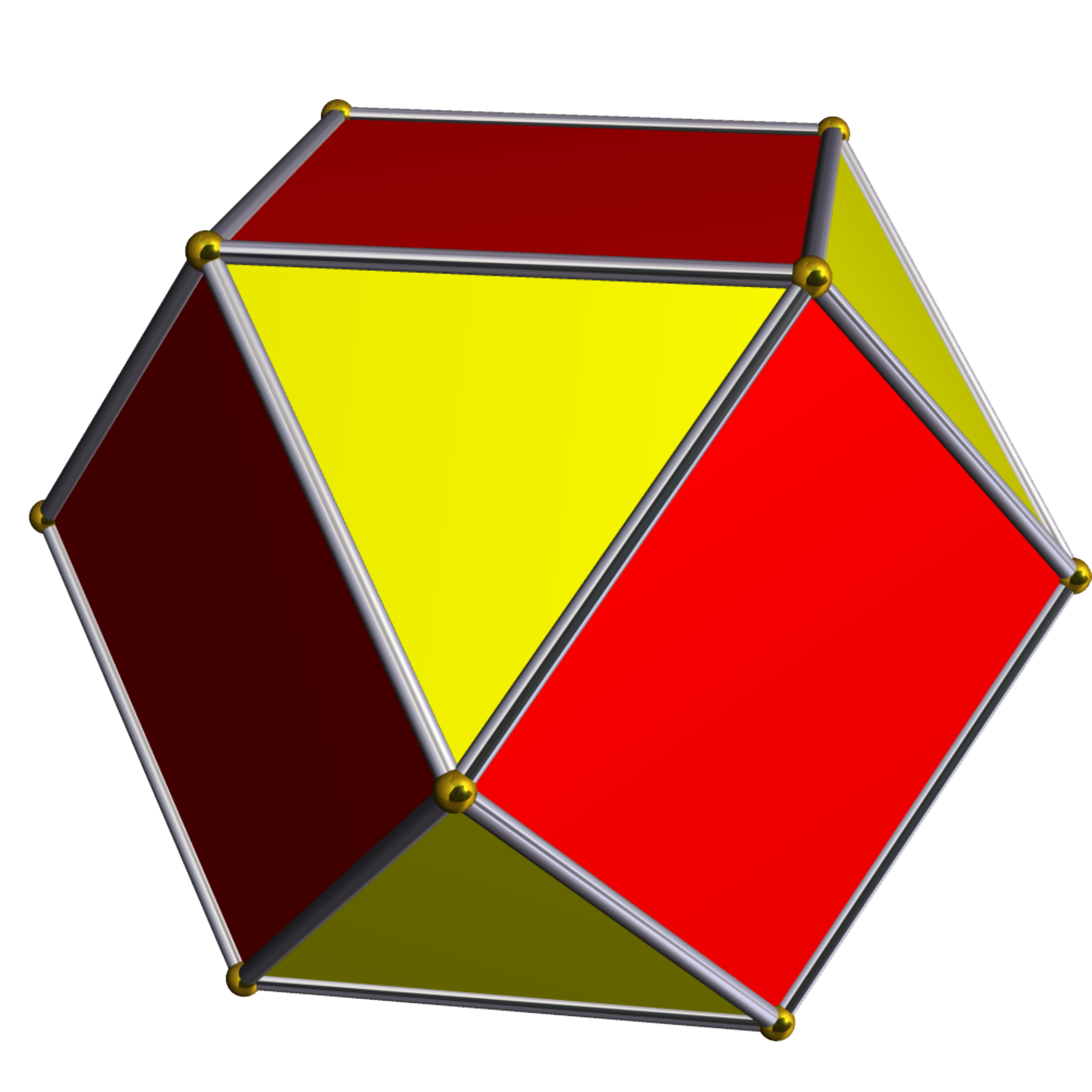}};
\end{tikzpicture}
}
\caption{Cuboctahedron graph. The graph is regular since every vertex has 4 adjacent neighbours. Its shape as an Archimedean solid is also displayed.}
\label{fig:gr12}
\end{figure}

Taking the corresponding graph Laplacian \eqref{eq:graph-laplacian} without any further potentials as the one-particle Hamiltonian $h=-\Delta$, we arrive at the energy eigenvalues $E_0=0$ (non-degenerate), $E_1 = 2$ (3-fold degenerate), $E_2 = 4$ (3-fold degenerate), and $E_3 = 6$ (5-fold degenerate).
The ground state in the basis ordering displayed in Figure~\ref{fig:gr12} is given by
\be
\phi_0 = \frac{1}{\sqrt{12}} (1,1,1,1,1,1,1,1,1,1,1,1),
\ee
and the first three degenerate excited states are given by
\begin{align}
\phi_1 &= \frac{1}{\sqrt{8}} (-1,-1,-1,-1,1,1,1,1,0,0,0,0),\\
\phi_2 &= \frac{1}{4} ( 1,-1,-1,1,1,-1,-1,1,0,-2,0,2),
\end{align}
and
\be
\phi_3 = \frac{1}{4} (  -1,-1,1,1,-1,-1,1,1,-2,0,2,0).
\ee
The densities corresponding to these eigenvectors are given by
\begin{align}
\rho_0 &= \frac{1}{12} (1,1,1,1,1,1,1,1,1,1,1,1), \\
\rho_1 &= \frac{1}{8} (1,1,1,1,1,1,1,1,0,0,0,0 ), \\
\rho_2 &= \frac{1}{16} (1,1,1,1,1,1,1,1,0,4,0,4), \\
\rho_3 &= \frac{1}{16} (1,1,1,1,1,1,1,1,4,0,4,0).
\end{align}
If now $N=2$ non-interacting particles are considered on this graph with $2$-particle Hamiltonian $H$, their ground state is
\begin{equation}\label{eq:gr12-gs}
    \Psi = \sum_{n=1}^3 c_n \phi_0 \wedge \phi_n = \phi_0 \wedge \left(\sum_{n=1}^3 c_n\phi_n\right) \;\text{with}\; \sum_{n=1}^3 |c_n|^2=1
\end{equation}
or any mixture of such states. From the above we see that in this case any pure ground state can be written as a single Slater determinant, unlike in the example of \citet{englisch2}, where they showed using an explicit system with 3 particles with 6-fold degeneracy that even in a non-interacting system not every pure ground state can be written as a single Slater determinant. We take the equally distributed ensemble made from the three $\phi_0\wedge\phi_n$ as our counterexample, leading to to the uniform density
\be
\bar{\rho} = \rho_0 + \frac{1}{3} ( \rho_1 + \rho_2 + \rho_3 )
= \frac{1}{6} (1,1,1,1,1,1,1,1,1,1,1,1).
\ee
We will now demonstrate that this density cannot come from any pure state of the form \eqref{eq:gr12-gs}, which is, as we argued before, the most general form of a pure ground state of the Hamiltonian $H$. The density of \eqref{eq:gr12-gs} is
\be
\rho_i = \rho_{0,i} + \left|\sum_{n=1}^3 c_n\phi_{n,i}\right|^2,
\ee
so in order for this density to be equal to $\bar\rho$ we need to find coefficients $c_n$ that give
\be
\left|\sum_{n=1}^3 c_n\phi_{n,i}\right|^2 = \frac{1}{12} \quad\text{for all}\;i \in \{1,\ldots,12\}.
\ee
From those 12 equations some can be eliminated as duplicates and the following 6 remain,
\begin{align}
    &|\sqrt{2}c_1 + c_2 + c_3|^2 = |\sqrt{2}c_1 + c_2 - c_3|^2 \nonumber\\
    & \;\,\quad\quad\quad\quad\quad\quad\quad= |\sqrt{2}c_1 - c_2 + c_3|^2 \nonumber\\
    \label{eq:cubo-c-equations}
    & \;\,\quad\quad\quad\quad\quad\quad\quad= |\sqrt{2}c_1 - c_2 - c_3|^2 = \tfrac{4}{3},\\&
    |c_2|^2=|c_3|^2=\tfrac{1}{3}.\nonumber
\end{align}
By adding the first two equations we get
\begin{align}
    &|\sqrt{2}c_1 + c_2 + c_3|^2+|\sqrt{2}c_1 + c_2 - c_3|^2 \nonumber\\
    &= 2\left(|\sqrt{2}c_1 + c_2|^2 + |c_3|^2\right) = \tfrac{8}{3} \\&\Rightarrow \quad |\sqrt{2}c_1 + c_2|^2=1 \nonumber
\end{align}
and similarly $|\sqrt{2}c_1 - c_2|^2 = |\sqrt{2}c_1 + c_3|^2 = |\sqrt{2}c_1 - c_3|^2=1$. Using the same trick again we arrive at
\be
\begin{aligned}
    &|\sqrt{2}c_1 + c_2|^2+|\sqrt{2}c_1 - c_2|^2 = 4|c_1|^2+2|c_2|^2 = 2 \\
    & \Rightarrow \quad |c_1|^2 = |c_2|^2 = |c_3|^2 = \tfrac{1}{3}.
\end{aligned}
\ee
However, the equations above also mean that $\sqrt{2}c_1$ has equal distance to $\pm c_2$ and to $\pm c_3$, so in the complex plane it must lie on a line orthogonal to the lines connecting $c_2$ and $-c_2$ as well as $c_3$ and $-c_3$. This can only be fulfilled if $c_1 = \i s c_2 = \i t c_3$ with $s,t \in \mathbb{R}$, found to be $|s|=|t|=1$ from $|c_1|^2 = |c_2|^2 = |c_3|^2$. This means $c_2=\pm c_3$ which is in contradiction to \eqref{eq:cubo-c-equations} where then $c_2$ and $c_3$ would always cancel in two of the equations and thus would not lead to the necessary result $\tfrac{4}{3}$.
This shows that $\bar\rho$ cannot come from a pure ground state and thus is not pure-state $v$-representable. This in turn implies by Proposition~\ref{prop:pure-state-v-rep} that $\tilde F \neq F$, so since $F$ is known to be the convex hull of $\tilde F$ the pure-state constrained-search functional $\tilde F$ cannot be convex in general.

\section{Conclusions and open questions}
\label{sec:main:conclusions}

The main objective of this work was to point towards the problem of possible HK violations in a lattice setting and to seek possible remedies. One big relief is of course the result that almost all densities are uniquely $v$-representable (Theorem~\ref{th:non-uv-dens-measure-zero}). However, this and the many other albeit positive results point to a much larger theoretical complex that promises interesting paths for future investigations, some of which are collected in the following listing:

\begin{itemize}
    \item The considerations in Section~\ref{sec:degen-non-uv} point towards the intriguing possibility of a full \emph{geometrization} of the density-potential mapping. Therein, the potentials that lead to non-uv states are bounded by potentials that lead to degeneracy, while the corresponding degeneracy regions eventually touch the border of the density set or each other at the non-uv density points. The whole situation is therefore reduced to spheres (in different dimensionality) and lines (planes etc.), the basic elements of Euclidean geometry. A first demonstration has been given with the triangle example in Figure~\ref{fig:triangle-mapping}, a speculative depiction of how this might look in a more complex situation is given in Figure~\ref{fig:conjecture-mapping}. A corollary from this could be that almost all potentials lead to uv densities, a statement already put forward in Section~\ref{sec:degen-non-uv}.
    
\begin{figure*}[ht] 
	\centering
	\resizebox{.75\textwidth}{!}{%
	\begin{tikzpicture}
	\def\ORX{0}; 
	\def\ORY{-1};
	\def\L{5}; 
	\def\R{0.915}; 
    \coordinate (R1) at (\ORX,{\ORY+\L*sqrt(3)/2-\L/2/sqrt(3)}); 
    \coordinate (R2) at ({\ORX-\L/2},{\ORY-\L/2/sqrt(3)}); 
    \coordinate (R3) at ({\ORX+\L/2},{\ORY-\L/2/sqrt(3)}); 
    \coordinate (P1) at ({\ORX-\R/2},{\ORY+sqrt(3)/6*\R});
    \coordinate (P2) at ({\ORX+\R/2},{\ORY+sqrt(3)/6*\R});
    \coordinate (P3) at (\ORX,{\ORY-1/sqrt(3)*\R});
    \coordinate (P4) at ({\ORX-(1+sqrt(3)/2)*\R},{\ORY-sqrt(3)/6*\L+3/2*\R});
    \coordinate (P5) at ({\ORX+(1+sqrt(3)/2)*\R},{\ORY-sqrt(3)/6*\L+3/2*\R});
    \coordinate (P6) at ({\ORX-sqrt(3)/2*\R},{\ORY+1/sqrt(3)*\L-3/2*\R});
    \coordinate (P7) at ({\ORX+sqrt(3)/2*\R},{\ORY+1/sqrt(3)*\L-3/2*\R});
    \coordinate (P8) at ({\ORX-\R},{\ORY-sqrt(3)/6*\L});
    \coordinate (P9) at ({\ORX+\R},{\ORY-sqrt(3)/6*\L});
	\coordinate (C1) at (\ORX,{\ORY+2/sqrt(3)*\R)});
	\coordinate (C2) at ({\ORX-\R},{\ORY-(sqrt(3)/6*\L-\R)});
	\coordinate (C3) at ({\ORX+\R},{\ORY-(sqrt(3)/6*\L-\R)});
    \path[fill=yellow,opacity=0.75] (R1) -- (P6) -- (P7) -- (R1);
    \path[fill=yellow,opacity=0.75] (R2) -- (P4) -- (P8) -- (R2);
    \path[fill=yellow,opacity=0.75] (R3) -- (P5) -- (P9) -- (R3);
    \path[fill=green,opacity=0.6] (P8) -- (P9) -- (P3) -- (P8);
    \path[fill=green,opacity=0.6] (P4) -- (P6) -- (P1) -- (P4);
    \path[fill=green,opacity=0.6] (P5) -- (P7) -- (P2) -- (P5);
    \path[fill=blue,opacity=0.5] (P1) -- (P2) -- (P3) -- (P1);
    \draw (R1) -- (R2) -- (R3) -- (R1);
    \draw [red,fill=red] (C1) circle (\R);
    \draw [red,fill=red] (C2) circle (\R);
    \draw [red,fill=red] (C3) circle (\R);
    \node at (C1) {$\mathcal{C}_1$};
    \node at (C2) {$\mathcal{C}_2$};
    \node at (C3) {$\mathcal{C}_3$};
    \node at (P1) [circle,fill,inner sep=1.0pt] {};
    \node at (P2) [circle,fill,inner sep=1.0pt] {};
    \node at (P3) [circle,fill,inner sep=1.0pt] {};
    \node at (P4) [circle,fill,inner sep=1.0pt] {};
    \node at (P5) [circle,fill,inner sep=1.0pt] {};
    \node at (P6) [circle,fill,inner sep=1.0pt] {};
    \node at (P7) [circle,fill,inner sep=1.0pt] {};
    \node at (P8) [circle,fill,inner sep=1.0pt] {};
	\node at (P9) [circle,fill,inner sep=1.0pt] {};
	\draw [->, line width=.5mm] (3,{\ORY+\L*sqrt(3)/2/2-\L/2/sqrt(3)}) -- (4,{\ORY+\L*sqrt(3)/2/2-\L/2/sqrt(3)});
	\def\OPX{7}; 
	\def\OPY{0};
	\def\L{2.5};
    \coordinate (V1) at (\OPX,{\OPY+\L/2/sqrt(3)}); 
    \coordinate (V2) at ({\OPX-\L/2},{\OPY-\L*sqrt(3)/2+\L/2/sqrt(3)}); 
    \coordinate (V3) at ({\OPX+\L/2},{\OPY-\L*sqrt(3)/2+\L/2/sqrt(3)}); 
    \coordinate (V11) at ({\OPX-\L/4},{\OPY+\L/2/sqrt(3)+sqrt(3)/4*\L});
    \coordinate (V12) at ({\OPX+\L/4},{\OPY+\L/2/sqrt(3)+sqrt(3)/4*\L});
	\coordinate (V21) at ({\OPX-\L},{\OPY-\L*sqrt(3)/2+\L/2/sqrt(3)});
    \coordinate (V22) at ({\OPX-3*\L/4},{\OPY-\L*sqrt(3)/2+\L/2/sqrt(3)-sqrt(3)/4*\L});
    \coordinate (V31) at ({\OPX+\L},{\OPY-\L*sqrt(3)/2+\L/2/sqrt(3)});
    \coordinate (V32) at ({\OPX+3*\L/4},{\OPY-\L*sqrt(3)/2+\L/2/sqrt(3)-sqrt(3)/4*\L});
	\path[fill=blue,opacity=0.5] (V1) -- (V2) -- (V3) -- (V1);
    \shade[bottom color=yellow!75,top color=white] (V1) -- (V11) -- (V12) -- (V1);
    \shade[bottom color=yellow!75,top color=white,transform canvas={rotate around={240:(\OPX,{\OPY-sqrt(3)/6*\L})}}] (V1) -- (V11) -- (V12) -- (V1);
    \shade[bottom color=yellow!75,top color=white,transform canvas={rotate around={120:(\OPX,{\OPY-sqrt(3)/6*\L})}}] (V1) -- (V11) -- (V12) -- (V1);
    \shade[top color=green!60,bottom color=white] (V22) -- (V2) -- (V3) -- (V32);
    \shade[top color=green!60,bottom color=white,transform canvas={rotate around={240:(\OPX,{\OPY-sqrt(3)/6*\L})]}}] (V22) -- (V2) -- (V3) -- (V32);
    \shade[top color=green!60,bottom color=white,transform canvas={rotate around={120:(\OPX,{\OPY-sqrt(3)/6*\L})]}}] (V22) -- (V2) -- (V3) -- (V32);
	\draw (V11) -- (V32);
	\draw (V12) -- (V22);
	\draw (V21) -- (V31);
	\node at (V1) [circle,fill=red,inner sep=1.0pt] {};
	\node [left] at (V1) {$\rho^{-1}(\mathcal{C}_1)$};
	\node at (V2) [circle,fill=red,inner sep=1.0pt] {};
	\node [above left] at (V2) {$\rho^{-1}(\mathcal{C}_2)$};
	\node at (V3) [circle,fill=red,inner sep=1.0pt] {};
	\node [above right] at (V3) {$\rho^{-1}(\mathcal{C}_3)$};
	\end{tikzpicture}
	}
	\caption{Symbolic depiction of a the possible topological structure of the density--potential mapping. The three red regions of densities from degenerate ground states $\mathcal{C}_1,\mathcal{C}_2,\mathcal{C}_3$ correspond to the three potential points in red. The single, marked density points are non-uv and get mapped to lines of potentials that extend to infinity when the density lies on the border of the density set. The graphs have to be understood as a 2D cut through the $(M-1)$-dimensional density and potential spaces.}
	\label{fig:conjecture-mapping}
\end{figure*}
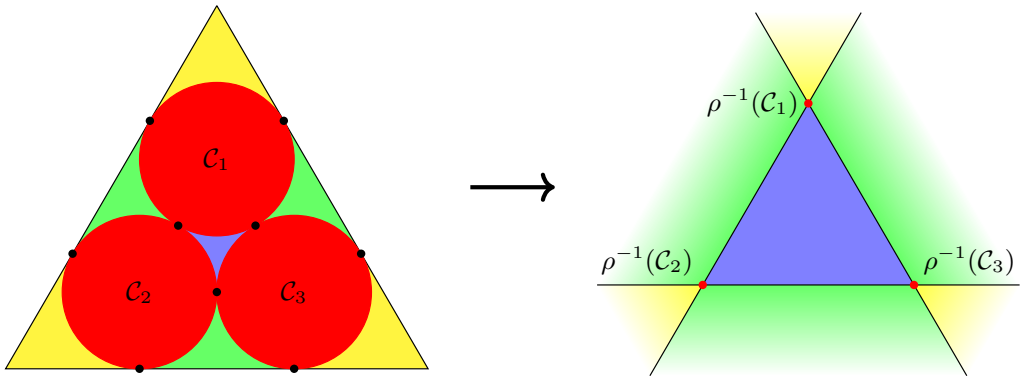

    \item By presenting simple examples of graphs on which the HK theorem is violated (Section~\ref{sec:ex-non-unique}) and where it still holds (Section~\ref{sec:lin-chain}) the following task appears naturally: Give a maximal classification of graph topologies, including the allowed interactions, such that the HK theorem indeed holds for all ground states or even all eigenstates by fulfilling the Odlyzko condition from Corollary~\ref{cor:HK-necessary}. From the examples it seems that it is beneficial if, for many-particle systems, they are not \emph{too connected}. This could be checked by studying regular graphs (where every vertex has equal degree, i.e., the same number of connections; in this aspect they are closer to the continuum setting with a flat geometry), trees (which do not contain any loops), or graphs with a certain degree limit.
    
    \item We noted that by Odlyzko's theorem the number of zeroes in the wave function is directly related to unique $v$-representability. A closely connected issue is how these zeroes are distributed on a fermionic graph. A similar question has been the focus of considerable research in connection with the Courant--Hilbert nodal domain theorem for graph Laplacians \cite{graph-eigenvectors-book}. A natural question to ask in our context is whether a similar theorem holds for fermionic Hamiltonians. A significant open problem related to this situation is the so-called nodal-domain conjecture \cite{ceperley1991,bajdich2005,mitas2006} which states that interacting spin-$\tfrac{1}{2}$ fermions can be described by a spatial wave function with two connected domains, one on which the wave function is positive and one on which it is negative, separated by a nodal boundary. The resolution of this issue has important applications with regard to the so-called sign problem in Quantum Monte Carlo methodology.
    
    \item At the core of practical DFT lie, of course, the various approximations to the exchange-correlation potential and it would be interesting to see what can be learned about them and about possible exact conditions within the theory presented here.
    
    \item The omission of spin in this work comes as no restriction per se, since a spin degree of freedom can always be taken into account by adding more vertices. Yet, there will be usually no separate external potential acting on this internal coordinate and the graph will be disconnected into separate spin components. This leads to non-uniqueness of potentials that was already noted and discussed in both the continuous \cite{capelle2001nonuniqueness} and lattice setting \cite{ullrich2005nonuniqueness-spin}. Take for example $N=2$ spin-$\tfrac{1}{2}$ particles on the triangle graph then the one-particle Hamiltonian simply duplicates for the spin up and spin down sectors, represented by two disconnected graphs. The resulting fermionic graph displayed in Figure~\ref{fig:spin-graph} has three connected components of equal total spin polarization. In the general case of $N$ spin-$s$ particles we get ${N+2s \choose N}$ connected components in the fermionic graph. This immediately gives non-uv states due to disconnectedness, while the possibility of additional non-uniqueness of the type discussed in this paper can still appear in each separate component. If one now adds general magnetic fields they will appear as non-diagonal terms in the one-particle Hamiltonian and couple the different spin components. 

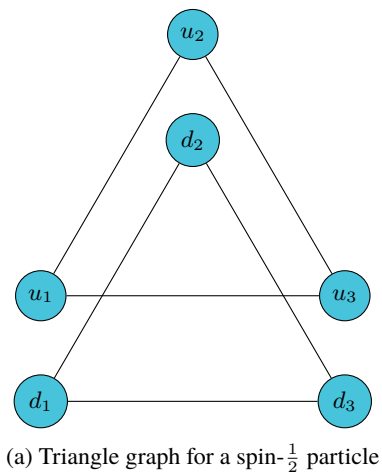
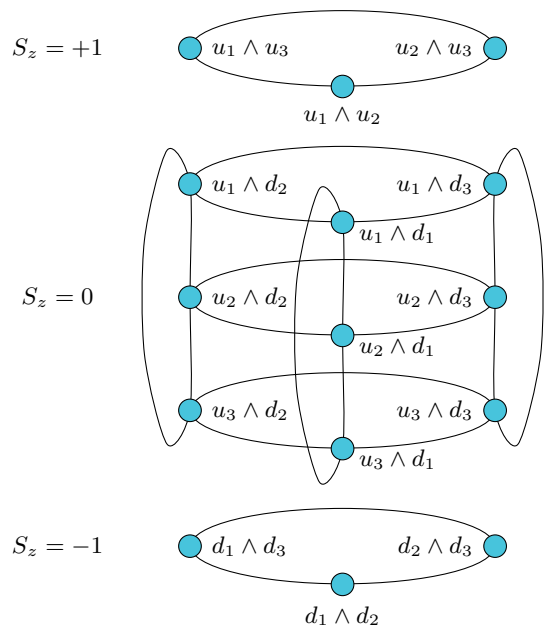
\begin{figure*}[ht]
\begin{subfigure}{.45\textwidth}
  \centering
  \begin{tikzpicture}[scale=2.0]
    \coordinate (CU1) at (-1,0);
    \coordinate (CU2) at (0,{sqrt(3)});
    \coordinate (CU3) at (1,0);
    \def\dY{-0.7};
    \coordinate (CD1) at (-1,{0+\dY});
    \coordinate (CD2) at (0,{sqrt(3)+\dY});
    \coordinate (CD3) at (1,{0+\dY});
    \foreach \m in {1,...,3}
    	\node[graphnode] (NU\m) at (CU\m) {$u_\m$};
    \foreach \m in {1,...,3}
    	\node[graphnode] (ND\m) at (CD\m) {$d_\m$};
    \draw (NU1) -- (NU2) -- (NU3) -- (NU1);
    \draw (ND1) -- (ND2) -- (ND3) -- (ND1);
  \end{tikzpicture}
  \caption{Triangle graph for a spin-$\tfrac{1}{2}$ particle}
\end{subfigure}
\hfill 
\begin{subfigure}{.45\textwidth}
  \centering
  \begin{tikzpicture}[scale=1.5]
    \node[draw,ellipse,minimum height=1cm,minimum width=4cm] (E1) at (0,0) {};
    \node[draw,ellipse,minimum height=1cm,minimum width=4cm] (E2) at (0,1) {};
    \node[draw,ellipse,minimum height=1cm,minimum width=4cm] (E3) at (0,2) {};
    \node[draw,ellipse,minimum height=1cm,minimum width=4cm] (EU) at (0,3.2) {};
    \node[draw,ellipse,minimum height=1cm,minimum width=4cm] (ED) at (0,-1.2) {};
   	\coordinate (E1C1) at (E1.270);
   	\coordinate (E1C2) at (E1.180);
   	\coordinate (E1C3) at (E1.0);
   	\coordinate (E2C1) at (E2.270);
   	\coordinate (E2C2) at (E2.180);
   	\coordinate (E2C3) at (E2.0);
   	\coordinate (E3C1) at (E3.270);
   	\coordinate (E3C2) at (E3.180);
   	\coordinate (E3C3) at (E3.0);
   	\coordinate (EUC1) at (EU.270);
   	\coordinate (EUC2) at (EU.180);
   	\coordinate (EUC3) at (EU.0);
   	\coordinate (EDC1) at (ED.270);
   	\coordinate (EDC2) at (ED.180);
   	\coordinate (EDC3) at (ED.0);
   	\draw plot [smooth cycle] coordinates {(E1C1) (E2C1) (E3C1) ($ (E3C1) + (-0.2,0.3) $) ($ (E3C1)!.5!(E2C1) + (-0.4,0) $) ($ (E2C1)!.5!(E1C1) + (-0.4,0) $) ($ (E1C1) + (-0.2,-0.3) $)};
   	\draw plot [smooth cycle] coordinates {(E1C2) (E2C2) (E3C2) ($ (E3C2) + (-0.2,0.3) $) ($ (E3C2)!.5!(E2C2) + (-0.4,0) $) ($ (E2C2)!.5!(E1C2) + (-0.4,0) $) ($ (E1C2) + (-0.2,-0.3) $)};
   	\draw plot [smooth cycle] coordinates {(E1C3) (E2C3) (E3C3) ($ (E3C3) + (0.2,0.3) $) ($ (E3C3)!.5!(E2C3) + (0.4,0) $) ($ (E2C3)!.5!(E1C3) + (0.4,0) $) ($ (E1C3) + (0.2,-0.3) $)};
    \foreach \m in {1,...,3}
    	\node[graphnode] (E1N\m) at (E1C\m) {};
    \foreach \m in {1,...,3}
    	\node[graphnode] (E2N\m) at (E2C\m) {};
    \foreach \m in {1,...,3}
    	\node[graphnode] (E3N\m) at (E3C\m) {};
    \foreach \m in {1,...,3}
    	\node[graphnode] (EUN\m) at (EUC\m) {};
    \foreach \m in {1,...,3}
    	\node[graphnode] (EDN\m) at (EDC\m) {};
    \node[below=5pt] at (EUC1) {$u_1\wedge u_2$};
    \node[right=5pt] at (EUC2) {$u_1\wedge u_3$};
    \node[left=5pt] at (EUC3) {$u_2\wedge u_3$};
    \node[below=5pt] at (EDC1) {$d_1\wedge d_2$};
    \node[right=5pt] at (EDC2) {$d_1\wedge d_3$};
    \node[left=5pt] at (EDC3) {$d_2\wedge d_3$};
    \node[below=4pt,right=3pt] at (E3C1) {$u_1\wedge d_1$};
    \node[right=5pt] at (E3C2) {$u_1\wedge d_2$};
    \node[left=5pt] at (E3C3) {$u_1\wedge d_3$};
    \node[below=4pt,right=3pt] at (E2C1) {$u_2\wedge d_1$};
    \node[right=5pt] at (E2C2) {$u_2\wedge d_2$};
    \node[left=5pt] at (E2C3) {$u_2\wedge d_3$};
    \node[below=4pt,right=3pt] at (E1C1) {$u_3\wedge d_1$};
    \node[right=5pt] at (E1C2) {$u_3\wedge d_2$};
    \node[left=5pt] at (E1C3) {$u_3\wedge d_3$};
    \node at (-2.5,3.2) {$S_z=+1$};
    \node at (-2.5,-1.2) {$S_z=-1$};
    \node at (-2.5,1) {$S_z=0$};
  \end{tikzpicture}
  \caption{Fermionic triangle graph for two spin-$\tfrac{1}{2}$ particles where the spin-compensated part turns out to be a discrete torus}
\end{subfigure}
\caption{Example for considering spin.}
\label{fig:spin-graph}
\end{figure*}
    
    \item In Section~\ref{sec:main:Rellich} we put forward the conjecture that the set of all uv densities in $\mathcal{P}_{M,N}$ is always open and the given examples seem to support this claim. Conversely, this would mean that the non-uv densities, which are known to have measure zero (Theorem~\ref{th:non-uv-dens-measure-zero}), also form a closed set.
    
    \item The field of graph theory has barely been touched in this work, so it is possible that some known results might prove useful for DFT of graphs. For example there is a theory about the spectra of signed graphs \cite{zaslavsky1982,belardo2019signed-graphs} that have a close connection to the fermionic Hamiltonians with $\pm 1$ entries in the off-diagonal appearing here.
    
    \item One further remaining question would be on how to achieve an appropriate continuum limit that also restores the full HK theorem. In accordance with \citet{lammert2010well} and contrary to \citet{ullrich2002} we do not believe that ``[t]he transition from a dense discrete lattice to a continuous variable does not appear to offer difficulties.''
    
    \item Going beyond the ground-state situation there is the question whether some of the insights gained here can be used to study the time-dependent Schr\"odinger equation for fermions on graphs.
    The discrete setting has been useful to study fundamental issues in time-dependent density functional theory \cite{li2008time,tokatly2011time,farzanehpour2012time}, and has important implications for quantum transport applications \cite{uimonen2011,stefanucci2015,jacob2020} or quantum dynamics of many-particle systems on lattices as in the Hubbard model \cite{karlsson2011} or in atomic traps \cite{kartsev2013}. In the mathematics literature there further exist many results on diffusion on graphs \cite{spectral-graph-book}. A significant advantage of the discrete setting is that path integrals become well-defined objects.
    
\end{itemize}

\section*{Data Availability Statement}

In order to explore some properties of fermionic systems on graphs numerically and display the graphs alongside their fermionic counterparts, we created a small Python script that is part of the public domain and can be found at {\small\url{https://mage.uber.space/dokuwiki/material/fermion-graph}}. All further data that support the findings of this study are available from the corresponding author upon reasonable request.

\begin{acknowledgments}
The authors express their gratitude towards Michael Ruggenthaler who helped them with the final draft of the manuscript during a joint stay at the Oberwolfach Research Institute for Mathematics.
M.~P.\ further acknowledges helpful discussions with Alexander Steinicke, the hospitality provided by Aisha during the final stage of this work, and financial support by the Erwin Schrödinger Fellowship J 4107-N27 of the FWF (Austrian Science Fund). R.~v.~L.\ acknowledges the Academy of Finland for support under project no.\ 317139.
\end{acknowledgments}

\appendix 

\section{Expression for the constrained-search functional in the triangle example}
\label{sec:app}

Here we give the full derivation of the closed, analytical expression for the constrained-search density functional $\tilde F$ for a non-interacting, two-particle system on the triangle graph with Hamiltonian \eqref{eq:H-triangle}
that is referenced in Section~\ref{sec:triangle-v-rep}. A general trial wave function $\Psi$ given in the $\{e_I\}_I$ basis of $\H_2$ is 
\be
    \Psi = c_1\, e_1\wedge e_2 + c_2\, e_1 \wedge e_3 + c_3\, e_2 \wedge e_3
\ee
and we want to minimize \eqref{eq:tilde-F-def} under the constraint $|c_1|^2+|c_2|^2+|c_3|^2=1$. Now since
\be\label{eq:app:constraints}
    \rho_1 = |c_1|^2+|c_2|^2, \rho_2 = |c_1|^2+|c_3|^2, \rho_3 = |c_2|^2+|c_3|^2,
\ee
and $\rho_1+\rho_2+\rho_3=2$, we can parameterize
\be
    c_1 = \sqrt{1-\rho_3}\,\e^{\i\varphi_1}, \;
    c_2 = \sqrt{1-\rho_2}\,\e^{\i\varphi_2}, \;
    c_3 = \sqrt{1-\rho_1},
\ee
where a global phase factor has been split off. Because $\rho$ is fixed, the search space is just $(\varphi_1,\varphi_2) \in [0,2\pi)^2$. In the next step we calculate the expectation value with respect to $H(0)$ from \eqref{eq:H-triangle} and get
\begin{align}\label{eq:app:H0-exp-val}
    \langle \Psi,\,&H(0)\Psi \rangle \\&= 4 + 2\left( \alpha_1\cos\varphi_1 -  \alpha_2\cos\varphi_2 - \alpha_3\cos(\varphi_1-\varphi_2) \right), \nonumber
\end{align}
where we defined
\be
\begin{aligned}
& \alpha_1 = \sqrt{(1-\rho_1) (1-\rho_3)}, \\ & \alpha_2 = \sqrt{(1-\rho_1) (1-\rho_2)}, \\ & \alpha_3 = \sqrt{(1-\rho_2) (1-\rho_3)}.
\end{aligned}
\ee
Since we limit ourselves to $\mathcal{P}_{3,2}^+$ all $\alpha_i>0$, while at the border of the triangular density domain an explicit solution is easily possible. This is because then always one $\rho_i=1$ and consequently two $\alpha_i$ are zero. At the edge where $\rho_A$ is located this leads to $\tilde F(\rho)=4-2\alpha_1$ and similarly to an expression with $\alpha_2,\alpha_3$ at the other edges.
In order to find the minimum of \eqref{eq:app:H0-exp-val} for the interior of $\mathcal{P}_{M,N}$ over all $\varphi_1,\varphi_2$ that gives the value of $\tilde F(\rho)$ we consider the expression
\be
    T(\varphi_1,\varphi_2) = \alpha_1\cos\varphi_1 -  \alpha_2\cos\varphi_2 - \alpha_3\cos(\varphi_1-\varphi_2)
\ee
for fixed values of $\alpha_i$ and solve
\begin{align}
    \label{eq:app:dTphi1}
    &0 = \frac{\partial T}{\partial \varphi_1} = -\alpha_1\sin\varphi_1+\alpha_3\sin(\varphi_1-\varphi_2),\\
    &0 = \frac{\partial T}{\partial \varphi_2} = \alpha_2\sin\varphi_2-\alpha_3\sin(\varphi_1-\varphi_2)
\end{align}
in order to find a minimum for $T(\varphi_1,\varphi_2)$.
Adding both equations gives
\be
-\alpha_1 \sin \varphi_1 + \alpha_2 \sin \varphi_2 =0 \; \Rightarrow \; \sin \varphi_2 =  \frac{\alpha_1}{\alpha_2} \sin \varphi_1.
\label{eq:app:result}
\ee
Rewriting \eqref{eq:app:dTphi1} as 
\be
0 = -\alpha_1 \sin \varphi_1 + \alpha_3 ( \sin \varphi_1 \cos \varphi_2 - \cos \varphi_1 \sin \varphi_2 )
\ee
and using the result \eqref{eq:app:result} from before then gives
\be
\sin \varphi_1 \left(- \alpha_1 + \alpha_3 \left( \cos \varphi_2  - \frac{\alpha_1}{\alpha_2} \cos \varphi_1 \right)\right) = 0.
\label{eq:app:case}
\ee
One option that immediately follows is that $\sin \varphi_1=0$ which then immediately yields $\sin \varphi_2=0$, which
gives $\varphi_1,\varphi_2 \in \{0,\pi\}$. 
We obtain
\begin{align}
(\varphi_1,\varphi_2)=(0,\pi) \; &\mapsto \; T(0,\pi)=\alpha_1 + \alpha_2 + \alpha_3, \\ 
(\varphi_1,\varphi_2)=(0,0) \; &\mapsto \; T(0,0)=\alpha_1 - \alpha_2- \alpha_3, \label{eq:app:00} \\ 
(\varphi_1,\varphi_2)=(\pi,0) \; &\mapsto \; T(\pi,0)=-\alpha_1 - \alpha_2 + \alpha_3,  \label{eq:app:p0}\\ 
(\varphi_1,\varphi_2)=(\pi,\pi) \; &\mapsto \; T(\pi,\pi)=-\alpha_1 + \alpha_2- \alpha_3. \label{eq:app:pp}
\end{align}
The case $(\varphi_1,\varphi_2)=(0,\pi)$ giving $\alpha_1+\alpha_2+\alpha_3>0$ is always larger than the other ones and can thus be disregarded.
In order to make $T$ minimal by choosing a density in $\mathcal{P}_{3,2}$ we have to, let us say for \eqref{eq:app:00}, minimize $\alpha_1$ and maximize the other two, which is clearly achieved by $\rho_2=1$ and $\rho_1=\rho_3=\tfrac{1}{2}$, exactly the exceptional point $\rho_A$ again. The lower bound for \eqref{eq:app:00}-\eqref{eq:app:pp} is thus $-\tfrac{1}{2}$ which means $\tilde F=4+2T \geq 3$, the correct ground-state eigenvalue of $H(0)$.
The remaining case of \eqref{eq:app:case} is when
\be
\begin{aligned}
&-\alpha_1 + \alpha_3 \left( \cos \varphi_2  - \frac{\alpha_1}{\alpha_2} \cos \varphi_1 \right)  = 0
\\\Rightarrow \quad &
\cos \varphi_2 =  \frac{\alpha_1}{\alpha_2} \left( \cos \varphi_1 +  \frac{\alpha_2}{\alpha_3} \right).
\label{eq:app:cos_expr}
\end{aligned}
\ee
We then find that
\be
\begin{aligned}
1&= \sin^2 \varphi_2 + \cos^2 \varphi_2 \\&=\left( \frac{\alpha_1}{\alpha_2} \right)^2 \sin^2 \varphi_1 + \left( \frac{\alpha_1}{\alpha_2} \right)^2 \left( \cos \varphi_1 + 
\frac{\alpha_2}{\alpha_3} \right)^2,
\end{aligned}
\ee
which gives
\be
\begin{aligned}
\left( \frac{\alpha_2}{\alpha_1} \right)^2 &= \sin^2 \varphi_1 + \left( \cos \varphi_1 + 
\frac{\alpha_2}{\alpha_3} \right)^2 \\&= 1 + 2 \frac{\alpha_2}{\alpha_3} \cos \varphi_1 + \left( \frac{\alpha_2}{\alpha_3} \right)^2,
\end{aligned}
\ee
from which we deduce
\begin{equation}\label{eq:app:2x-cos} 
\begin{aligned}
&\cos \varphi_1 
= \frac{\alpha_2 \alpha_3}{2 \alpha_1^2} - \frac{\alpha_2 }{2 \alpha_3} - \frac{\alpha_3 }{2 \alpha_2} 
\quad\text{and}\\
&\cos \varphi_2 = \frac{\alpha_3 }{2 \alpha_1} + \frac{\alpha_1 }{2 \alpha_3} - \frac{\alpha_1 \alpha_3 }{2 \alpha_2^2},
\end{aligned}
\end{equation}
where for the second equation we used \eqref{eq:app:cos_expr}.
We have therefore found an explicit expression for the angles of the stationary point
\be
\begin{aligned}
(\varphi_1,\varphi_2) = &\left( \arccos \left(\frac{\alpha_2 \alpha_3}{2 \alpha_1^2} - \frac{\alpha_2 }{2 \alpha_3} - \frac{\alpha_3 }{2 \alpha_2}\right),\right.\\
&\left.\;\;\, \arccos\left( \frac{\alpha_3 }{2 \alpha_1} + \frac{\alpha_1 }{2 \alpha_3} - \frac{\alpha_1 \alpha_3 }{2 \alpha_2^2}\right) \right).
\label{eq:app:opt_phi}
\end{aligned}
\ee
It now remains to evaluate $T$ in this optimal point. 
From \eqref{eq:app:2x-cos} we find
\be
\begin{aligned}
 \alpha_1 \cos \varphi_1 - \alpha_2 \cos \varphi_2  = - \frac{\alpha_1 \alpha_2}{\alpha_3}
 \label{eq:app:result2}
\end{aligned}
\ee
and we proceed to evaluate the term
\be
\begin{aligned}
 \cos (\varphi_1 -\varphi_2) &= \cos \varphi_1 \cos \varphi_2 + \sin \varphi_1 \sin \varphi_2 
\\&=  \cos \varphi_1   \cos \varphi_2  +  \frac{\alpha_1 }{\alpha_2} \sin^2 \varphi_1  
\\&=  \cos \varphi_1   \cos \varphi_2 + \frac{\alpha_1 }{\alpha_2} \left(1 - \cos^2 \varphi_1\right)  \\[-0.3em]
&=  \frac{\alpha_1 }{\alpha_2} - \frac{1}{\alpha_2} (  \alpha_1 \cos \varphi_1 - \alpha_2 \cos \varphi_2  ) \cos \varphi_1  \\&= \frac{\alpha_1 }{2 \alpha_2} + \frac{\alpha_2}{2 \alpha_1} - \frac{\alpha_1 \alpha_2}{2 \alpha_3^2},
\end{aligned}
\ee
where we used \eqref{eq:app:result}, \eqref{eq:app:2x-cos}, and \eqref{eq:app:result2}.
Therefore we obtain the expression
\be
\alpha_3  \cos (\varphi_1 -\varphi_2) = \frac{\alpha_1 \alpha_3 }{2 \alpha_2} + \frac{\alpha_2 \alpha_3}{2 \alpha_1} - \frac{\alpha_1 \alpha_2}{2 \alpha_3} 
\ee
and together with \eqref{eq:app:result2} this gives
\be
\begin{aligned}
T &= \alpha_1 \cos \varphi_1 - \alpha_2 \cos \varphi_2 - \alpha_3 \cos (\varphi_1 - \varphi_2) \\&=
 -\frac{\alpha_1 \alpha_3 }{2 \alpha_2} - \frac{\alpha_2 \alpha_3}{2 \alpha_1}  - \frac{\alpha_1 \alpha_2}{2 \alpha_3} 
\end{aligned}
\ee
at the stationary point. Now since 
\be
\frac{\alpha_1 \alpha_2}{ \alpha_3} = 1-\rho_1, \;
\frac{\alpha_2 \alpha_3}{ \alpha_1} = 1-\rho_2, \;
\frac{\alpha_1 \alpha_3}{ \alpha_2} = 1-\rho_3
\label{eq:app:alpha_eqs}
\ee
this gives a value at the minimum of 
\be
T = -\frac{1}{2} ( 3 - \rho_1- \rho_2- \rho_3) = - \frac{1}{2}
\quad\Rightarrow\quad
\tilde{F} = 4 + 2T =3.
\ee
This is also the lower bound for \eqref{eq:app:00}-\eqref{eq:app:pp} as already demonstrated, so we have definitely found a minimum here, but it remains to find out for which densities the derivation we performed was even valid. After all, we need to check that
\be
-1 \leq \cos \varphi_1 \leq 1 \quad \text{and thus} \quad -1 \leq \frac{\alpha_2 \alpha_3}{2 \alpha_1^2} - \frac{\alpha_2 }{2 \alpha_3} - \frac{\alpha_3 }{2 \alpha_2} \leq 1,
\ee
which equivalently gives
\be
- \alpha_1 \leq \frac{\alpha_2 \alpha_3}{2 \alpha_1} - \frac{\alpha_2 \alpha_1 }{2 \alpha_3} - \frac{\alpha_1\alpha_3 }{2 \alpha_2} \leq \alpha_1.
\ee
From \eqref{eq:app:alpha_eqs} we then have that
\begin{align}
\frac{\alpha_2 \alpha_3}{2 \alpha_1} &- \frac{\alpha_2 \alpha_1 }{2 \alpha_3} - \frac{\alpha_1\alpha_3 }{2 \alpha_2}
= \frac{1}{2} ( 1- \rho_2 - (1-\rho_1) - (1- \rho_3) ) \nonumber\\&= \frac{1}{2} (\rho_1+ \rho_3 -\rho_2-1) 
= \frac{1}{2} (1 -2 \rho_2)
\end{align}
and therefore $- 2 \alpha_1 \leq 1 - 2 \rho_2 \leq 2 \alpha_1$.
Squaring this equation gives the condition
\be
( 1 - 2 \rho_2 )^2 \leq 4 \alpha_1^2 = 4 (1-\rho_1) (1-\rho_3)=4 (1-\rho_1)  (\rho_1+ \rho_2-1),
\ee
which can be checked to be equivalent to
\be
\left( \sum_i \left(\rho_i-\frac{2}{3}\right)^{\!\!2} \right)^{\!\!\frac{1}{2}} \leq \frac{1}{\sqrt{6}},
\ee
precisely the condition for the incircle region $\mathcal{C}$ already derived in \eqref{eq:incircle-condition}. This means the result $\tilde F = 3$ is valid inside $\mathcal{C}$ while the three other results from \eqref{eq:app:00}-\eqref{eq:app:pp} are for the spike regions $\mathcal{S}_1,\mathcal{S}_2,\mathcal{S}_3$. This allows us to collect all results in the full expression for $\tilde F$ already given in \eqref{eq:full-triangle-tildeF},
\begin{equation}
    \tilde{F}(\rho) = \left\{ \begin{array}{ll}
        3 & \rho\in\mathcal{C} \\
        4 + 2  \left( \alpha_1 - \alpha_2 - \alpha_3 \right) & \rho \in \mathcal{S}_1  \\
        4+2 \left(-\alpha_1 - \alpha_2 + \alpha_3 \right)\quad & \rho \in \mathcal{S}_2 \\
        4+2\left( -\alpha_1 + \alpha_2 - \alpha_3  \right) & \rho \in \mathcal{S}_3 .
\end{array}
\right.
\end{equation}
Finally, let us remark that an equivalent approach to retrieve an expression for the constrained-search density functional would be to use the Lagrange multiplier method for finding the minimum of the Hermitian form $\langle \Psi, H_0\Psi \rangle$ under the constraints \eqref{eq:app:constraints}.


\begin{thebibliography}{67}%
\makeatletter
\providecommand \@ifxundefined [1]{%
 \@ifx{#1\undefined}
}%
\providecommand \@ifnum [1]{%
 \ifnum #1\expandafter \@firstoftwo
 \else \expandafter \@secondoftwo
 \fi
}%
\providecommand \@ifx [1]{%
 \ifx #1\expandafter \@firstoftwo
 \else \expandafter \@secondoftwo
 \fi
}%
\providecommand \natexlab [1]{#1}%
\providecommand \enquote  [1]{``#1''}%
\providecommand \bibnamefont  [1]{#1}%
\providecommand \bibfnamefont [1]{#1}%
\providecommand \citenamefont [1]{#1}%
\providecommand \href@noop [0]{\@secondoftwo}%
\providecommand \href [0]{\begingroup \@sanitize@url \@href}%
\providecommand \@href[1]{\@@startlink{#1}\@@href}%
\providecommand \@@href[1]{\endgroup#1\@@endlink}%
\providecommand \@sanitize@url [0]{\catcode `\\12\catcode `\$12\catcode
  `\&12\catcode `\#12\catcode `\^12\catcode `\_12\catcode `\%12\relax}%
\providecommand \@@startlink[1]{}%
\providecommand \@@endlink[0]{}%
\providecommand \url  [0]{\begingroup\@sanitize@url \@url }%
\providecommand \@url [1]{\endgroup\@href {#1}{\urlprefix }}%
\providecommand \urlprefix  [0]{URL }%
\providecommand \Eprint [0]{\href }%
\providecommand \doibase [0]{http://dx.doi.org/}%
\providecommand \selectlanguage [0]{\@gobble}%
\providecommand \bibinfo  [0]{\@secondoftwo}%
\providecommand \bibfield  [0]{\@secondoftwo}%
\providecommand \translation [1]{[#1]}%
\providecommand \BibitemOpen [0]{}%
\providecommand \bibitemStop [0]{}%
\providecommand \bibitemNoStop [0]{.\EOS\space}%
\providecommand \EOS [0]{\spacefactor3000\relax}%
\providecommand \BibitemShut  [1]{\csname bibitem#1\endcsname}%
\let\auto@bib@innerbib\@empty
\bibitem [{\citenamefont {Hohenberg}\ and\ \citenamefont
  {Kohn}(1964)}]{Hohenberg-Kohn1964}%
  \BibitemOpen
  \bibfield  {author} {\bibinfo {author} {\bibfnamefont {P.}~\bibnamefont
  {Hohenberg}}\ and\ \bibinfo {author} {\bibfnamefont {W.}~\bibnamefont
  {Kohn}},\ }\bibfield  {title} {\emph {\bibinfo {title} {Inhomogeneous
  electron gas},\ }}\href {\doibase 10.1103/PhysRev.136.B864} {\bibfield
  {journal} {\bibinfo  {journal} {Phys. Rev.}\ }\textbf {\bibinfo {volume}
  {136}},\ \bibinfo {pages} {B864} (\bibinfo {year} {1964})}\BibitemShut
  {NoStop}%
\bibitem [{\citenamefont {Levy}(1979)}]{Levy79}%
  \BibitemOpen
  \bibfield  {author} {\bibinfo {author} {\bibfnamefont {M.}~\bibnamefont
  {Levy}},\ }\bibfield  {title} {\emph {\bibinfo {title} {Universal variational
  functionals of electron densities, first-order density matrices, and natural
  spin-orbitals and solution of the v-representability problem},\ }}\href
  {\doibase 10.1073/pnas.76.12.6062} {\bibfield  {journal} {\bibinfo  {journal}
  {Proc. Natl. Acad. Sci. U. S. A.}\ }\textbf {\bibinfo {volume} {76}},\
  \bibinfo {pages} {6062} (\bibinfo {year} {1979})}\BibitemShut {NoStop}%
\bibitem [{\citenamefont {Lieb}(1983)}]{Lieb1983}%
  \BibitemOpen
  \bibfield  {author} {\bibinfo {author} {\bibfnamefont {E.~H.}\ \bibnamefont
  {Lieb}},\ }\bibfield  {title} {\emph {\bibinfo {title} {Density functionals
  for {C}oulomb-systems},\ }}\href {\doibase 10.1002/qua.560240302} {\bibfield
  {journal} {\bibinfo  {journal} {Int. J. Quantum Chem.}\ }\textbf {\bibinfo
  {volume} {24}},\ \bibinfo {pages} {243} (\bibinfo {year} {1983})}\BibitemShut
  {NoStop}%
\bibitem [{\citenamefont {Katriel}\ \emph {et~al.}(1981)\citenamefont
  {Katriel}, \citenamefont {Appellof},\ and\ \citenamefont
  {Davidson}}]{katriel1981mapping}%
  \BibitemOpen
  \bibfield  {author} {\bibinfo {author} {\bibfnamefont {J.}~\bibnamefont
  {Katriel}}, \bibinfo {author} {\bibfnamefont {C.~J.}\ \bibnamefont
  {Appellof}}, \ and\ \bibinfo {author} {\bibfnamefont {E.~R.}\ \bibnamefont
  {Davidson}},\ }\bibfield  {title} {\emph {\bibinfo {title} {Mapping between
  local potentials and ground state densities},\ }}\href {\doibase
  10.1002/qua.560190210} {\bibfield  {journal} {\bibinfo  {journal} {Int. J.
  Quantum Chem.}\ }\textbf {\bibinfo {volume} {19}},\ \bibinfo {pages} {293}
  (\bibinfo {year} {1981})}\BibitemShut {NoStop}%
\bibitem [{\citenamefont {Englisch}\ and\ \citenamefont
  {Englisch}(1984)}]{englisch2}%
  \BibitemOpen
  \bibfield  {author} {\bibinfo {author} {\bibfnamefont {H.}~\bibnamefont
  {Englisch}}\ and\ \bibinfo {author} {\bibfnamefont {R.}~\bibnamefont
  {Englisch}},\ }\bibfield  {title} {\emph {\bibinfo {title} {Exact density
  functionals for ground state energies, {I}{I}. {D}etails and remarks},\
  }}\href {\doibase 10.1002/pssb.2221240140} {\bibfield  {journal} {\bibinfo
  {journal} {Phys. Status Solidi B}\ }\textbf {\bibinfo {volume} {124}},\
  \bibinfo {pages} {373} (\bibinfo {year} {1984})}\BibitemShut {NoStop}%
\bibitem [{\citenamefont {Chayes}\ \emph {et~al.}(1985)\citenamefont {Chayes},
  \citenamefont {Chayes},\ and\ \citenamefont {Ruskai}}]{CCR1985}%
  \BibitemOpen
  \bibfield  {author} {\bibinfo {author} {\bibfnamefont {J.~T.}\ \bibnamefont
  {Chayes}}, \bibinfo {author} {\bibfnamefont {L.}~\bibnamefont {Chayes}}, \
  and\ \bibinfo {author} {\bibfnamefont {M.~B.}\ \bibnamefont {Ruskai}},\
  }\bibfield  {title} {\emph {\bibinfo {title} {Density functional approach to
  quantum lattice systems},\ }}\href {\doibase 10.1007/BF01010474} {\bibfield
  {journal} {\bibinfo  {journal} {J. Stat. Phys.}\ }\textbf {\bibinfo {volume}
  {38}},\ \bibinfo {pages} {497} (\bibinfo {year} {1985})}\BibitemShut
  {NoStop}%
\bibitem [{\citenamefont {Sch{\"o}nhammer}\ \emph {et~al.}(1995)\citenamefont
  {Sch{\"o}nhammer}, \citenamefont {Gunnarsson},\ and\ \citenamefont
  {Noack}}]{schonhammer1995density}%
  \BibitemOpen
  \bibfield  {author} {\bibinfo {author} {\bibfnamefont {K.}~\bibnamefont
  {Sch{\"o}nhammer}}, \bibinfo {author} {\bibfnamefont {O.}~\bibnamefont
  {Gunnarsson}}, \ and\ \bibinfo {author} {\bibfnamefont {R.~M.}\ \bibnamefont
  {Noack}},\ }\bibfield  {title} {\emph {\bibinfo {title} {Density-functional
  theory on a lattice: Comparison with exact numerical results for a model with
  strongly correlated electrons},\ }}\href {\doibase 10.1103/physrevb.52.2504}
  {\bibfield  {journal} {\bibinfo  {journal} {Phys. Rev. B}\ }\textbf {\bibinfo
  {volume} {52}},\ \bibinfo {pages} {2504} (\bibinfo {year}
  {1995})}\BibitemShut {NoStop}%
\bibitem [{\citenamefont {Xianlong}\ \emph {et~al.}(2006)\citenamefont
  {Xianlong}, \citenamefont {Polini}, \citenamefont {Tosi}, \citenamefont
  {Campo~Jr.}, \citenamefont {Capelle},\ and\ \citenamefont
  {Rigol}}]{xianlong2006}%
  \BibitemOpen
  \bibfield  {author} {\bibinfo {author} {\bibfnamefont {G.}~\bibnamefont
  {Xianlong}}, \bibinfo {author} {\bibfnamefont {M.}~\bibnamefont {Polini}},
  \bibinfo {author} {\bibfnamefont {M.~P.}\ \bibnamefont {Tosi}}, \bibinfo
  {author} {\bibfnamefont {V.~L.}\ \bibnamefont {Campo~Jr.}}, \bibinfo {author}
  {\bibfnamefont {K.}~\bibnamefont {Capelle}}, \ and\ \bibinfo {author}
  {\bibfnamefont {M.}~\bibnamefont {Rigol}},\ }\bibfield  {title} {\emph
  {\bibinfo {title} {Bethe ansatz density-functional theory of ultracold
  repulsive fermions in one-dimensional optical lattices},\ }}\href {\doibase
  10.1103/PhysRevB.73.165120} {\bibfield  {journal} {\bibinfo  {journal} {Phys.
  Rev. B}\ }\textbf {\bibinfo {volume} {73}},\ \bibinfo {pages} {165120}
  (\bibinfo {year} {2006})}\BibitemShut {NoStop}%
\bibitem [{\citenamefont {Giesbertz}\ and\ \citenamefont
  {Ruggenthaler}(2019)}]{giesbertz2019-1RDM}%
  \BibitemOpen
  \bibfield  {author} {\bibinfo {author} {\bibfnamefont {K.~J.~H.}\
  \bibnamefont {Giesbertz}}\ and\ \bibinfo {author} {\bibfnamefont
  {M.}~\bibnamefont {Ruggenthaler}},\ }\bibfield  {title} {\emph {\bibinfo
  {title} {One-body reduced density-matrix functional theory in finite basis
  sets at elevated temperatures},\ }}\href {\doibase
  10.1016/j.physrep.2019.01.010} {\bibfield  {journal} {\bibinfo  {journal}
  {Phys. Rep.}\ }\textbf {\bibinfo {volume} {806}},\ \bibinfo {pages} {1}
  (\bibinfo {year} {2019})}\BibitemShut {NoStop}%
\bibitem [{\citenamefont {Carrascal}\ \emph {et~al.}(2015)\citenamefont
  {Carrascal}, \citenamefont {Ferrer}, \citenamefont {Smith},\ and\
  \citenamefont {Burke}}]{carrascal2015hubbard}%
  \BibitemOpen
  \bibfield  {author} {\bibinfo {author} {\bibfnamefont {D.~J.}\ \bibnamefont
  {Carrascal}}, \bibinfo {author} {\bibfnamefont {J.}~\bibnamefont {Ferrer}},
  \bibinfo {author} {\bibfnamefont {J.~C.}\ \bibnamefont {Smith}}, \ and\
  \bibinfo {author} {\bibfnamefont {K.}~\bibnamefont {Burke}},\ }\bibfield
  {title} {\emph {\bibinfo {title} {The {H}ubbard dimer: {A} density functional
  case study of a many-body problem},\ }}\href {\doibase
  10.1088/0953-8984/27/39/393001} {\bibfield  {journal} {\bibinfo  {journal}
  {J. Phys.: Condens. Matter}\ }\textbf {\bibinfo {volume} {27}},\ \bibinfo
  {pages} {393001} (\bibinfo {year} {2015})}\BibitemShut {NoStop}%
\bibitem [{\citenamefont {Lima}\ \emph {et~al.}(2003)\citenamefont {Lima},
  \citenamefont {Silva}, \citenamefont {Oliveira},\ and\ \citenamefont
  {Capelle}}]{lima2003}%
  \BibitemOpen
  \bibfield  {author} {\bibinfo {author} {\bibfnamefont {N.~A.}\ \bibnamefont
  {Lima}}, \bibinfo {author} {\bibfnamefont {M.~F.}\ \bibnamefont {Silva}},
  \bibinfo {author} {\bibfnamefont {L.~N.}\ \bibnamefont {Oliveira}}, \ and\
  \bibinfo {author} {\bibfnamefont {K.}~\bibnamefont {Capelle}},\ }\bibfield
  {title} {\emph {\bibinfo {title} {Density functionals not based on the
  electron gas: Local-density approximation for a {L}uttinger liquid},\ }}\href
  {\doibase 10.1103/PhysRevLett.90.146402} {\bibfield  {journal} {\bibinfo
  {journal} {Phys. Rev. Lett.}\ }\textbf {\bibinfo {volume} {90}},\ \bibinfo
  {pages} {146402} (\bibinfo {year} {2003})}\BibitemShut {NoStop}%
\bibitem [{\citenamefont {Ij{\"a}s}\ and\ \citenamefont
  {Harju}(2010)}]{ijas2010lattice}%
  \BibitemOpen
  \bibfield  {author} {\bibinfo {author} {\bibfnamefont {M.}~\bibnamefont
  {Ij{\"a}s}}\ and\ \bibinfo {author} {\bibfnamefont {A.}~\bibnamefont
  {Harju}},\ }\bibfield  {title} {\emph {\bibinfo {title} {Lattice
  density-functional theory on graphene},\ }}\href {\doibase
  10.1103/PhysRevB.82.235111} {\bibfield  {journal} {\bibinfo  {journal} {Phys.
  Rev. B}\ }\textbf {\bibinfo {volume} {82}},\ \bibinfo {pages} {235111}
  (\bibinfo {year} {2010})}\BibitemShut {NoStop}%
\bibitem [{\citenamefont {Saubanere}\ and\ \citenamefont
  {Pastor}(2014)}]{saubanere2014lattice}%
  \BibitemOpen
  \bibfield  {author} {\bibinfo {author} {\bibfnamefont {M.}~\bibnamefont
  {Saubanere}}\ and\ \bibinfo {author} {\bibfnamefont {G.}~\bibnamefont
  {Pastor}},\ }\bibfield  {title} {\emph {\bibinfo {title} {Lattice
  density-functional theory of the attractive {H}ubbard model},\ }}\href
  {\doibase 10.1103/PhysRevB.90.125128} {\bibfield  {journal} {\bibinfo
  {journal} {Phys. Rev. B}\ }\textbf {\bibinfo {volume} {90}},\ \bibinfo
  {pages} {125128} (\bibinfo {year} {2014})}\BibitemShut {NoStop}%
\bibitem [{\citenamefont {Garrigue}(2018)}]{Garrigue2018}%
  \BibitemOpen
  \bibfield  {author} {\bibinfo {author} {\bibfnamefont {L.}~\bibnamefont
  {Garrigue}},\ }\bibfield  {title} {\emph {\bibinfo {title} {Unique
  continuation for many-body {S}chr{\"o}dinger operators and the
  {H}ohenberg--{K}ohn theorem},\ }}\href {\doibase 10.1007/s11040-018-9287-z}
  {\bibfield  {journal} {\bibinfo  {journal} {Math. Phys. Anal. Geom.}\
  }\textbf {\bibinfo {volume} {21}},\ \bibinfo {pages} {27} (\bibinfo {year}
  {2018})}\BibitemShut {NoStop}%
\bibitem [{\citenamefont {Penz}\ \emph {et~al.}(2019)\citenamefont {Penz},
  \citenamefont {Laestadius}, \citenamefont {Tellgren},\ and\ \citenamefont
  {Ruggenthaler}}]{penz2019guaranteed}%
  \BibitemOpen
  \bibfield  {author} {\bibinfo {author} {\bibfnamefont {M.}~\bibnamefont
  {Penz}}, \bibinfo {author} {\bibfnamefont {A.}~\bibnamefont {Laestadius}},
  \bibinfo {author} {\bibfnamefont {E.~I.}\ \bibnamefont {Tellgren}}, \ and\
  \bibinfo {author} {\bibfnamefont {M.}~\bibnamefont {Ruggenthaler}},\
  }\bibfield  {title} {\emph {\bibinfo {title} {Guaranteed convergence of a
  regularized {K}ohn--{S}ham iteration in finite dimensions},\ }}\href
  {\doibase 10.1103/physrevlett.123.037401} {\bibfield  {journal} {\bibinfo
  {journal} {Phys. Rev. Lett.}\ }\textbf {\bibinfo {volume} {123}},\ \bibinfo
  {pages} {037401} (\bibinfo {year} {2019})}\BibitemShut {NoStop}%
\bibitem [{\citenamefont {Penz}\ \emph {et~al.}(2020)\citenamefont {Penz},
  \citenamefont {Laestadius}, \citenamefont {Tellgren}, \citenamefont
  {Ruggenthaler},\ and\ \citenamefont {Lammert}}]{penz2020erratum}%
  \BibitemOpen
  \bibfield  {author} {\bibinfo {author} {\bibfnamefont {M.}~\bibnamefont
  {Penz}}, \bibinfo {author} {\bibfnamefont {A.}~\bibnamefont {Laestadius}},
  \bibinfo {author} {\bibfnamefont {E.~I.}\ \bibnamefont {Tellgren}}, \bibinfo
  {author} {\bibfnamefont {M.}~\bibnamefont {Ruggenthaler}}, \ and\ \bibinfo
  {author} {\bibfnamefont {P.~E.}\ \bibnamefont {Lammert}},\ }\bibfield
  {title} {\emph {\bibinfo {title} {Erratum: Guaranteed convergence of a
  regularized {K}ohn--{S}ham iteration in finite dimensions},\ }}\href
  {\doibase 10.1103/PhysRevLett.125.249902} {\bibfield  {journal} {\bibinfo
  {journal} {Phys. Rev. Lett.}\ }\textbf {\bibinfo {volume} {125}},\ \bibinfo
  {pages} {249902} (\bibinfo {year} {2020})}\BibitemShut {NoStop}%
\bibitem [{\citenamefont {Coe}\ \emph {et~al.}(2015)\citenamefont {Coe},
  \citenamefont {D'Amico},\ and\ \citenamefont
  {Fran{\c{c}}a}}]{coe2015uniqueness}%
  \BibitemOpen
  \bibfield  {author} {\bibinfo {author} {\bibfnamefont {J.~P.}\ \bibnamefont
  {Coe}}, \bibinfo {author} {\bibfnamefont {I.}~\bibnamefont {D'Amico}}, \ and\
  \bibinfo {author} {\bibfnamefont {V.~V.}\ \bibnamefont {Fran{\c{c}}a}},\
  }\bibfield  {title} {\emph {\bibinfo {title} {Uniqueness of
  density-to-potential mapping for fermionic lattice systems},\ }}\href
  {\doibase 10.1209/0295-5075/110/63001} {\bibfield  {journal} {\bibinfo
  {journal} {Europhys. Lett.}\ }\textbf {\bibinfo {volume} {110}},\ \bibinfo
  {pages} {63001} (\bibinfo {year} {2015})}\BibitemShut {NoStop}%
\bibitem [{\citenamefont {Dimitrov}\ \emph {et~al.}(2016)\citenamefont
  {Dimitrov}, \citenamefont {Appel}, \citenamefont {Fuks},\ and\ \citenamefont
  {Rubio}}]{dimitrov2016exact}%
  \BibitemOpen
  \bibfield  {author} {\bibinfo {author} {\bibfnamefont {T.}~\bibnamefont
  {Dimitrov}}, \bibinfo {author} {\bibfnamefont {H.}~\bibnamefont {Appel}},
  \bibinfo {author} {\bibfnamefont {J.~I.}\ \bibnamefont {Fuks}}, \ and\
  \bibinfo {author} {\bibfnamefont {A.}~\bibnamefont {Rubio}},\ }\bibfield
  {title} {\emph {\bibinfo {title} {Exact maps in density functional theory for
  lattice models},\ }}\href {\doibase 10.1088/1367-2630/18/8/083004} {\bibfield
   {journal} {\bibinfo  {journal} {New J. Phys.}\ }\textbf {\bibinfo {volume}
  {18}},\ \bibinfo {pages} {083004} (\bibinfo {year} {2016})}\BibitemShut
  {NoStop}%
\bibitem [{\citenamefont {Odlyzko}(1981)}]{Odlyzko}%
  \BibitemOpen
  \bibfield  {author} {\bibinfo {author} {\bibfnamefont {A.~M.}\ \bibnamefont
  {Odlyzko}},\ }\bibfield  {title} {\emph {\bibinfo {title} {On the ranks of
  some $(0,1)$-matrices with constant row sums},\ }}\href {\doibase
  10.1017/S1446788700033474} {\bibfield  {journal} {\bibinfo  {journal} {J.
  Aust. Math. Soc. Ser. A}\ }\textbf {\bibinfo {volume} {31}},\ \bibinfo
  {pages} {193} (\bibinfo {year} {1981})}\BibitemShut {NoStop}%
\bibitem [{\citenamefont {Kohn}(1983)}]{KohnPRL}%
  \BibitemOpen
  \bibfield  {author} {\bibinfo {author} {\bibfnamefont {W.}~\bibnamefont
  {Kohn}},\ }\bibfield  {title} {\emph {\bibinfo {title} {$v$-representability
  and density functional theory},\ }}\href {\doibase
  10.1103/PhysRevLett.51.1596} {\bibfield  {journal} {\bibinfo  {journal}
  {Phys. Rev. Lett.}\ }\textbf {\bibinfo {volume} {51}},\ \bibinfo {pages}
  {1596} (\bibinfo {year} {1983})}\BibitemShut {NoStop}%
\bibitem [{\citenamefont {Diestel}(2005)}]{diestel-graph-theory-book}%
  \BibitemOpen
  \bibfield  {author} {\bibinfo {author} {\bibfnamefont {R.}~\bibnamefont
  {Diestel}},\ }\href@noop {} {\emph {\bibinfo {title} {Graph theory}}},\
  \bibinfo {edition} {3rd}\ ed.\ (\bibinfo  {publisher} {Springer},\ \bibinfo
  {year} {2005})\BibitemShut {NoStop}%
\bibitem [{\citenamefont {Grünbaum}(2003)}]{convex-polytopes-book}%
  \BibitemOpen
  \bibfield  {author} {\bibinfo {author} {\bibfnamefont {B.}~\bibnamefont
  {Grünbaum}},\ }\href@noop {} {\emph {\bibinfo {title} {Convex polytopes}}},\
  \bibinfo {edition} {2nd}\ ed.\ (\bibinfo  {publisher} {Springer},\ \bibinfo
  {year} {2003})\BibitemShut {NoStop}%
\bibitem [{\citenamefont {Rispoli}(2008)}]{Rispoli2008-hypersimplex}%
  \BibitemOpen
  \bibfield  {author} {\bibinfo {author} {\bibfnamefont {F.~J.}\ \bibnamefont
  {Rispoli}},\ }\bibfield  {title} {\emph {\bibinfo {title} {The graph of the
  hypersimplex},\ }}\href@noop {} {\bibfield  {journal} {\bibinfo  {journal}
  {arXiv preprint}\ } (\bibinfo {year} {2008})},\ \Eprint
  {http://arxiv.org/abs/0811.2981} {arXiv:0811.2981 [math.CO]} \BibitemShut
  {NoStop}%
\bibitem [{\citenamefont {Harriman}(1981)}]{harriman1981orthonormal}%
  \BibitemOpen
  \bibfield  {author} {\bibinfo {author} {\bibfnamefont {J.~E.}\ \bibnamefont
  {Harriman}},\ }\bibfield  {title} {\emph {\bibinfo {title} {Orthonormal
  orbitals for the representation of an arbitrary density},\ }}\href {\doibase
  10.1103/PhysRevA.24.680} {\bibfield  {journal} {\bibinfo  {journal} {Phys.
  Rev. A}\ }\textbf {\bibinfo {volume} {24}},\ \bibinfo {pages} {680} (\bibinfo
  {year} {1981})}\BibitemShut {NoStop}%
\bibitem [{\citenamefont {Chung}(1994)}]{spectral-graph-book}%
  \BibitemOpen
  \bibfield  {author} {\bibinfo {author} {\bibfnamefont {F.~R.~K.}\
  \bibnamefont {Chung}},\ }\href@noop {} {\emph {\bibinfo {title} {Spectral
  graph theory}}}\ (\bibinfo  {publisher} {AMS},\ \bibinfo {year}
  {1994})\BibitemShut {NoStop}%
\bibitem [{\citenamefont {Biyikoglu}\ \emph {et~al.}(2007)\citenamefont
  {Biyikoglu}, \citenamefont {Leydold},\ and\ \citenamefont
  {Stadler}}]{graph-eigenvectors-book}%
  \BibitemOpen
  \bibfield  {author} {\bibinfo {author} {\bibfnamefont {T.}~\bibnamefont
  {Biyikoglu}}, \bibinfo {author} {\bibfnamefont {J.}~\bibnamefont {Leydold}},
  \ and\ \bibinfo {author} {\bibfnamefont {P.~F.}\ \bibnamefont {Stadler}},\
  }\href@noop {} {\emph {\bibinfo {title} {Laplacian eigenvectors of graphs}}}\
  (\bibinfo  {publisher} {Springer},\ \bibinfo {year} {2007})\BibitemShut
  {NoStop}%
\bibitem [{\citenamefont {ten Haaf}\ \emph {et~al.}(1995)\citenamefont {ten
  Haaf}, \citenamefont {van Bemmel}, \citenamefont {van Leeuwen}, \citenamefont
  {van Saarloos},\ and\ \citenamefont {Ceperley}}]{tenHaaf1995MonteCarlo}%
  \BibitemOpen
  \bibfield  {author} {\bibinfo {author} {\bibfnamefont {D.~F.~B.}\
  \bibnamefont {ten Haaf}}, \bibinfo {author} {\bibfnamefont {H.~J.~M.}\
  \bibnamefont {van Bemmel}}, \bibinfo {author} {\bibfnamefont {J.~M.~J.}\
  \bibnamefont {van Leeuwen}}, \bibinfo {author} {\bibfnamefont
  {W.}~\bibnamefont {van Saarloos}}, \ and\ \bibinfo {author} {\bibfnamefont
  {D.~M.}\ \bibnamefont {Ceperley}},\ }\bibfield  {title} {\emph {\bibinfo
  {title} {Proof for an upper bound in fixed-node {M}onte {C}arlo for lattice
  fermions},\ }}\href {\doibase 10.1103/PhysRevB.51.13039} {\bibfield
  {journal} {\bibinfo  {journal} {Phys. Rev. B}\ }\textbf {\bibinfo {volume}
  {51}},\ \bibinfo {pages} {13039} (\bibinfo {year} {1995})}\BibitemShut
  {NoStop}%
\bibitem [{\citenamefont {Pino}\ \emph {et~al.}(2007)\citenamefont {Pino},
  \citenamefont {Bokanowski}, \citenamefont {Ludeña},\ and\ \citenamefont
  {Boada}}]{pino2007}%
  \BibitemOpen
  \bibfield  {author} {\bibinfo {author} {\bibfnamefont {R.}~\bibnamefont
  {Pino}}, \bibinfo {author} {\bibfnamefont {O.}~\bibnamefont {Bokanowski}},
  \bibinfo {author} {\bibfnamefont {E.~V.}\ \bibnamefont {Ludeña}}, \ and\
  \bibinfo {author} {\bibfnamefont {R.~L.}\ \bibnamefont {Boada}},\ }\bibfield
  {title} {\emph {\bibinfo {title} {A re-statement of the {H}ohenberg–{K}ohn
  theorem and its extension to finite subspaces},\ }}\href {\doibase
  10.1007/s00214-007-0367-6} {\bibfield  {journal} {\bibinfo  {journal} {Theor.
  Chem. Acc.}\ }\textbf {\bibinfo {volume} {118}},\ \bibinfo {pages} {557}
  (\bibinfo {year} {2007})}\BibitemShut {NoStop}%
\bibitem [{\citenamefont {de~Figueiredo}\ and\ \citenamefont
  {Gossez}(1992)}]{FigueiredoGossez}%
  \BibitemOpen
  \bibfield  {author} {\bibinfo {author} {\bibfnamefont {D.~G.}\ \bibnamefont
  {de~Figueiredo}}\ and\ \bibinfo {author} {\bibfnamefont {J.-P.}\ \bibnamefont
  {Gossez}},\ }\bibfield  {title} {\emph {\bibinfo {title} {Strict monotonicity
  of eigenvalues and unique continuation},\ }}\href {\doibase
  10.1080/03605309208820844} {\bibfield  {journal} {\bibinfo  {journal}
  {Commun. Partial Differ. Equations}\ }\textbf {\bibinfo {volume} {17}},\
  \bibinfo {pages} {339} (\bibinfo {year} {1992})}\BibitemShut {NoStop}%
\bibitem [{\citenamefont {Regbaoui}(2001)}]{Regbaoui}%
  \BibitemOpen
  \bibfield  {author} {\bibinfo {author} {\bibfnamefont {R.}~\bibnamefont
  {Regbaoui}},\ }in\ \href {\doibase 10.1007/978-1-4612-0203-5_13} {\emph
  {\bibinfo {booktitle} {Carleman estimates and applications to uniqueness and
  control theory}}}\ (\bibinfo  {publisher} {Birkhäuser},\ \bibinfo {year}
  {2001})\ pp.\ \bibinfo {pages} {179--190}\BibitemShut {NoStop}%
\bibitem [{\citenamefont {Isozaki}\ and\ \citenamefont
  {Morioka}(2014)}]{Isozaki2014-Rellich-no-UCP}%
  \BibitemOpen
  \bibfield  {author} {\bibinfo {author} {\bibfnamefont {H.}~\bibnamefont
  {Isozaki}}\ and\ \bibinfo {author} {\bibfnamefont {H.}~\bibnamefont
  {Morioka}},\ }\bibfield  {title} {\emph {\bibinfo {title} {A {R}ellich type
  theorem for discrete {S}chrödinger operators},\ }}\href {\doibase
  10.3934/ipi.2014.8.475} {\bibfield  {journal} {\bibinfo  {journal} {Inverse
  Probl. Imaging}\ }\textbf {\bibinfo {volume} {8}},\ \bibinfo {pages} {475}
  (\bibinfo {year} {2014})}\BibitemShut {NoStop}%
\bibitem [{\citenamefont {Peyerimhoff}\ \emph {et~al.}(2017)\citenamefont
  {Peyerimhoff}, \citenamefont {Täufer},\ and\ \citenamefont
  {Veseli\'c}}]{Peyerimhoff2019-no-UCP}%
  \BibitemOpen
  \bibfield  {author} {\bibinfo {author} {\bibfnamefont {N.}~\bibnamefont
  {Peyerimhoff}}, \bibinfo {author} {\bibfnamefont {M.}~\bibnamefont
  {Täufer}}, \ and\ \bibinfo {author} {\bibfnamefont {I.}~\bibnamefont
  {Veseli\'c}},\ }\bibfield  {title} {\emph {\bibinfo {title} {Unique
  continuation principles and their absence for {S}chrödinger eigenfunctions
  on combinatorial and quantum graphs and in continuum space},\ }}\href
  {\doibase 10.17586/2220-8054-2017-8-2-216-230} {\bibfield  {journal}
  {\bibinfo  {journal} {Nanosyst.: Phys. Chem. Math.}\ }\textbf {\bibinfo
  {volume} {8}},\ \bibinfo {pages} {216} (\bibinfo {year} {2017})}\BibitemShut
  {NoStop}%
\bibitem [{\citenamefont {Longstaff}(1977)}]{Longstaff}%
  \BibitemOpen
  \bibfield  {author} {\bibinfo {author} {\bibfnamefont {W.~E.}\ \bibnamefont
  {Longstaff}},\ }\bibfield  {title} {\emph {\bibinfo {title} {Combinatorial
  solution of certain systems of linear equations involving $(0,1)$ matrices},\
  }}\href {\doibase 10.1017/S1446788700018899} {\bibfield  {journal} {\bibinfo
  {journal} {J. Aust. Math. Soc. Ser. A}\ }\textbf {\bibinfo {volume} {23}},\
  \bibinfo {pages} {266} (\bibinfo {year} {1977})}\BibitemShut {NoStop}%
\bibitem [{\citenamefont {Ullrich}\ and\ \citenamefont
  {Kohn}(2002)}]{ullrich2002}%
  \BibitemOpen
  \bibfield  {author} {\bibinfo {author} {\bibfnamefont {C.~A.}\ \bibnamefont
  {Ullrich}}\ and\ \bibinfo {author} {\bibfnamefont {W.}~\bibnamefont {Kohn}},\
  }\bibfield  {title} {\emph {\bibinfo {title} {Degeneracy in density
  functional theory: Topology in the v and n spaces},\ }}\href {\doibase
  10.1103/PhysRevLett.89.156401} {\bibfield  {journal} {\bibinfo  {journal}
  {Phys. Rev. Lett.}\ }\textbf {\bibinfo {volume} {89}},\ \bibinfo {pages}
  {156401} (\bibinfo {year} {2002})}\BibitemShut {NoStop}%
\bibitem [{\citenamefont {Prosalov}(2000)}]{prosalov-book}%
  \BibitemOpen
  \bibfield  {author} {\bibinfo {author} {\bibfnamefont {V.}~\bibnamefont
  {Prosalov}},\ }\href@noop {} {\emph {\bibinfo {title} {Problems and theorems
  in linear algebra}}}\ (\bibinfo  {publisher} {AMS},\ \bibinfo {year}
  {2000})\BibitemShut {NoStop}%
\bibitem [{\citenamefont {Rellich}(1937)}]{rellich1937}%
  \BibitemOpen
  \bibfield  {author} {\bibinfo {author} {\bibfnamefont {F.}~\bibnamefont
  {Rellich}},\ }\bibfield  {title} {\emph {\bibinfo {title}
  {St{\"o}rungstheorie der {S}pektralzerlegung, {I}. {M}itteilung},\ }}\href
  {\doibase 10.1007/BF01571652} {\bibfield  {journal} {\bibinfo  {journal}
  {Math. Ann.}\ }\textbf {\bibinfo {volume} {113}},\ \bibinfo {pages} {600}
  (\bibinfo {year} {1937})}\BibitemShut {NoStop}%
\bibitem [{\citenamefont {Rellich}(1969)}]{rellich-book}%
  \BibitemOpen
  \bibfield  {author} {\bibinfo {author} {\bibfnamefont {F.}~\bibnamefont
  {Rellich}},\ }\href@noop {} {\emph {\bibinfo {title} {Perturbation theory of
  eigenvalue problems}}}\ (\bibinfo  {publisher} {Gordon and Breach Science
  Publishers},\ \bibinfo {year} {1969})\BibitemShut {NoStop}%
\bibitem [{\citenamefont {Kato}(1995)}]{Kato-book}%
  \BibitemOpen
  \bibfield  {author} {\bibinfo {author} {\bibfnamefont {T.}~\bibnamefont
  {Kato}},\ }\href@noop {} {\emph {\bibinfo {title} {Perturbation theory for
  linear operators}}}\ (\bibinfo  {publisher} {Springer},\ \bibinfo {year}
  {1995})\BibitemShut {NoStop}%
\bibitem [{\citenamefont {Baumg\"artel}(1985)}]{Baumgaertel-book}%
  \BibitemOpen
  \bibfield  {author} {\bibinfo {author} {\bibfnamefont {H.}~\bibnamefont
  {Baumg\"artel}},\ }\href@noop {} {\emph {\bibinfo {title} {Analytic
  perturbation theory for matrices and operators}}}\ (\bibinfo  {publisher}
  {Birkh{\"a}user},\ \bibinfo {year} {1985})\BibitemShut {NoStop}%
\bibitem [{\citenamefont {Lammert}(2018)}]{Lammert2018}%
  \BibitemOpen
  \bibfield  {author} {\bibinfo {author} {\bibfnamefont {P.~E.}\ \bibnamefont
  {Lammert}},\ }\bibfield  {title} {\emph {\bibinfo {title} {In search of the
  {H}ohenberg--{K}ohn theorem},\ }}\href {\doibase 10.1063/1.5034215}
  {\bibfield  {journal} {\bibinfo  {journal} {J. Math. Phys.}\ }\textbf
  {\bibinfo {volume} {59}},\ \bibinfo {pages} {042110} (\bibinfo {year}
  {2018})}\BibitemShut {NoStop}%
\bibitem [{\citenamefont {Minc}(1988)}]{minc-book}%
  \BibitemOpen
  \bibfield  {author} {\bibinfo {author} {\bibfnamefont {H.}~\bibnamefont
  {Minc}},\ }\href@noop {} {\emph {\bibinfo {title} {Nonnegative matrices}}}\
  (\bibinfo  {publisher} {Wiley},\ \bibinfo {year} {1988})\BibitemShut
  {NoStop}%
\bibitem [{\citenamefont {Meyer}(2000)}]{meyer-book}%
  \BibitemOpen
  \bibfield  {author} {\bibinfo {author} {\bibfnamefont {C.~D.}\ \bibnamefont
  {Meyer}},\ }\href@noop {} {\emph {\bibinfo {title} {Matrix analysis and
  applied linear algebra}}}\ (\bibinfo  {publisher} {SIAM},\ \bibinfo {year}
  {2000})\BibitemShut {NoStop}%
\bibitem [{\citenamefont {Englisch}\ and\ \citenamefont
  {Englisch}(1983)}]{englisch1983vrep-PF}%
  \BibitemOpen
  \bibfield  {author} {\bibinfo {author} {\bibfnamefont {H.}~\bibnamefont
  {Englisch}}\ and\ \bibinfo {author} {\bibfnamefont {R.}~\bibnamefont
  {Englisch}},\ }\bibfield  {title} {\emph {\bibinfo {title}
  {V-representability in finite-dimensional spaces},\ }}\href {\doibase
  10.1088/0305-4470/16/18/003} {\bibfield  {journal} {\bibinfo  {journal} {J.
  Phys. A: Math. Gen.}\ }\textbf {\bibinfo {volume} {16}},\ \bibinfo {pages}
  {L693} (\bibinfo {year} {1983})}\BibitemShut {NoStop}%
\bibitem [{\citenamefont {{The idea for this proof is originally due to Paul
  E.\ Lammert, formulated at the YoungCAS workshop ``Do Electron Current
  Densities Determine All There Is to Know?''}}()}]{lammert-workshop}%
  \BibitemOpen
  \bibfield  {author} {\bibinfo {author} {\bibnamefont {{The idea for this
  proof is originally due to Paul E.\ Lammert, formulated at the YoungCAS
  workshop ``Do Electron Current Densities Determine All There Is to
  Know?''}}},\ }\href@noop {} {}\bibinfo {howpublished} {July 9-13, 2018, in
  Oslo, Norway}\BibitemShut {NoStop}%
\bibitem [{\citenamefont {Rockafellar}(1970)}]{rockafellar-book}%
  \BibitemOpen
  \bibfield  {author} {\bibinfo {author} {\bibfnamefont {R.~T.}\ \bibnamefont
  {Rockafellar}},\ }\href@noop {} {\emph {\bibinfo {title} {Convex analysis}}}\
  (\bibinfo  {publisher} {Princeton University Press},\ \bibinfo {year}
  {1970})\BibitemShut {NoStop}%
\bibitem [{\citenamefont {Kvaal}\ \emph {et~al.}(2014)\citenamefont {Kvaal},
  \citenamefont {Ekström}, \citenamefont {Teale},\ and\ \citenamefont
  {Helgaker}}]{Kvaal2014}%
  \BibitemOpen
  \bibfield  {author} {\bibinfo {author} {\bibfnamefont {S.}~\bibnamefont
  {Kvaal}}, \bibinfo {author} {\bibfnamefont {U.}~\bibnamefont {Ekström}},
  \bibinfo {author} {\bibfnamefont {A.~M.}\ \bibnamefont {Teale}}, \ and\
  \bibinfo {author} {\bibfnamefont {T.}~\bibnamefont {Helgaker}},\ }\bibfield
  {title} {\emph {\bibinfo {title} {Differentiable but exact formulation of
  density-functional theory},\ }}\href {\doibase 10.1063/1.4867005} {\bibfield
  {journal} {\bibinfo  {journal} {J. Chem. Phys.}\ }\textbf {\bibinfo {volume}
  {140}},\ \bibinfo {pages} {18A518} (\bibinfo {year} {2014})}\BibitemShut
  {NoStop}%
\bibitem [{\citenamefont {Lammert}(2007)}]{Lammert2007}%
  \BibitemOpen
  \bibfield  {author} {\bibinfo {author} {\bibfnamefont {P.~E.}\ \bibnamefont
  {Lammert}},\ }\bibfield  {title} {\emph {\bibinfo {title} {Differentiability
  of {L}ieb functional in electronic density functional theory},\ }}\href
  {\doibase 10.1002/qua.21342} {\bibfield  {journal} {\bibinfo  {journal} {Int.
  J. Quantum Chem.}\ }\textbf {\bibinfo {volume} {107}},\ \bibinfo {pages}
  {1943} (\bibinfo {year} {2007})}\BibitemShut {NoStop}%
\bibitem [{\citenamefont {Lammert}(2006)}]{lammert2006coarse}%
  \BibitemOpen
  \bibfield  {author} {\bibinfo {author} {\bibfnamefont {P.~E.}\ \bibnamefont
  {Lammert}},\ }\bibfield  {title} {\emph {\bibinfo {title} {Coarse-grained v
  representability},\ }}\href {\doibase 10.1063/1.2336211} {\bibfield
  {journal} {\bibinfo  {journal} {J. Chem. Phys.}\ }\textbf {\bibinfo {volume}
  {125}},\ \bibinfo {pages} {074114} (\bibinfo {year} {2006})}\BibitemShut
  {NoStop}%
\bibitem [{\citenamefont {Lammert}(2010)}]{lammert2010well}%
  \BibitemOpen
  \bibfield  {author} {\bibinfo {author} {\bibfnamefont {P.~E.}\ \bibnamefont
  {Lammert}},\ }\bibfield  {title} {\emph {\bibinfo {title} {Well-behaved
  coarse-grained model of density-functional theory},\ }}\href {\doibase
  10.1103/PhysRevA.82.012109} {\bibfield  {journal} {\bibinfo  {journal} {Phys.
  Rev. A}\ }\textbf {\bibinfo {volume} {82}},\ \bibinfo {pages} {012109}
  (\bibinfo {year} {2010})}\BibitemShut {NoStop}%
\bibitem [{\citenamefont {Rössler}\ \emph {et~al.}(2018)\citenamefont
  {Rössler}, \citenamefont {Verdozzi},\ and\ \citenamefont
  {Almbladh}}]{RoesslerVerdozzi2018}%
  \BibitemOpen
  \bibfield  {author} {\bibinfo {author} {\bibfnamefont {T.}~\bibnamefont
  {Rössler}}, \bibinfo {author} {\bibfnamefont {C.}~\bibnamefont {Verdozzi}},
  \ and\ \bibinfo {author} {\bibfnamefont {C.-O.}\ \bibnamefont {Almbladh}},\
  }\bibfield  {title} {\emph {\bibinfo {title} {A $v_0$-representability issue
  in lattice ensemble-{DFT} and its signature in lattice {TDDFT}},\ }}\href
  {\doibase 10.1140/epjb/e2018-90205-7} {\bibfield  {journal} {\bibinfo
  {journal} {Eur. Phys. J. B}\ }\textbf {\bibinfo {volume} {91}},\ \bibinfo
  {pages} {219} (\bibinfo {year} {2018})}\BibitemShut {NoStop}%
\bibitem [{\citenamefont {Levy}(1982)}]{Levy1982}%
  \BibitemOpen
  \bibfield  {author} {\bibinfo {author} {\bibfnamefont {M.}~\bibnamefont
  {Levy}},\ }\bibfield  {title} {\emph {\bibinfo {title} {Electron densities in
  search of {H}amiltonians},\ }}\href {\doibase 10.1103/PhysRevA.26.1200}
  {\bibfield  {journal} {\bibinfo  {journal} {Phys. Rev. A}\ }\textbf {\bibinfo
  {volume} {26}},\ \bibinfo {pages} {1200} (\bibinfo {year}
  {1982})}\BibitemShut {NoStop}%
\bibitem [{\citenamefont {Sch{\"o}nhammer}\ and\ \citenamefont
  {Gunnarsson}(1987)}]{schonhammer1987discontinuity}%
  \BibitemOpen
  \bibfield  {author} {\bibinfo {author} {\bibfnamefont {K.}~\bibnamefont
  {Sch{\"o}nhammer}}\ and\ \bibinfo {author} {\bibfnamefont {O.}~\bibnamefont
  {Gunnarsson}},\ }\bibfield  {title} {\emph {\bibinfo {title} {Discontinuity
  of the exchange-correlation potential in density functional theory},\ }}\href
  {\doibase 10.1088/0022-3719/20/24/010} {\bibfield  {journal} {\bibinfo
  {journal} {J. Phys. C: Solid State Phys.}\ }\textbf {\bibinfo {volume}
  {20}},\ \bibinfo {pages} {3675} (\bibinfo {year} {1987})}\BibitemShut
  {NoStop}%
\bibitem [{\citenamefont {Ceperley}(1991)}]{ceperley1991}%
  \BibitemOpen
  \bibfield  {author} {\bibinfo {author} {\bibfnamefont {D.~M.}\ \bibnamefont
  {Ceperley}},\ }\bibfield  {title} {\emph {\bibinfo {title} {Fermion nodes},\
  }}\href {\doibase 10.1007/BF01030009} {\bibfield  {journal} {\bibinfo
  {journal} {J. Stat. Phys.}\ }\textbf {\bibinfo {volume} {63}},\ \bibinfo
  {pages} {1237} (\bibinfo {year} {1991})}\BibitemShut {NoStop}%
\bibitem [{\citenamefont {Bajdich}\ \emph {et~al.}(2005)\citenamefont
  {Bajdich}, \citenamefont {Mitas}, \citenamefont {Drobn{\'y}},\ and\
  \citenamefont {Wagner}}]{bajdich2005}%
  \BibitemOpen
  \bibfield  {author} {\bibinfo {author} {\bibfnamefont {M.}~\bibnamefont
  {Bajdich}}, \bibinfo {author} {\bibfnamefont {L.}~\bibnamefont {Mitas}},
  \bibinfo {author} {\bibfnamefont {G.}~\bibnamefont {Drobn{\'y}}}, \ and\
  \bibinfo {author} {\bibfnamefont {L.~K.}\ \bibnamefont {Wagner}},\ }\bibfield
   {title} {\emph {\bibinfo {title} {Approximate and exact nodes of fermionic
  wavefunctions: Coordinate transformations and topologies},\ }}\href {\doibase
  10.1103/PhysRevB.72.075131} {\bibfield  {journal} {\bibinfo  {journal} {Phys.
  Rev. B}\ }\textbf {\bibinfo {volume} {72}},\ \bibinfo {pages} {075131}
  (\bibinfo {year} {2005})}\BibitemShut {NoStop}%
\bibitem [{\citenamefont {Mitas}(2006)}]{mitas2006}%
  \BibitemOpen
  \bibfield  {author} {\bibinfo {author} {\bibfnamefont {L.}~\bibnamefont
  {Mitas}},\ }\bibfield  {title} {\emph {\bibinfo {title} {Structure of fermion
  nodes and nodal cells},\ }}\href {\doibase 10.1103/PhysRevLett.96.240402}
  {\bibfield  {journal} {\bibinfo  {journal} {Phys. Rev. Lett.}\ }\textbf
  {\bibinfo {volume} {96}},\ \bibinfo {pages} {240402} (\bibinfo {year}
  {2006})}\BibitemShut {NoStop}%
\bibitem [{\citenamefont {Capelle}\ and\ \citenamefont
  {Vignale}(2001)}]{capelle2001nonuniqueness}%
  \BibitemOpen
  \bibfield  {author} {\bibinfo {author} {\bibfnamefont {K.}~\bibnamefont
  {Capelle}}\ and\ \bibinfo {author} {\bibfnamefont {G.}~\bibnamefont
  {Vignale}},\ }\bibfield  {title} {\emph {\bibinfo {title} {Nonuniqueness of
  the potentials of spin-density-functional theory},\ }}\href@noop {}
  {\bibfield  {journal} {\bibinfo  {journal} {Phys. Rev. Lett.}\ }\textbf
  {\bibinfo {volume} {86}},\ \bibinfo {pages} {5546} (\bibinfo {year}
  {2001})}\BibitemShut {NoStop}%
\bibitem [{\citenamefont {Ullrich}(2005)}]{ullrich2005nonuniqueness-spin}%
  \BibitemOpen
  \bibfield  {author} {\bibinfo {author} {\bibfnamefont {C.~A.}\ \bibnamefont
  {Ullrich}},\ }\bibfield  {title} {\emph {\bibinfo {title} {Nonuniqueness in
  spin-density-functional theory on lattices},\ }}\href {\doibase
  10.1103/PhysRevB.72.073102} {\bibfield  {journal} {\bibinfo  {journal} {Phys.
  Rev. B}\ }\textbf {\bibinfo {volume} {72}},\ \bibinfo {pages} {073102}
  (\bibinfo {year} {2005})}\BibitemShut {NoStop}%
\bibitem [{\citenamefont {Zaslavsky}(1982)}]{zaslavsky1982}%
  \BibitemOpen
  \bibfield  {author} {\bibinfo {author} {\bibfnamefont {T.}~\bibnamefont
  {Zaslavsky}},\ }\bibfield  {title} {\emph {\bibinfo {title} {Signed graphs},\
  }}\href {\doibase 10.1016/0166-218X(82)90033-6} {\bibfield  {journal}
  {\bibinfo  {journal} {Discrete Appl. Math.}\ }\textbf {\bibinfo {volume}
  {4}},\ \bibinfo {pages} {47} (\bibinfo {year} {1982})}\BibitemShut {NoStop}%
\bibitem [{\citenamefont {Belardo}\ \emph {et~al.}(2019)\citenamefont
  {Belardo}, \citenamefont {Cioab{\u{a}}}, \citenamefont {Koolen},\ and\
  \citenamefont {Wang}}]{belardo2019signed-graphs}%
  \BibitemOpen
  \bibfield  {author} {\bibinfo {author} {\bibfnamefont {F.}~\bibnamefont
  {Belardo}}, \bibinfo {author} {\bibfnamefont {S.~M.}\ \bibnamefont
  {Cioab{\u{a}}}}, \bibinfo {author} {\bibfnamefont {J.~H.}\ \bibnamefont
  {Koolen}}, \ and\ \bibinfo {author} {\bibfnamefont {J.}~\bibnamefont
  {Wang}},\ }\bibfield  {title} {\emph {\bibinfo {title} {Open problems in the
  spectral theory of signed graphs},\ }}\href@noop {} {\bibfield  {journal}
  {\bibinfo  {journal} {arXiv preprint}\ } (\bibinfo {year} {2019})},\ \Eprint
  {http://arxiv.org/abs/1907.04349} {arXiv:1907.04349 [math.CO]} \BibitemShut
  {NoStop}%
\bibitem [{\citenamefont {Li}\ and\ \citenamefont
  {Ullrich}(2008)}]{li2008time}%
  \BibitemOpen
  \bibfield  {author} {\bibinfo {author} {\bibfnamefont {Y.}~\bibnamefont
  {Li}}\ and\ \bibinfo {author} {\bibfnamefont {C.~A.}\ \bibnamefont
  {Ullrich}},\ }\bibfield  {title} {\emph {\bibinfo {title} {Time-dependent
  v-representability on lattice systems},\ }}\href {\doibase 10.1063/1.2955733}
  {\bibfield  {journal} {\bibinfo  {journal} {J. Chem. Phys.}\ }\textbf
  {\bibinfo {volume} {129}},\ \bibinfo {pages} {044105} (\bibinfo {year}
  {2008})}\BibitemShut {NoStop}%
\bibitem [{\citenamefont {Tokatly}(2011)}]{tokatly2011time}%
  \BibitemOpen
  \bibfield  {author} {\bibinfo {author} {\bibfnamefont {I.~V.}\ \bibnamefont
  {Tokatly}},\ }\bibfield  {title} {\emph {\bibinfo {title} {Time-dependent
  current density functional theory on a lattice},\ }}\href {\doibase
  10.1103/PhysRevB.83.035127} {\bibfield  {journal} {\bibinfo  {journal} {Phys.
  Rev. B}\ }\textbf {\bibinfo {volume} {83}},\ \bibinfo {pages} {035127}
  (\bibinfo {year} {2011})}\BibitemShut {NoStop}%
\bibitem [{\citenamefont {Farzanehpour}\ and\ \citenamefont
  {Tokatly}(2012)}]{farzanehpour2012time}%
  \BibitemOpen
  \bibfield  {author} {\bibinfo {author} {\bibfnamefont {M.}~\bibnamefont
  {Farzanehpour}}\ and\ \bibinfo {author} {\bibfnamefont {I.~V.}\ \bibnamefont
  {Tokatly}},\ }\bibfield  {title} {\emph {\bibinfo {title} {Time-dependent
  density functional theory on a lattice},\ }}\href {\doibase
  10.1103/PhysRevB.86.125130} {\bibfield  {journal} {\bibinfo  {journal} {Phys.
  Rev. B}\ }\textbf {\bibinfo {volume} {86}},\ \bibinfo {pages} {125130}
  (\bibinfo {year} {2012})}\BibitemShut {NoStop}%
\bibitem [{\citenamefont {Uimonen}\ \emph {et~al.}(2011)\citenamefont
  {Uimonen}, \citenamefont {Khosravi}, \citenamefont {Stan}, \citenamefont
  {Stefanucci}, \citenamefont {Kurth}, \citenamefont {van Leeuwen},\ and\
  \citenamefont {Gross}}]{uimonen2011}%
  \BibitemOpen
  \bibfield  {author} {\bibinfo {author} {\bibfnamefont {A.-M.}\ \bibnamefont
  {Uimonen}}, \bibinfo {author} {\bibfnamefont {E.}~\bibnamefont {Khosravi}},
  \bibinfo {author} {\bibfnamefont {A.}~\bibnamefont {Stan}}, \bibinfo {author}
  {\bibfnamefont {G.}~\bibnamefont {Stefanucci}}, \bibinfo {author}
  {\bibfnamefont {S.}~\bibnamefont {Kurth}}, \bibinfo {author} {\bibfnamefont
  {R.}~\bibnamefont {van Leeuwen}}, \ and\ \bibinfo {author} {\bibfnamefont
  {E.~K.~U.}\ \bibnamefont {Gross}},\ }\bibfield  {title} {\emph {\bibinfo
  {title} {Comparative study of many-body perturbation theory and
  time-dependent density functional theory in the out-of-equilibrium {A}nderson
  model},\ }}\href {\doibase 10.1103/PhysRevB.84.115103} {\bibfield  {journal}
  {\bibinfo  {journal} {Phys. Rev. B}\ }\textbf {\bibinfo {volume} {84}},\
  \bibinfo {pages} {115103} (\bibinfo {year} {2011})}\BibitemShut {NoStop}%
\bibitem [{\citenamefont {Stefanucci}\ and\ \citenamefont
  {Kurth}(2015)}]{stefanucci2015}%
  \BibitemOpen
  \bibfield  {author} {\bibinfo {author} {\bibfnamefont {G.}~\bibnamefont
  {Stefanucci}}\ and\ \bibinfo {author} {\bibfnamefont {S.}~\bibnamefont
  {Kurth}},\ }\bibfield  {title} {\emph {\bibinfo {title} {Steady-state density
  functional theory for finite bias conductances},\ }}\href {\doibase
  10.1021/acs.nanolett.5b03294} {\bibfield  {journal} {\bibinfo  {journal}
  {Nano Lett.}\ }\textbf {\bibinfo {volume} {15}},\ \bibinfo {pages} {8020}
  (\bibinfo {year} {2015})}\BibitemShut {NoStop}%
\bibitem [{\citenamefont {Jacob}\ \emph {et~al.}(2020)\citenamefont {Jacob},
  \citenamefont {Stefanucci},\ and\ \citenamefont {Kurth}}]{jacob2020}%
  \BibitemOpen
  \bibfield  {author} {\bibinfo {author} {\bibfnamefont {D.}~\bibnamefont
  {Jacob}}, \bibinfo {author} {\bibfnamefont {G.}~\bibnamefont {Stefanucci}}, \
  and\ \bibinfo {author} {\bibfnamefont {S.}~\bibnamefont {Kurth}},\ }\bibfield
   {title} {\emph {\bibinfo {title} {Mott metal-insulator transition from
  steady-state density functional theory},\ }}\href {\doibase
  10.1103/PhysRevLett.125.216401} {\bibfield  {journal} {\bibinfo  {journal}
  {Phys. Rev. Lett.}\ }\textbf {\bibinfo {volume} {125}},\ \bibinfo {pages}
  {216401} (\bibinfo {year} {2020})}\BibitemShut {NoStop}%
\bibitem [{\citenamefont {Karlsson}\ \emph {et~al.}(2011)\citenamefont
  {Karlsson}, \citenamefont {Privitera},\ and\ \citenamefont
  {Verdozzi}}]{karlsson2011}%
  \BibitemOpen
  \bibfield  {author} {\bibinfo {author} {\bibfnamefont {D.}~\bibnamefont
  {Karlsson}}, \bibinfo {author} {\bibfnamefont {A.}~\bibnamefont {Privitera}},
  \ and\ \bibinfo {author} {\bibfnamefont {C.}~\bibnamefont {Verdozzi}},\
  }\bibfield  {title} {\emph {\bibinfo {title} {Time-dependent
  density-functional theory meets dynamical mean-field theory: Real-time
  dynamics for the 3{D} {H}ubbard model},\ }}\href {\doibase
  10.1103/PhysRevLett.106.116401} {\bibfield  {journal} {\bibinfo  {journal}
  {Phys. Rev. Lett.}\ }\textbf {\bibinfo {volume} {106}},\ \bibinfo {pages}
  {116401} (\bibinfo {year} {2011})}\BibitemShut {NoStop}%
\bibitem [{\citenamefont {Kartsev}\ \emph {et~al.}(2013)\citenamefont
  {Kartsev}, \citenamefont {Karlsson}, \citenamefont {Privitera},\ and\
  \citenamefont {Verdozzi}}]{kartsev2013}%
  \BibitemOpen
  \bibfield  {author} {\bibinfo {author} {\bibfnamefont {A.}~\bibnamefont
  {Kartsev}}, \bibinfo {author} {\bibfnamefont {D.}~\bibnamefont {Karlsson}},
  \bibinfo {author} {\bibfnamefont {A.}~\bibnamefont {Privitera}}, \ and\
  \bibinfo {author} {\bibfnamefont {C.}~\bibnamefont {Verdozzi}},\ }\bibfield
  {title} {\emph {\bibinfo {title} {Three-dimensional dynamics of a fermionic
  {M}ott wedding-cake in clean and disordered optical lattices},\ }}\href
  {\doibase 10.1038/srep02570} {\bibfield  {journal} {\bibinfo  {journal} {Sci.
  Rep.}\ }\textbf {\bibinfo {volume} {3}},\ \bibinfo {pages} {2570} (\bibinfo
  {year} {2013})}\BibitemShut {NoStop}%
\end{thebibliography}
\end{document}